\DeclareMathOperator*{\cov}{cov}
\DeclareMathOperator*{\supp}{supp}
\newcommand{\R}{\ensuremath{\mathbb{R}}}
\newcommand{\Exp}{\ensuremath{\mathbb{E}}} 
\newcommand{\Prob}{\ensuremath{\mathbb{P}}} 
\newcommand{\indicator}{\ensuremath{\mathbbm{1}}}
\DeclareMathAlphabet{\mathpzc}{OT1}{pzc}{m}{it}
\newcommand{\normal}{\ensuremath{\mathcal{N}}}
\newcommand*{\indep}{%
  \mathbin{%
    \mathpalette{\@indep}{}%
  }%
}
\newcommand*{\nindep}{%
  \mathbin{
    \mathpalette{\@indep}{\not}
  }%
}
\newcommand*{\@indep}[2]{%
  \sbox0{$#1\perp\m@th$}
  \sbox2{$#1=$}
  \sbox4{$#1\vcenter{}$}
  \rlap{\copy0}
  \dimen@=\dimexpr\ht2-\ht4-.2pt\relax
  \kern\dimen@
  {#2}%
  \kern\dimen@
  \copy0 
} 
\theoremstyle{definition}
\newtheorem{proposition}{Proposition}
\newtheorem{lemma}{Lemma}
\newtheorem{definition}{Definition}
\newtheorem{theorem}{Theorem}
\newtheorem{corollary}{Corollary}
\newcounter{partialIndepSection}
\newcommand{\aAssump}{A\arabic{partialIndepSection}}
\newcounter{misspecificationSection}
\newcommand{\dAssump}{B\arabic{misspecificationSection}}
\newtheoremstyle{theoremSuppressedNumber}{}{}{}{}{\bfseries}{.}{ }{\thmname{#1}\thmnote{ (\mdseries #3)}}
\theoremstyle{theoremSuppressedNumber}
\newtheorem{partialIndepAssump}{Assumption \aAssump \addtocounter{partialIndepSection}{1}}
\newtheorem{misspecAssump}{Assumption \dAssump \addtocounter{misspecificationSection}{1}}
\title{\textbf{Choosing Exogeneity Assumptions \\ in Potential Outcome Models}\footnote{This paper supersedes our previous paper titled ``Interpreting Quantile Independence'' (arXiv:1804.10957), as well as the unpublished sections 3.2, 3.4, and 4.1 of our inactive working paper \cite{MastenPoirier2016}. This paper was presented at the University of Wisconsin--Madison, Northwestern University, the 2017 Triangle Econometrics Conference, the 2018 North American Winter Meetings of the Econometric Society, the 2018 Vanderbilt Conference on Identification in Econometrics, and the 2018 ``Inference in Nonstandard Problems'' CEME Conference. We thank audiences at those seminars and conferences, as well as Federico Bugni, Ivan Canay, Andrew Chesher, Joachim Freyberger, Bo Honor\'e, Joel Horowitz, Shakeeb Khan, Ivana Komunjer, Roger Koenker, Chuck Manski, Francesca Molinari, Jim Powell, and Adam Rosen, for helpful conversations and comments. We thank Hongchang Guo for excellent research assistance. Masten thanks the National Science Foundation for research support under Grant No.\ 1943138.}}
\author{Matthew A. Masten\footnote{Department of Economics, Duke University,
        \texttt{matt.masten@duke.edu}} \qquad Alexandre Poirier\thanks{
    Department of Economics, Georgetown University,
    \texttt{alexandre.poirier@georgetown.edu}}
}
\date{May 4, 2022}
\begin{document}
\maketitle
\begin{abstract}
There are many kinds of exogeneity assumptions. How should researchers choose among them? When exogeneity is imposed on an unobservable like a potential outcome, we argue that the form of exogeneity should be chosen based on the kind of selection on unobservables it allows. Consequently, researchers can assess the plausibility of any exogeneity assumption by studying the distributions of treatment given the unobservables that are consistent with that assumption. We use this approach to study two common exogeneity assumptions: quantile and mean independence. We show that both assumptions require a kind of non-monotonic relationship between treatment and the potential outcomes. We discuss how to assess the plausibility of this kind of treatment selection. We also show how to define a new and weaker version of quantile independence that allows for monotonic treatment selection. We then show the implications of the choice of exogeneity assumption for identification. We apply these results in an empirical illustration of the effect of child soldiering on wages.
\end{abstract}

\bigskip
\small
\noindent \textbf{JEL classification:}
C14; C18; C21; C25; C51

\bigskip
\noindent \textbf{Keywords:}
Selection on Unobservables, Nonparametric Identification, Treatment Effects, Partial Identification, Sensitivity Analysis

\onehalfspacing
\normalsize

\newpage
\section{Introduction}\label{sec:intro}

Exogeneity is a critical assumption in much structural or causal empirical work. In general, exogeneity refers to assumptions on the statistical dependence between an observable term and an unobservable term.\footnote{Throughout this paper we use `exogeneity' in the same sense as the treatment effects literature; for example, see \cite{Imbens2004}. This is related to but distinct from the Cowles Commission definition of an exogenous variable as a variable that is determined outside of the model under consideration. See the discussion in \citet[pages 74--75]{HendryMorgan1995}, \citet[page 93]{Imbens1997}, and \citet[footnote 11]{Heckman2000}.} There are many such assumptions, however, including zero correlation, median independence, and full statistical independence.  The choice of a formal definition of exogeneity is not innocuous. Different definitions have different substantive interpretations and different implications for identification, rates of convergence, asymptotic distributions, efficiency bounds, and overidentification.

In the context of potential outcome models, there has been debate over the appropriate choice of exogeneity assumption. \cite{HeckmanIchimuraTodd1998} assume potential outcomes are mean independent of treatment, conditional on covariates. They justify this focus on mean independence by arguing that ``conditional independence assumptions...are far stronger than the mean-independence conditions typically invoked by economists'' (page 262). \citet[page 8]{Imbens2004} agrees that ``this [mean independence] assumption is unquestionably weaker [than full independence]'', but argues that ``in practice it is rare that a convincing case is made for the weaker [mean independence] assumption 2.3 without the case being equally strong for the stronger [full independence assumption].'' The justification he provides is that ``the weaker assumption is intrinsically tied to functional-form assumptions, and as a result one cannot identify average effects on transformations of the original outcome (such as logarithms) without the stronger assumption.'' 

In this paper, we contribute to this debate as follows: We recommend that researchers focus directly on the substantive economic interpretation of the assumption, and the plausibility of its restrictions, rather than assess assumptions based on what they can be used to identify (as in the functional form dependency critique of mean independence) or on mathematical orderings of what implies what (as in the observation that statistical independence implies mean independence, but not vice versa). Specifically, we focus on two of the most common forms of exogeneity assumptions used, quantile independence and mean independence. We provide several results to help researchers assess the plausibility of these exogeneity assumptions. First, in section \ref{sec:whyRelaxIndependence}, we provide a brief informal discussion of the motivations for making exogeneity assumptions. There we note that exogeneity assumptions are typically made on structural unobservables, variables that satisfy some kind of policy or treatment invariance property, like potential outcomes or unobserved ability. In this case, the form of exogeneity depends on the form of treatment selection. Consequently, the plausibility of any exogeneity assumption can be assessed by examining the distributions of treatment given the unobservables that are consistent with that assumption.

Next, in section \ref{sec:characterizationSection}, we characterize these distributions of treatment given the unobservables that are consistent with either quantile independence or mean independence. This characterization shows that both quantile independence and mean independence require specific kinds of non-monotonic treatment selection. In section \ref{sec:weakeningQI} we show how to modify the quantile independence assumption to create a new, weaker exogeneity assumption which allows for more plausible forms of treatment selection, including monotonic treatment selection. We call this assumption \emph{$\mathcal{U}$-independence}. In section \ref{sec:treatmentEffectBounds} we derive identified sets for the average effect of treatment for the treated (ATT) and the quantile treatment effect for the treated (QTT) parameters under both quantile independence and $\mathcal{U}$-independence. These identified sets have a simple, closed form characterization, which makes them easy to use in practice. By comparing these identified sets we show that the identifying power of quantile independence comes from the fact that it only allows for a restrictive kind of non-monotonic selection on unobservables. In section \ref{sec:empirical} we examine these differences in an empirical illustration of the effects of child soldiering on wages based on unconfoundedness. We show that the baseline results are generally robust under quantile independence relaxations of unconfoundedness, but not under $\mathcal{U}$-independence relaxations. This difference highlights both the practical importance of choosing exogeneity assumptions and how our approach can help researchers make this choice. Finally, in section \ref{sec:LatentSelectionModels} we use a Roy model to further illustrate how researchers can assess the plausibility of non-monotonic selection on unobservables.

\section{Choosing Exogeneity Assumptions}\label{sec:whyRelaxIndependence}

There are many different kinds of exogeneity assumptions available to researchers. \cite{Manski1988} and \cite{Powell1994} catalog some of the most common forms, including zero correlation, mean independence, quantile independence, conditional symmetry, statistical independence, and a variety of index conditions. Other kinds of exogeneity assumptions have since been defined, including mean monotonicity (e.g., \citealt{ManskiPepper2000,ManskiPepper2009}, chapter 2 of \citealt{Manski2003}), approximate mean independence (\citealt{Manski2003}, section 9.4), stochastic dominance assumptions (e.g., \citealt{BlundellEtAl2007}), and quantile uncorrelation (\citealt{KomarovaSeveriniTamer2012}), among many others. How should researchers choose among these many options? In this section we discuss one approach to answering this question.

The answer depends on whether the exogeneity assumption is made on a structural unobservable or a reduced form unobservable. This distinction goes back to the earliest work on simultaneous equation models in econometrics (see \citealt{Hausman1983} for a survey) but has been used in recent work as well (e.g. \citealt{BlundellMatzkin2014}). Structural unobservables are variables that satisfy some kind of policy or treatment invariance property, like potential outcomes, unobserved ability, or preferences. Reduced form unobservables are functions of the structural unobservables, and possibly other variables in the model, like realized treatment. Since quantile independence is often imposed on the relationship between treatment variables and structural unobservables, we focus on that case. We briefly discuss exogeneity assumptions for reduced form unobservables in appendix \ref{sec:twoKindsOfUnobs}, along with some additional background and examples.

Let $X$ denote the observed, realized treatment, and let $Y_x$ be a potential outcome where $x$ is a logically possible value of treatment.  Since, by definition, potential outcomes have a meaning and interpretation that does not depend on the realized treatment $X$, any stochastic dependence between them must reflect some form of treatment selection on $Y_x$. That is, it must reflect some form of selection on unobservables, which is described by the distribution of $X \mid Y_x$. Many exogeneity assumptions, such as quantile independence and mean independence, are defined as constraints on the distribution of $Y_x \mid X$. Consequently, to assess the plausibility of these assumptions, we recommend examining the set of distributions of $X \mid Y_x$ that are consistent with the given constraint on the distribution of $Y_x \mid X$.  This allows researchers to use the large literature on treatment selection to assess the plausibility of various exogeneity assumptions. Note that this discussion and recommendation extend to more general structural or causal models where one is considering exogeneity assumptions between a structural unobservable $U$ and an observed variable $X$; $U = Y_x$ is the specific case we focus on here.

\subsubsection*{Descriptive Analysis and Causal Models}\label{sec:descriptiveAnalysis}

Before proceeding, it is important to emphasize the scope of the process we just described: We are interested in assumptions about the dependence structure between observable and unobservable variables in causal models. This does \emph{not} include research whose end goal is a description of the joint distribution of observed random variables, and which does not aim to make causal statements. Such descriptive research studies the relationship between observed variables. For example, suppose $Y$ is an observed outcome and $X$ is an observed covariate. We might define $E$ to be the residual from a linear projection of $Y$ onto $(1,X)$. We can then ask about the statistical relationship between $E$ and $X$: It satisfies zero correlation by construction, but not necessarily other restrictions like mean independence or statistical independence. In this case, however, the joint distribution of $(E,X)$ is always point identified. Hence, at the population level, the precise relationship between these variables is always known. Consequently, there is no need to make or choose assumptions about the stochastic relationship between $E$ and $X$. In contrast, in causal models, exogeneity assumptions typically have substantial identifying power for causal effects or structural parameters.

\section{Characterizing Exogeneity Assumptions}\label{sec:characterizationSection}

In this section, we present our main characterization results. We provide results for two of the most common exogeneity assumptions: quantile independence and mean independence. We focus on binary treatments throughout the paper; we generalize our results to multi-valued discrete and continuous treatments in appendix \ref{sec:multivalTreat}. All of our results also hold if one conditions on an additional vector of observed covariates, as is typically the case in empirical applications, but we omit these for simplicity.

\subsection{The Potential Outcomes Model}

Let $X \in \{0,1\}$ be a binary treatment variable. Let $(Y_1,Y_0)$ denote unobserved potential outcomes. We observe the scalar outcome variable
\begin{equation}\label{eq:potential outcomes}
	Y = X Y_1 + (1-X) Y_0.
\end{equation}
Let $p_x = \Prob(X=x)$ for $x \in \{0,1\}$. We impose the following assumption on the joint distribution of $(Y_1,Y_0,X)$.

\begin{partialIndepAssump}\label{assn:continuity}
For each $x,x' \in \{0,1\}$:
\begin{enumerate}
\item \label{A1_1} $Y_x \mid X=x'$ has a strictly increasing and  continuous distribution function on its support, $\supp(Y_x \mid X=x')$.

\item \label{A1_3} $\supp(Y_x \mid X=x') = \supp(Y_x) = [\underline{y}_x,\overline{y}_x]$ where $-\infty \leq \underline{y}_x < \overline{y}_x \leq \infty$.

\item \label{A1_4} $p_x > 0$.
\end{enumerate}
\end{partialIndepAssump}

Via A\ref{assn:continuity}.\ref{A1_1}, we restrict attention to continuously distributed potential outcomes. A\ref{assn:continuity}.\ref{A1_3} states that the unconditional and conditional supports of $Y_x$ are equal, and are a possibly infinite closed interval. We maintain A\ref{assn:continuity}.\ref{A1_3} for simplicity, but it can be relaxed using similar derivations as in \cite{MastenPoirier2016}. A\ref{assn:continuity}.\ref{A1_4} is an overlap assumption.

In potential outcome models, statistical independence between potential outcomes and treatment is sometimes assumed:
\begin{equation}\label{eq:IndepDefinition}
	Y_x \indep X.
\end{equation}
This assumption, equivalent to random assignment of treatment, can be made for $x =0$, $x=1$, or both. When this assumption is made conditional on covariates, it is often called \emph{unconfoundedness}. As discussed in section \ref{sec:whyRelaxIndependence}, however, there are many other kinds of exogeneity assumptions available in the literature, which are all weaker than full statistical independence. The purpose of this paper is to help researchers choose between these different kinds of exogeneity assumptions. To that end, in the next two subsections we provide characterization results that can help researchers assess the plausibility of these exogeneity assumptions. We then study the identifying power of these different assumptions in section \ref{sec:treatmentEffects}.

\subsection{A Class of Quantile Independence Assumptions}

Quantile independence of the potential outcome $Y_x$ from $X$ at quantile $\tau$ holds if
\begin{equation}\label{eq:quantileBasedDefinition}
	Q_{Y_x \mid X}(\tau \mid 0) = Q_{Y_x \mid X}(\tau \mid 1).
\end{equation}
This assumption is often imposed at a single quantile. For example, imposing \eqref{eq:quantileBasedDefinition} at $\tau = 0.5$ yields median independence. If this holds for all $\tau \in (0,1)$, then $Y_x$ and $X$ are statistically independent. Therefore quantile independence is a relaxation of independence.

It is often more natural to work with cdfs, an inverse of the quantile function.\footnote{For example, see assumption QI on page 731 of \cite{Manski1988} or equation (1.7) on page 2452 of \cite{Powell1994}. Definitions using cdfs and those using quantiles directly (e.g., via equation \eqref{eq:quantileBasedDefinition}) are often equivalent. Throughout this paper we use ``quantile independence'' to mean the cdf-based definition, as is common in the literature.} Say $Y_x$ is $\tau$-cdf independent of $X$ if
\begin{equation}\label{cdfIndependence}
	F_{Y_x \mid X}(\tau \mid 0) = F_{Y_x \mid X}(\tau \mid 1).
\end{equation}
Note that $Y_x$ is quantile independent of $X$ at quantile $\tau$ if and only if $Y_x$ is $Q_{Y_x}(\tau)$-cdf independent of $X$ for continuously distributed $Y_x$. This motivates the following definition.\footnote{See \cite{BelloniChenChernozhukov2017} and \cite{ZhuZhangXu2017} for similar generalizations of quantile independence.}

\begin{definition}
Let $\mathcal{T}$ be a subset of $\R$. Say $Y_x$ is \emph{$\mathcal{T}$-independent} of $X$ if  the cdf independence condition \eqref{cdfIndependence} holds for all $\tau \in \mathcal{T}$.
\end{definition}

With binary treatments, the dependence structure between $X$ and the potential outcome $Y_x$ is fully characterized by the function
\[
	p(y_x) = \Prob(X=1 \mid Y_x=y_x),
\]
which we call the latent propensity score. Full statistical independence of $Y_x$ and $X$, or $\mathcal{T}$-independence with $\mathcal{T} = [\underline{y}_x,\overline{y}_x]$, is equivalent to this latent propensity score being constant:
\[
	p(y_x) = \Prob(X=1)
\]
for almost all $y_x \in \supp(Y_x)$. Analogously, $\mathcal{T}$-independence for $\mathcal{T} \subsetneq [\underline{y}_x,\overline{y}_x]$ is weaker than full independence, and thus partially restricts the shape of $p(y_x)$. The following theorem characterizes the set of latent propensity scores consistent with $\mathcal{T}$-independence.

\begin{theorem}[Average value characterization]\label{thm:AvgValueCharacterization}
Suppose $X$ is binary and A\ref{assn:continuity}.1 holds. Then $Y_x$ is $\mathcal{T}$-independent of $X$ if and only if
\begin{equation}\label{eq:averageValueCondition_main}
	\Exp \big( p(Y_x) \mid Y_x \in (t_1,t_2) \big) = \Prob(X=1)
\end{equation}
for all $t_1, t_2 \in \mathcal{T} \cup \{ \underline{y}_x,\overline{y}_x \}$ with $t_1 < t_2$.
\end{theorem}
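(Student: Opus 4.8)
The plan is to translate both conditions into a single family of equalities for the sub-distribution function $G(t) := \Prob(Y_x \le t,\, X = 1)$ and the marginal cdf $F_{Y_x}$, after which the equivalence reduces to the elementary fact that a function whose difference quotients against $F_{Y_x}$ all equal $\Prob(X=1)$ is the affine multiple $\Prob(X=1)\, F_{Y_x}$.

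First I would record three facts. (i) By iterated expectations $G(t) = \Exp[\indicator\{Y_x \le t\}\, p(Y_x)]$ and $\Prob(X=1)\, F_{Y_x \mid X}(t \mid 1) = G(t)$; combining the latter with the mixture identity $\Prob(X=1)\, F_{Y_x\mid X}(t\mid 1) + \Prob(X=0)\, F_{Y_x\mid X}(t\mid 0) = F_{Y_x}(t)$ shows that the cdf-independence condition \eqref{cdfIndependence} at a point $t$ is equivalent to $F_{Y_x\mid X}(t\mid 1) = F_{Y_x}(t)$, i.e.\ to $G(t) = \Prob(X=1)\, F_{Y_x}(t)$. (ii) Since A\ref{assn:continuity}.1 makes $Y_x$ atom-free, $G$ is continuous and $\Prob(Y_x \in (t_1,t_2)) = F_{Y_x}(t_2) - F_{Y_x}(t_1)$, so for $t_1 < t_2$ in $[\underline y_x, \overline y_x]$, where the denominator is positive by strict monotonicity,
\[
	\Exp\big( p(Y_x) \mid Y_x \in (t_1,t_2) \big) = \frac{G(t_2) - G(t_1)}{F_{Y_x}(t_2) - F_{Y_x}(t_1)} .
\]
(iii) At the endpoints $G(\underline y_x) = 0 = \Prob(X=1)\, F_{Y_x}(\underline y_x)$ and $G(\overline y_x) = \Prob(X=1) = \Prob(X=1)\, F_{Y_x}(\overline y_x)$ hold automatically (reading both functions as limits when an endpoint is infinite), so appending $\underline y_x$ and $\overline y_x$ to $\mathcal{T}$ changes nothing. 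Combining (i) and (iii): $Y_x$ is $\mathcal{T}$-independent of $X$ if and only if $G(t) = \Prob(X=1)\, F_{Y_x}(t)$ for every $t \in \mathcal{T} \cup \{\underline y_x, \overline y_x\}$.

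Both implications are then immediate. For the ``only if'' direction, substituting $G = \Prob(X=1)\, F_{Y_x}$ into the displayed ratio for any admissible pair $t_1 < t_2$ collapses it to $\Prob(X=1)$, which is \eqref{eq:averageValueCondition_main}. For the ``if'' direction, apply \eqref{eq:averageValueCondition_main} with $(t_1,t_2) = (\underline y_x, t)$ for an arbitrary $t \in \mathcal{T}$ with $t > \underline y_x$; since $G(\underline y_x) = F_{Y_x}(\underline y_x) = 0$, the ratio condition becomes $G(t) = \Prob(X=1)\, F_{Y_x}(t)$, which by (i) is cdf independence at $t$ (the case $t = \underline y_x$ being trivial). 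I do not expect a real obstacle: the only points that need care are the atom-free passage between open and half-open intervals in (ii), the bookkeeping at the possibly infinite endpoints, and the tacit restriction $\mathcal{T} \subseteq [\underline y_x, \overline y_x]$, together with $\Prob(X=1) \in (0,1)$, needed so that the conditioning events and the conditional cdfs are well defined.
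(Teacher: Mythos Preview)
Your proof is correct and follows essentially the same route as the paper's. The paper rewrites $\Exp(p(Y_x)\mid Y_x\in(t_1,t_2))$ as $\Prob(X=1\mid Y_x\in(t_1,t_2))$ via iterated expectations and then applies Bayes' rule; your sub-distribution function $G(t)=\Prob(Y_x\le t,\,X=1)=\Prob(X=1)\,F_{Y_x\mid X}(t\mid 1)$ is exactly the same object repackaged, and both arguments finish the converse by specializing to $t_1=\underline{y}_x$.
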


The proof, along with all others, is in appendix \ref{sec:proofs}. Theorem \ref{thm:AvgValueCharacterization} says that $\mathcal{T}$-independence holds if and only if for every interval with endpoints in $\mathcal{T} \cup \{ \underline{y}_x,\overline{y}_x \}$ the average latent propensity score over $Y_x \in (t_1,t_2)$ equals the overall average of the latent propensity score, which is $\Prob(X=1)$. Also note that $\Exp(p(Y_x) \mid Y_x \in (t_1,t_2)) = \Prob(X=1 \mid Y_x \in (t_1,t_2))$.

To illustrate theorem \ref{thm:AvgValueCharacterization}, suppose $\mathcal{T} = \{ 0.5 \}$ and $\Prob(X=1) = 0.5$. Further suppose that $Y_x$ is uniformly distributed on $[0,1]$ to simplify the figures. Here we have just a single nontrivial cdf independence condition: median independence. Figure \ref{propensityScoresFig1} plots three different latent propensity scores which are consistent with $\mathcal{T}$-independence under this choice of $\mathcal{T}$; that is, which are consistent with median independence. This figure illustrates several features of such latent propensity scores: The value of $p(y_x)$ may vary over the entire range $[0,1]$. $p$ does not need to be symmetric about $y_x = 0.5$, nor does it need to be continuous. It does need to satisfy equation \eqref{eq:averageValueCondition_main} over the intervals $(t_1,t_2) = (0,0.5)$ and $(t_1,t_2) = (0.5,1)$. Finally, as suggested by the pictures, $p$ must actually be nonmonotonic; we show this in corollary \ref{corr:monotonicPropensityScores} next.

\begin{figure}[t]
\centering
\includegraphics[width=52mm]{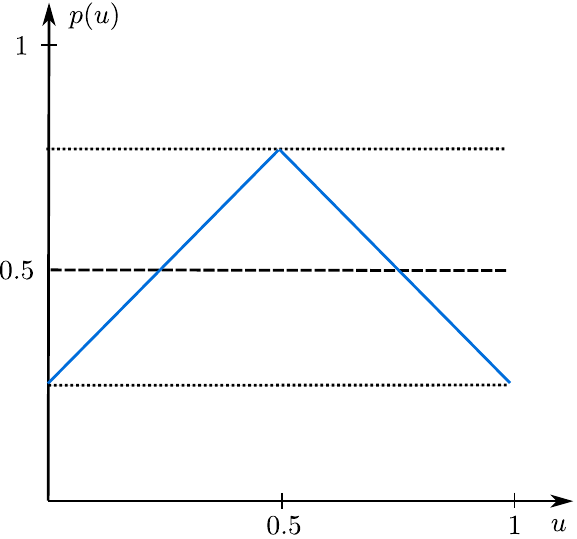}
\hspace{0.2mm}
\includegraphics[width=52mm]{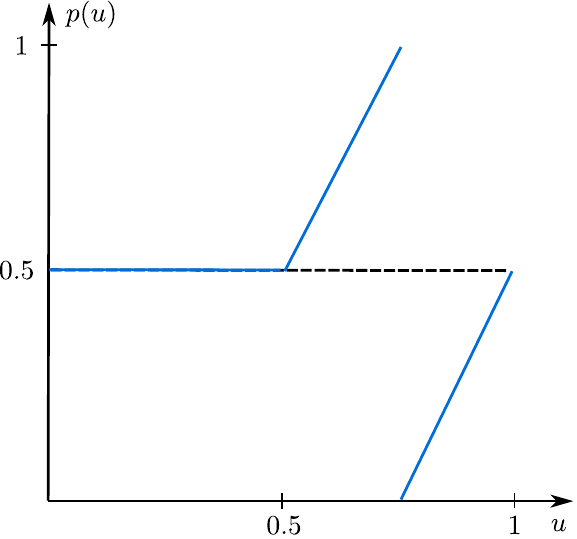}
\hspace{0.2mm}
\includegraphics[width=52mm]{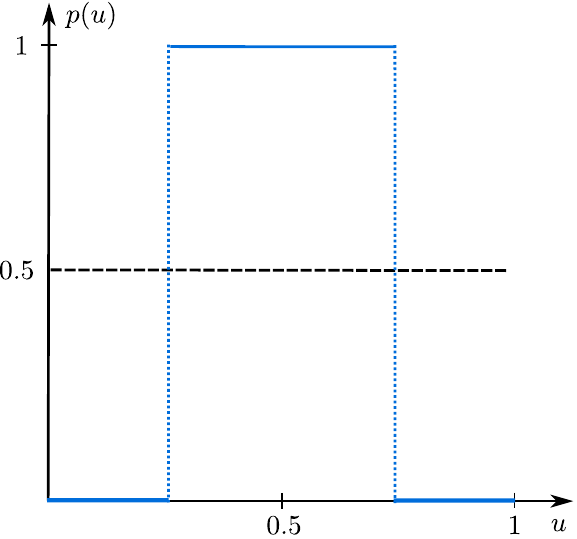}
\caption{Various latent propensity scores consistent with $\mathcal{T} = \{ 0.5 \}$-independence, when $\Prob(X=1) = 0.5$.}
\label{propensityScoresFig1}
\end{figure}

\begin{corollary}\label{corr:monotonicPropensityScores}
Suppose $X$ is binary and A\ref{assn:continuity}.1 holds. Suppose the latent propensity score $p$ is weakly monotonic and not constant on $(\underline{y}_x,\overline{y}_x)$. Then, for all $\tau \in (\underline{y}_x,\overline{y}_x)$, $Y_x$ is not $\tau$-cdf independent of $X$.
\end{corollary}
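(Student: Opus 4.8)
The plan is to argue by contradiction, feeding the single cdf-independence condition into Theorem~\ref{thm:AvgValueCharacterization}. First I would note that $Y_x$ being $\tau$-cdf independent of $X$ is exactly $\mathcal{T}$-independence with $\mathcal{T}=\{\tau\}$. So fix $\tau\in(\underline{y}_x,\overline{y}_x)$, suppose for contradiction that this holds, and apply Theorem~\ref{thm:AvgValueCharacterization} with $\mathcal{T}\cup\{\underline{y}_x,\overline{y}_x\}=\{\underline{y}_x,\tau,\overline{y}_x\}$. Writing $\mu=\Prob(X=1)$, the admissible pairs $(\underline{y}_x,\tau)$ and $(\tau,\overline{y}_x)$ give
\[
\Exp\big(p(Y_x)\mid Y_x\in(\underline{y}_x,\tau)\big)=\mu=\Exp\big(p(Y_x)\mid Y_x\in(\tau,\overline{y}_x)\big),
\]
and both conditioning events have positive probability because $F_{Y_x}$ is continuous and strictly increasing on $[\underline{y}_x,\overline{y}_x]$ by A\ref{assn:continuity}.\ref{A1_1}.

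Next I would exploit monotonicity of $p$; without loss of generality $p$ is weakly increasing (otherwise apply the argument to $-Y_x$). Put $c=\sup_{y<\tau}p(y)$. Monotonicity gives $p(y)\le c$ for all $y\in(\underline{y}_x,\tau)$ and $p(y)\ge c$ for all $y\in(\tau,\overline{y}_x)$, so the two displayed means sandwich $c$: $\mu\le c\le\mu$, hence both equal $c$. Over $(\underline{y}_x,\tau)$ the integrand $p(Y_x)$ is bounded above by $c$ and has conditional mean $c$, so $p(Y_x)=c$ almost surely on $\{Y_x<\tau\}$; symmetrically $p(Y_x)=c$ almost surely on $\{Y_x>\tau\}$. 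Since $F_{Y_x}$ is strictly increasing, every nondegenerate subinterval of $(\underline{y}_x,\overline{y}_x)$ carries positive mass, so these a.s.\ statements force $p(y)=c$ for \emph{every} $y\in(\underline{y}_x,\overline{y}_x)\setminus\{\tau\}$: if $p$ differed from $c$ at a single such point it would, by monotonicity, differ from $c$ on an entire nondegenerate subinterval, a contradiction. Finally $p(\tau^-)=c\le p(\tau)\le p(\tau^+)=c$ pins down $p(\tau)=c$, so $p$ is constant on $(\underline{y}_x,\overline{y}_x)$ — contradicting the hypothesis.

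The routine parts are the bookkeeping with Theorem~\ref{thm:AvgValueCharacterization} and the elementary observation that a nonnegative random variable with zero mean vanishes a.s. The one step needing a little care is the passage from ``$p(Y_x)=c$ a.s.'' to ``$p(y)=c$ for all $y$'', which is exactly where strict monotonicity of $F_{Y_x}$ — ensuring positive mass on every subinterval — is essential; without it one obtains constancy only up to an $F_{Y_x}$-null set, which still contradicts $\tau$-cdf independence but is cleaner stated as genuine constancy. I would also add a one-line remark that $\tau\in(\underline{y}_x,\overline{y}_x)$ cannot be dropped: at an endpoint one of the two conditioning intervals is empty, so Theorem~\ref{thm:AvgValueCharacterization} imposes no constraint there.
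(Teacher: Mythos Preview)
Your proof is correct and follows essentially the same route as the paper: apply Theorem~\ref{thm:AvgValueCharacterization} with $\mathcal{T}=\{\tau\}$, use monotonicity of $p$ to sandwich a threshold value between the two conditional means, and conclude that equality forces $p$ to be constant. The only cosmetic differences are that the paper uses $p(\tau)$ directly as the threshold rather than $c=\sup_{y<\tau}p(y)$, and frames the argument as showing strict inequality of the two conditional means rather than as a contradiction; your version is slightly more explicit about upgrading the a.s.\ constancy to pointwise constancy via strict monotonicity of $F_{Y_x}$.
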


Corollary \ref{corr:monotonicPropensityScores} shows that any quantile independence assumption rules out all types of monotonic selection, except for the trivially monotonic constant $p(y_x) = \Prob(X=1)$ implied by full statistical independence.

Next we show that imposing multiple quantile independence conditions imposes further non-monotonicity. Say that a function $f$ \emph{changes direction at least $K$ times} if there exists a partition of its domain into $K$ intervals such that $f$ is not monotonic on each interval.

\begin{corollary}\label{corr:SignChanges}
Suppose $X$ is binary and A\ref{assn:continuity}.1 holds. Suppose $Y_x$ is $\mathcal{T}$-independent of $X$. Suppose there exists a version of $p$ without removable discontinuities. Partition $(\underline{y}_x,\overline{y}_x)$ by the sets $\mathcal{Y}_1 = (t_0,t_1)$, $\mathcal{Y}_k = [t_{k-1},t_k)$ for $k=2,\ldots,K$ with $t_0 = \underline{y}_x$, $t_K = \overline{y}_x$,  and such that for each $k$ there is a $\tau_k \in \mathcal{T} \cap \mathcal{Y}_k$. Suppose $p$ is not constant over each set $\mathcal{Y}_k$, $k=1,\ldots,K$. Then $p$ changes direction at least $K$ times.
\end{corollary}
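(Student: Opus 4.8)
The plan is to push the statement through Theorem~\ref{thm:AvgValueCharacterization} onto averages of the latent propensity score, isolate the one–dimensional ``averaging forces constancy'' fact that already drives Corollary~\ref{corr:monotonicPropensityScores}, and then close the argument by a Rolle/sign–change count. Write $\bar p:=\Prob(X=1)$ and, for $y\in[\underline{y}_x,\overline{y}_x]$, set $R(y):=\Exp\!\big((p(Y_x)-\bar p)\,\indicator\{Y_x\le y\}\big)$. By Theorem~\ref{thm:AvgValueCharacterization} (taking $t_1=\underline{y}_x$), $\mathcal{T}$–independence is equivalent to $R\equiv0$ on $\mathcal{T}\cup\{\underline{y}_x,\overline{y}_x\}$, so $R$ vanishes at the $K+2$ ordered points $\underline{y}_x<\tau_1<\cdots<\tau_K<\overline{y}_x$. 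Under A\ref{assn:continuity}.\ref{A1_1}, $R$ is continuous, is nondecreasing on every interval where $p\ge\bar p$ and nonincreasing where $p\le\bar p$, and $R(a)=R(b)$ is exactly $\Exp(p(Y_x)\mid Y_x\in(a,b))=\bar p$. The first ingredient is then the local form of Corollary~\ref{corr:monotonicPropensityScores}: if $a<b<c$, $R(a)=R(b)=R(c)$, and $p$ is weakly monotone on $(a,c)$, then $p=\bar p$ $F_{Y_x}$–a.e.\ on $(a,c)$; the proof is the one for Corollary~\ref{corr:monotonicPropensityScores}, since for $p$ weakly increasing $\operatorname{ess\,sup}_{(a,b)}p\le\operatorname{ess\,inf}_{(b,c)}p$, so the conditional mean on $(a,b)$ cannot exceed that on $(b,c)$, and both being $\bar p$ forces $p\equiv\bar p$ on each subinterval.

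The main step is by contraposition. Suppose $p$ does not change direction at least $K$ times. If $p$ admits no partition of $(\underline{y}_x,\overline{y}_x)$ into finitely many weakly-monotone intervals, then it changes direction at least $K$ times for every $K$ (cut out $K$ disjoint subintervals, each straddling a turning point), so we may assume $p$ is weakly monotone on each of $m\le K$ intervals $M_1,\dots,M_m$, i.e.\ $p$ has at most $K-1$ turning points. I would then derive a contradiction with the $K$ non-constancy hypotheses by controlling the zero set of $R$: a piecewise-monotone $p$ lets $p-\bar p$ change sign only a bounded number of times, so the zero set of $R$ has few connected components; combined with the $K+2$ prescribed zeros this should force $R\equiv0$ on some interval $(\tau_i,\tau_j)$ with $\tau_i,\tau_j\in\{\underline{y}_x,\tau_1,\dots,\tau_K,\overline{y}_x\}$ and $j$ large enough relative to $i$. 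Since $\tau_i<t_i\le t_{i+1}\le\tau_j$ in that case, $(\tau_i,\tau_j)$ contains $\mathcal{Y}_{i+1}$, so the lemma gives $p=\bar p$ a.e.\ on $\mathcal{Y}_{i+1}$, contradicting the hypothesis that $p$ is not constant on $\mathcal{Y}_{i+1}$.

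The genuine obstacle is making this count sharp enough, and I expect it to be the bulk of the work. The naive bound on the number of sign changes of $p-\bar p$ (one inside each monotone piece, plus possibly one at each of the $\le K-1$ turning points — the latter really can occur, because the clean version of $p$ may have jumps, whereas the hypothesis that a version of $p$ without removable discontinuities exists only rules out spurious spikes) is too weak to force a repeated zero once $K\ge 2$; the argument must instead use that $p$ is simultaneously non-constant on all $K$ of the sets $\mathcal{Y}_k$ — for instance by iterating the previous paragraph on $(\underline{y}_x,\tau_i)$ and $(\tau_{j},\overline{y}_x)$ and keeping careful track of where each non-constancy constraint is ``spent'' — and must dispose of the boundary configurations where some $\tau_k$ equals the left endpoint of $\mathcal{Y}_k$ (so that $\mathcal{Y}_k$ does not meet $(\tau_{k-1},\tau_k)$), as well as the degenerate case of infinitely many turning points, where the conclusion is immediate. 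The reduction via Theorem~\ref{thm:AvgValueCharacterization} and the lemma are routine; the real content is this passage from a single quantile to $K$, invoking the non-constancy hypothesis $K$ times.
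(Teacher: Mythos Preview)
Your approach is far more elaborate than the paper's, and you yourself flag that you cannot complete the key counting step. The paper's proof is a one-line localization: for each $k$, repeat the argument of Corollary~\ref{corr:monotonicPropensityScores} conditional on $Y_x \in \mathcal{Y}_k$, using that a nontrivial $\tau_k$-cdf independence condition (with $\tau_k \in \mathcal{T} \cap \mathcal{Y}_k$) survives the conditioning; Corollary~\ref{corr:monotonicPropensityScores} then forces $p$ to be constant on $\mathcal{Y}_k$ whenever it is monotone there, contradicting the hypothesis. Since the definition of ``changes direction at least $K$ times'' is existential---it asks only for \emph{some} partition into $K$ intervals with $p$ non-monotone on each---the given partition $\{\mathcal{Y}_k\}$ itself witnesses the conclusion. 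No global contraposition, no sign-change bookkeeping, and no analysis of the zero set of $R$ is needed.

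You already have the right ingredient: your ``local form of Corollary~\ref{corr:monotonicPropensityScores}'' is exactly what the paper invokes. Where you go astray is in the next paragraph, where instead of applying it once per interval you contrapose globally and try to bound the number of turning points of $p$. That detour creates the obstacle you describe; the paper avoids it entirely by never leaving the local picture. Note also that your contraposition is not obviously the right one: the negation of ``there exists a partition into $K$ intervals with $p$ non-monotone on each'' is ``for every partition into $K$ intervals, $p$ is monotone on at least one piece,'' which is not the same as ``$p$ is piecewise monotone with at most $K$ pieces,'' so even the setup of your reduction would need more justification.
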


This result says that such latent propensity scores must oscillate up and down at least $K$ times (we assume $p$ does not have removable discontinuities to rule out trivial direction changes).  For example, as in figure \ref{propensityScoresFig1}, suppose we continue to have $\Prob(X=1) = 0.5$ and $Y_x \sim \text{Unif}[0,1]$ but we add a few more isolated $\tau$'s to $\mathcal{T}$. Figure \ref{propensityScoresFig2} shows several latent propensity scores consistent with $\mathcal{T}$-independence when $\mathcal{T}$ has several isolated elements. Consider the figure on the left, with $\mathcal{T} = \{ 0.25, 0.5, 0.75 \}$. Partition $(0,1) = (0,0.4) \cup [0.4,0.6) \cup [0.6,1)$. Then $p$ is not monotonic over each partition set, and each partition set contains one element of $\mathcal{T}$: $0.25 \in (0,0.4)$, $0.5 \in [0.4,0.6)$, and $0.75 \in [0.6,1)$. There are $K=3$ partition sets, and hence the corollary says $p$ must change direction at least 3 times. We see this in the figure since there are 3 interior local extrema. A similar analysis holds for the figure on the right. Overall, these triangular and sawtooth latent propensity scores illustrate the oscillation required by corollary \ref{corr:SignChanges}.

\begin{figure}[t]
\centering
\includegraphics[width=52mm]{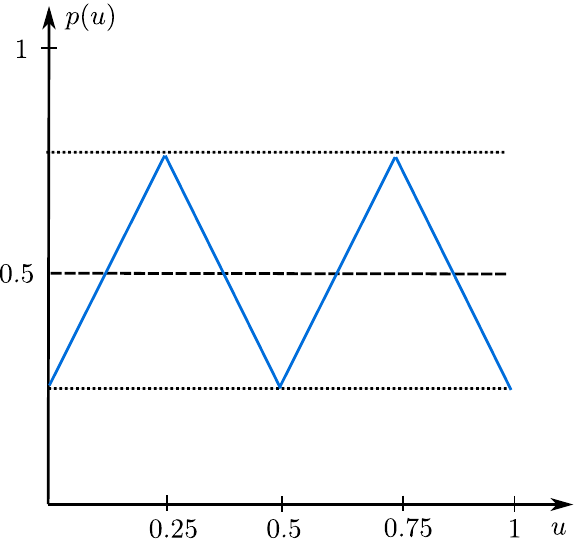}
\hspace{0.2mm}
\includegraphics[width=52mm]{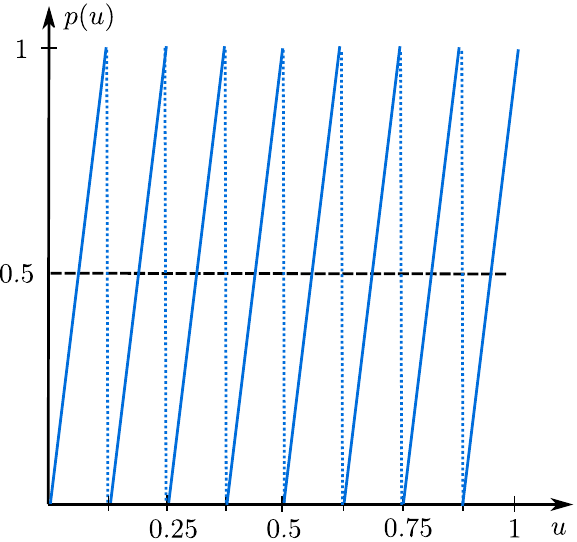}
\caption{Some latent propensity scores consistent with $\mathcal{T}$-independence when $\Prob(X=1) = 0.5$. Left: $\mathcal{T} = \{ 0.25, 0.5, 0.75 \}$. Right: $\mathcal{T} = \{ 0.125, 0.25, 0.375, 0.5, 0.625, 0.75, 0.875 \}$.}
\label{propensityScoresFig2}
\end{figure}

We document one more feature: As long as there is some interval that is not in $\mathcal{T}$ then there is a latent propensity score that takes the most extreme values possible, 0 and 1.

\begin{corollary}\label{corr:TauIndep_ExtremeValues}
Suppose $X$ is binary and A\ref{assn:continuity}.1 and A\ref{assn:continuity}.3 hold. Suppose $[\underline{y}_x,\overline{y}_x] \setminus \mathcal{T}$ contains a non-degenerate interval. Then there exists a latent propensity score which is consistent with $\mathcal{T}$-independence of $Y_x$ from $X$ and for which the sets
\[
	\{y_x \in [\underline{y}_x,\overline{y}_x]: p(y_x) = 0\}
	\qquad \text{and} \qquad
	\{y_x \in [\underline{y}_x,\overline{y}_x]: p(y_x) = 1\}
\]
have positive Lebesgue measure.
\end{corollary}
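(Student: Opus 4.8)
The plan is to exhibit an explicit latent propensity score. Since $[\underline{y}_x,\overline{y}_x]\setminus\mathcal{T}$ contains a non-degenerate interval, I may fix reals $c<d$ with $[c,d]\subseteq[\underline{y}_x,\overline{y}_x]\setminus\mathcal{T}$ and $\underline{y}_x<c$, $d<\overline{y}_x$; in particular $\mathcal{T}\cap[c,d]=\emptyset$. By A\ref{assn:continuity}.1 and A\ref{assn:continuity}.3, each conditional cdf $F_{Y_x\mid X}(\cdot\mid x')$ is continuous and strictly increasing on $[\underline{y}_x,\overline{y}_x]$, and hence so is the unconditional cdf $F_{Y_x}$, being a convex combination of the two. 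Consequently there is a unique point $m$ with $F_{Y_x}(m)=\tfrac12\big(F_{Y_x}(c)+F_{Y_x}(d)\big)$, and strict monotonicity forces $c<m<d$.

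Now define
\[
	p(y_x)=\begin{cases}1 & y_x\in(c,m),\\ 0 & y_x\in[m,d),\\ \tfrac12 & \text{elsewhere on }[\underline{y}_x,\overline{y}_x].\end{cases}
\]
This $p$ is measurable and $[0,1]$-valued, so it is the latent propensity score of a joint distribution of $(Y_x,X)$ with the given marginal $F_{Y_x}$; the choice of $m$ yields $\Prob(X=1)=\int p\,\diff F_{Y_x}=\tfrac12$, so $p_0=p_1=\tfrac12$ and the conditional cdfs $F_{Y_x\mid X}(\cdot\mid x')$ are well defined. Moreover $\{p=1\}\supseteq(c,m)$ and $\{p=0\}\supseteq[m,d)$ are non-degenerate intervals because $c<m<d$, hence have positive Lebesgue measure; this is exactly the conclusion of the corollary, provided $p$ is consistent with $\mathcal{T}$-independence.

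To verify the latter, observe that with $p_0=p_1=\tfrac12$ one has $F_{Y_x\mid X}(\tau\mid1)=2\int_{\underline{y}_x}^{\tau}p\,\diff F_{Y_x}$ and $F_{Y_x\mid X}(\tau\mid0)=2\big(F_{Y_x}(\tau)-\int_{\underline{y}_x}^{\tau}p\,\diff F_{Y_x}\big)$, so the cdf independence condition \eqref{cdfIndependence} at $\tau$ is equivalent to $\int_{\underline{y}_x}^{\tau}p\,\diff F_{Y_x}=\tfrac12 F_{Y_x}(\tau)$. Every $\tau\in\mathcal{T}$ satisfies $\tau<c$ or $\tau>d$, since $\mathcal{T}\cap[c,d]=\emptyset$. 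If $\tau<c$, then $p\equiv\tfrac12$ on $[\underline{y}_x,\tau]$ and the identity is immediate. If $\tau>d$, then the excess of $p$ over $\tfrac12$ on $(c,m)$ exactly offsets its deficit below $\tfrac12$ on $(m,d)$ — which is precisely the defining property of $m$ — so again $\int_{\underline{y}_x}^{\tau}p\,\diff F_{Y_x}=\tfrac12 F_{Y_x}(\tau)$. Hence \eqref{cdfIndependence} holds for all $\tau\in\mathcal{T}$, which completes the argument.

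I do not expect a genuine obstacle here: the statement is a pure existence claim, and the construction above is elementary. The two points that need care are, first, that the conclusion is phrased in terms of Lebesgue measure rather than $F_{Y_x}$-measure, which is why $c$ and $d$ are taken strictly inside the support (there $F_{Y_x}$ is strictly increasing, so $(c,m)$ and $(m,d)$ are honest intervals); and, second, that the joint distribution just built necessarily violates A\ref{assn:continuity}.1 — for instance $F_{Y_x\mid X}(\cdot\mid1)$ is flat on $(m,d)$ — which is exactly what allows $p$ to attain the extreme values $0$ and $1$ on sets of positive measure, and is why the consistency check above is carried out directly from the definition of $\mathcal{T}$-independence rather than by invoking Theorem \ref{thm:AvgValueCharacterization}.
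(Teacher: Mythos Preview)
Your construction is essentially the same as the paper's: set $p$ equal to a constant outside a gap $[c,d]\subseteq[\underline{y}_x,\overline{y}_x]\setminus\mathcal{T}$, and inside the gap take $p=1$ on a left piece and $p=0$ on a right piece, with the split point chosen so that the average over $(c,d)$ matches the outside constant. The paper then invokes Theorem~\ref{thm:AvgValueCharacterization} to certify $\mathcal{T}$-independence, whereas you verify the cdf condition~\eqref{cdfIndependence} directly; both routes work.

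The one substantive gap is that you fix $\Prob(X=1)=\tfrac12$ as a \emph{consequence} of your construction, whereas the paper treats $\Prob(X=1)$ as given and builds $p$ to match it. Assumption A\ref{assn:continuity}.3 refers to the marginal of the given $X$, and the point of the corollary in context is that $\mathcal{T}$-independence \emph{together with the observed marginals} still permits propensity scores attaining $0$ and $1$ on sets of positive measure. Under that reading your argument only covers $p_1=\tfrac12$. The repair is immediate: replace the outside value $\tfrac12$ by the given $p_1\in(0,1)$ and choose $m$ via $F_{Y_x}(m)=F_{Y_x}(c)+p_1\big(F_{Y_x}(d)-F_{Y_x}(c)\big)$; strict monotonicity of $F_{Y_x}$ still gives $c<m<d$, and your direct verification of~\eqref{cdfIndependence} goes through with $\tfrac12$ replaced by $p_1$ throughout.

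Your closing remark that the constructed joint distribution violates A\ref{assn:continuity}.1 (the conditional cdfs have flat spots) is well taken and in fact applies equally to the paper's own construction. The paper's appeal to Theorem~\ref{thm:AvgValueCharacterization} is nonetheless legitimate because that theorem's proof uses only continuity of $F_{Y_x\mid X}(\cdot\mid x')$ (via Lemma~\ref{lemma:continuity}), not strict monotonicity. Your direct verification sidesteps the issue cleanly and is arguably the more careful presentation.
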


Consequently, $\mathcal{T}$-independence allows for a kind of extreme imbalance, where there is a positive mass of potential outcome values that only appear in the treatment group and another positive mass of potential outcome values that only appear in the control group.

\subsection{Characterizing Mean Independence}\label{sec:meanIndependence}

Mean independence is another commonly used exogeneity assumption. For example, \cite{HeckmanIchimuraTodd1998} assume potential outcomes are mean independent of treatments, conditional on covariates. As in the previous section, we characterize the constraints this assumption places on the conditional distribution of $X$ given $Y_x$.

\begin{definition}\label{def:meanIndep}
Say $Y_x$ is \emph{mean independent} of $X$ if $\Exp(Y_x \mid X=0) = \Exp(Y_x \mid X=1)$.
\end{definition}

From definition \ref{def:meanIndep} and Bayes' rule, it immediately follows that
\begin{equation}\label{eq:meanIndepIntegral}
	\Exp\left(\frac{Y_x}{\Exp(Y_x)} p(Y_x)\right) = \Prob(X=1),
\end{equation}
assuming $\Exp(Y_x) \neq 0$. Theorem \ref{thm:AvgValueCharacterization} showed that quantile independence constrains the \emph{unweighted} average value of the latent propensity score over certain \emph{subintervals} of its domain. In contrast, equation \eqref{eq:meanIndepIntegral} shows that mean independence constrains a \emph{weighted} average value of the latent propensity score over its entire domain. Equation \eqref{eq:meanIndepIntegral} can be extended to multi-valued and continuous $X$ as in our analysis of quantile independence in section \ref{sec:multivalTreat}; we omit this extension for brevity.

Although mean independence imposes a different constraint on the latent propensity score than quantile independence, it also requires non-constant latent propensity scores to be non-monotonic.

\begin{proposition}\label{prop:mean-indep prop scores}
Suppose $X$ is binary and A\ref{assn:continuity}.1 holds. Suppose $\Exp(|Y_x|) < \infty$. Suppose the latent propensity score $p$ is weakly monotonic and not constant on the interior of its domain. Then $Y_x$ is not mean independent of $X$.
\end{proposition}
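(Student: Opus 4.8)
The plan is to reduce mean independence to a zero-covariance statement and then derive a contradiction via the covariance inequality for monotone functions. First I would rewrite mean independence in terms of the latent propensity score. By the tower property, $\Exp(Y_x \indicator(X=1)) = \Exp(Y_x\, p(Y_x))$, so
\[
	\cov\big(Y_x, p(Y_x)\big) = \Exp(Y_x \indicator(X=1)) - \Exp(Y_x)\Prob(X=1) = \cov\big(Y_x, \indicator(X=1)\big).
\]
A short computation using $\Exp(Y_x) = p_1 \Exp(Y_x \mid X=1) + p_0 \Exp(Y_x \mid X=0)$ gives $\cov(Y_x,\indicator(X=1)) = p_0 p_1\big(\Exp(Y_x\mid X=1) - \Exp(Y_x\mid X=0)\big)$, with $p_0,p_1 \in (0,1)$. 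Hence $Y_x$ is mean independent of $X$ if and only if $\cov(Y_x, p(Y_x)) = 0$. (This is equivalent to, but slightly more convenient than, equation \eqref{eq:meanIndepIntegral}, and it avoids separately handling $\Exp(Y_x)=0$.)

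Second, I would invoke the covariance inequality. Let $Y_x'$ be an independent copy of $Y_x$. Since $y \mapsto y$ and $y \mapsto p(y)$ are both weakly monotonic, the product $(Y_x - Y_x')(p(Y_x) - p(Y_x'))$ has an a.s.\ constant sign: nonnegative if $p$ is weakly increasing, nonpositive if weakly decreasing. The integrability needed to take expectations is exactly the hypothesis $\Exp(|Y_x|) < \infty$ together with $|p| \le 1$. Expanding and using that $Y_x, Y_x'$ are i.i.d.\ yields
\[
	\Exp\big[(Y_x - Y_x')(p(Y_x) - p(Y_x'))\big] = 2\,\cov\big(Y_x, p(Y_x)\big),
\]
so $\cov(Y_x, p(Y_x)) = 0$ would force $(Y_x - Y_x')(p(Y_x) - p(Y_x')) = 0$ almost surely. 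By A\ref{assn:continuity}.1 the distribution of $Y_x$ has a continuous cdf (it is a mixture of the two continuous conditional cdfs), so $\Prob(Y_x = Y_x') = 0$; therefore $p(Y_x) = p(Y_x')$ a.s., i.e.\ $p(Y_x)$ equals some constant $k$ almost surely.

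Finally I would show this contradicts the hypothesis that $p$ is weakly monotonic and not constant on the interior of its domain. Since $\Prob(p(Y_x) = k) = 1$ and, by A\ref{assn:continuity}.1, the distribution of $Y_x$ assigns positive probability to every nondegenerate subinterval of $\Int(\supp(Y_x))$, the set $\{y : p(y) = k\}$ is dense in $\Int(\supp(Y_x))$. If $p$ took a value $p(y_0) \ne k$ at some interior point $y_0$, weak monotonicity would force $p$ to stay on one side of $k$ on an entire subinterval to one side of $y_0$, contradicting density; hence $p \equiv k$ on $\Int(\supp(Y_x))$, contradicting non-constancy. Therefore $\cov(Y_x, p(Y_x)) \ne 0$, and $Y_x$ is not mean independent of $X$. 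The one step requiring care — and the only real obstacle — is this last translation of ``not constant on the interior of its domain'' into ``$p(Y_x)$ is not almost surely constant,'' which is precisely where A\ref{assn:continuity}.1 is used; the covariance inequality and its equality case are otherwise standard.
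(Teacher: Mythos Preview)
Your proof is correct and follows the same strategy as the paper's: reduce mean independence to $\cov(Y_x, p(Y_x)) = 0$ and derive a contradiction from the covariance inequality for monotone functions. The only difference is cosmetic---the paper centers at $\Exp(Y_x)$, writing $\cov(Y_x,p(Y_x)) = \Exp[(Y_x-\Exp(Y_x))(p(Y_x)-p(\Exp(Y_x)))]$ and observing the integrand is nonnegative, whereas you use the independent-copy symmetrization $2\cov(Y_x,p(Y_x)) = \Exp[(Y_x-Y_x')(p(Y_x)-p(Y_x'))]$; both are standard proofs of the same inequality, and your treatment of the equality case (density plus monotonicity forcing $p$ constant on the interior) is, if anything, more carefully spelled out than the paper's one-line claim.
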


Therefore, assuming mean independence rules out all types of monotonic selection, except for the trivially monotonic constant $p(y_x) = \Prob(X=1)$ implied by full statistical independence. 

\subsection{Discussion}

To place these results in context, consider the case where $X$ is an indicator for completing college and $Y_0$ denotes a person's earnings if they do not complete college. $X$ is often thought to be endogenous due to its relationship with ability, which is captured by $Y_0$. In this example, $p(y_0)$ is the proportion of people who complete college, among those with a fixed level of non-graduate earnings. Corollary \ref{corr:monotonicPropensityScores} states that any quantile independence condition rules out nonconstant, weakly monotonic treatment selection. Similarly, proposition \ref{prop:mean-indep prop scores} implies that mean-independence rules out this monotonic treatment selection. They would thus rule out that the proportion who attend college is weakly increasing in the level of non-graduate earnings, unless we assume that college attendance and non-graduate earnings are statistically independent, in which case $p(y_0)$ is constant.

Corollary \ref{corr:SignChanges} would require the probability of attending college to oscillate (or be constant) to accommodate multiple quantile independence conditions. For example, if $K$ quantile independence conditions hold, there must exist $K$ non-graduate earnings thresholds where the effect of non-graduate earnings on college attendance changes sign. Alternatively, college attendance can again be statistically independent of non-graduate earnings. We discuss the plausibility of these oscillations in the context of a Roy Model in section \ref{sec:LatentSelectionModels}.

Corollary \ref{corr:TauIndep_ExtremeValues} characterizes another feature of the latent propensity scores allowed by quantile independence. In our returns to schooling example, a finite number of quantile independence restrictions allow for a strictly positive proportion of people with non-graduate earnings levels for which nobody attends college ($p(y_0) = 0$), and for which everyone attends college ($p(y_0) = 1$). Depending on the context, the existence of these two groups may or may not appear plausible. If it appears implausible, using an assumption that allows for their existence is inefficient compared to an assumption that rules out their existence. Since ruling out their existence is a stronger assumption, it would result in weakly narrower identified sets compared to the overly conservative set obtained under quantile independence alone. To rule out their existence, one could add a constraint such as $p(y_0) \in [c_1, c_2]$ for prespecified $0 < c_1 \leq c_2 < 1$ and derive the identified set for the relevant parameter under quantile independence and this added constraint. Throughout this paper, we emphasize this approach of tailoring exogeneity conditions to the likely and unlikely features of the treatment selection function $p(y_0)$ in the given application. Precisely finding the assumption that best captures the nonexistence of those groups is beyond the scope of this paper, however, since it is application dependent. 

\section{The Identifying Power of Different Exogeneity Assumptions}
\label{sec:treatmentEffects}

In this section we study the implications of the choice of exogeneity assumption for identification. Our first result in section \ref{sec:characterizationSection} shows that quantile independence imposes a constraint on the average value of a latent propensity score. We use this characterization to motivate an assumption weaker than quantile independence, which we call \emph{$\mathcal{U}$-independence}. The difference between these two assumptions is that quantile independence imposes some additional average value constraints on $p(y_x)$ that $\mathcal{U}$-independence does not. In particular, $\mathcal{U}$-independence allows for monotonic treatment selection. Hence the difference between identified sets obtained under these two assumptions is a measure of the identifying power of these additional average value constraints, which are the features of quantile independence that require the latent propensity score to be non-monotonic. Unlike quantile independence, it is not clear to us how to naturally weaken mean independence to allow for monotonic latent propensity scores while still retaining an interpretable assumption with identifying power. For that reason, in this section we only study identification under quantile independence and its weaker version, $\mathcal{U}$-independence.

We focus on two parameters: The average treatment effect for the treated,
\begin{align*}
	\text{ATT}
	&= \Exp(Y_1 - Y_0 \mid X=1) \\
	&= \Exp(Y \mid X=1) - \Exp(Y_0 \mid X=1),
\end{align*}
and the quantile treatment effect for the treated,
\begin{align*}
	\text{QTT}(q)
	&= Q_{Y_1 \mid X}(q \mid 1) - Q_{Y_0 \mid X}(q \mid 1) \\
	&= Q_{Y \mid X}(q \mid 1) - Q_{Y_0 \mid X}(q \mid 1),
\end{align*}
for $q \in (0,1)$. To analyze treatment on the treated parameters, we only need to make assumptions on the relationship between $Y_0$ and $X$. Our analysis can easily be extended to parameters like ATE by imposing $\mathcal{T}$- or $\mathcal{U}$-independence between $Y_1$ and $X$ as well as between $Y_0$ and $X$.

Under statistical independence $Y_0 \indep X$, both the ATT and QTT are point identified. Under $\mathcal{T}$- or $\mathcal{U}$-independence, however, they are generally partially identified. We derive identified sets for ATT and QTT under both classes of exogeneity assumptions in this section. These sets have simple explicit expressions which make them easy to use in practice. In section \ref{sec:empirical} we compare these identified sets in an empirical application. In this application, the estimated identified sets are significantly larger under $\mathcal{U}$-independence, implying that the additional average value constraints inherent in $\mathcal{T}$-independence have substantial identifying power.

\subsection{Weakening Quantile Independence}\label{sec:weakeningQI}

Throughout this section, we focus on the case where $\mathcal{T}$ is an interval. In this case, we show that latent propensity scores consistent with $\mathcal{T}$-independence have two features: (a) they are flat on $\mathcal{T}$ and (b) they are non-monotonic outside the flat regions, such that the average value constraint \eqref{eq:averageValueCondition_main} is satisfied. We use this finding to motivate a weaker assumption which retains feature (a) but drops feature (b). We call this assumption \emph{$\mathcal{U}$-independence}. This new weaker assumption has two uses: First, we can use it as a tool for understanding quantile independence itself. Specifically, by comparing identified sets under quantile independence and under the weaker $\mathcal{U}$-independence we will learn the identifying power of the average value constraints on the latent propensity score. Second, $\mathcal{U}$-independence can be used by itself as a method for relaxing statistical independence and performing sensitivity analysis. We illustrate both of these uses below and in our empirical analysis of section \ref{sec:empirical}.

We begin with the following corollary to theorem \ref{thm:AvgValueCharacterization}.

\begin{corollary}\label{prop:TauImpliesU}
Suppose $X$ is binary and that A\ref{assn:continuity} holds for $Y_0$. Let $\mathcal{T} = [a,b]\subseteq [\underline{y}_0,\overline{y}_0]$. Then $\mathcal{T}$-independence of $Y_0$ from $X$ implies
\begin{equation}\label{eq:UindepEq}
	\Prob(X=1 \mid Y_0=y_0) = \Prob(X=1)
\end{equation}
for almost all $y_0 \in \mathcal{T}$.
\end{corollary}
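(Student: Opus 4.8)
The plan is to deduce the pointwise statement \eqref{eq:UindepEq} from the interval averages in Theorem~\ref{thm:AvgValueCharacterization} by a Lebesgue differentiation argument. Since $\mathcal{T} = [a,b]$ is an interval, for any $t_1 < t_2$ in $[a,b]$ (and in particular for the pair $\underline{y}_0, t$ or $a, t$ as needed) we have $t_1, t_2 \in \mathcal{T} \cup \{\underline{y}_0, \overline{y}_0\}$, so \eqref{eq:averageValueCondition_main} gives
\[
	\Exp\bigl( p(Y_0) \mid Y_0 \in (t_1,t_2) \bigr) = \Prob(X=1)
	\qquad \text{for all } a \le t_1 < t_2 \le b.
\]
First I would rewrite this, using that $Y_0$ has a continuous, strictly increasing cdf $F$ on its support (so $F$ has a continuous density-like structure in the sense that the law of $Y_0$ is nonatomic), as a statement about integrals: for all such $t_1 < t_2$,
\[
	\int_{(t_1,t_2)} \bigl( p(y_0) - \Prob(X=1) \bigr) \, dF_{Y_0}(y_0) = 0,
\]
i.e.\ the signed measure $\diff\mu := (p - \Prob(X=1))\,\diff F_{Y_0}$ assigns zero mass to every open subinterval of $(a,b)$. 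The second step is to conclude that $\mu$ is the zero measure on $(a,b)$: a finite signed (or $\sigma$-finite) Borel measure on an interval that vanishes on every open subinterval vanishes identically, because such subintervals generate the Borel $\sigma$-algebra and form a $\pi$-system — so this is a routine Dynkin/uniqueness-of-measures argument, or equivalently one notes $F_\mu(t) := \mu((a,t))$ is constant in $t$ hence its Stieltjes derivative is zero a.e. The third step is to transfer ``$\mu = 0$'' back to ``$p = \Prob(X=1)$ a.e.\ w.r.t.\ $F_{Y_0}$'': since $\diff\mu = (p-\Prob(X=1))\,\diff F_{Y_0}$ and $\mu=0$, the density $p-\Prob(X=1)$ of $\mu$ w.r.t.\ $F_{Y_0}$ must be $0$ for $F_{Y_0}$-almost every $y_0$; because $F_{Y_0}$ is strictly increasing on $[\underline y_0,\overline y_0]\supseteq[a,b]$, ``$F_{Y_0}$-a.e.'' on $(a,b)$ coincides with ``Lebesgue-a.e.'' on $(a,b)$, and since $\{a,b\}$ has measure zero this yields $\Prob(X=1\mid Y_0=y_0)=\Prob(X=1)$ for almost all $y_0\in\mathcal{T}$, as claimed.

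An alternative, perhaps cleaner, route that avoids measure-theoretic machinery: define $g(t) = \Exp\bigl(p(Y_0)\indicator\{Y_0 \le t\}\bigr)$ for $t \in [a,b]$; then \eqref{eq:averageValueCondition_main} says $g(t_2) - g(t_1) = \Prob(X=1)\,\bigl(F_{Y_0}(t_2) - F_{Y_0}(t_1)\bigr)$ for all $a \le t_1 < t_2 \le b$, so $g(t) - \Prob(X=1)F_{Y_0}(t)$ is constant on $[a,b]$. Differentiating this identity with respect to the (continuous, strictly increasing) reference cdf $F_{Y_0}$ — i.e.\ applying the Radon–Nikodym/Lebesgue differentiation theorem, using that $g$ is absolutely continuous with respect to $F_{Y_0}$ with Radon–Nikodym derivative $p$ — gives $p(y_0) = \Prob(X=1)$ for $F_{Y_0}$-a.e.\ $y_0 \in (a,b)$, and hence for Lebesgue-a.e.\ $y_0 \in \mathcal{T}$.

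The main obstacle — really the only nontrivial point — is the measure-theoretic justification that ``the conditional average over every subinterval equals the grand mean'' forces ``the integrand equals the grand mean a.e.'', i.e.\ going from an interval-averaged identity back to a pointwise (a.e.) identity. This is exactly the Lebesgue differentiation / uniqueness-of-measures step, and it is where A\ref{assn:continuity}.1 (continuity and strict monotonicity of $F_{Y_0}$, hence nonatomicity and equivalence of $F_{Y_0}$-null and Lebesgue-null sets on the support) does its work; A\ref{assn:continuity}.3 is used to ensure $[a,b] \subseteq [\underline{y}_0,\overline{y}_0]$ sits inside the support so that ``a.e.'' statements are non-vacuous. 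Everything else — the reduction of \eqref{eq:averageValueCondition_main} to a difference identity, and the final conversion between ``$F_{Y_0}$-a.e.'' and ``Lebesgue-a.e.'' — is routine.
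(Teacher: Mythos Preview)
Your proof is correct, but it takes a genuinely different route from the paper's. You pass through Theorem~\ref{thm:AvgValueCharacterization} to obtain the interval-average identity $\Exp(p(Y_0)\mid Y_0\in(t_1,t_2))=\Prob(X=1)$ for all $t_1<t_2$ in $[a,b]$, and then recover the pointwise statement by a Lebesgue differentiation/uniqueness-of-measures argument. The paper instead works directly from the cdf-independence definition: it shows $\Prob(Y_0\le y_0\mid X=x,\,Y_0\in[a,b])=\Prob(Y_0\le y_0\mid Y_0\in[a,b])$ for $y_0\in[a,b]$ (just a ratio computation using $F_{Y_0\mid X}(t\mid x)=F_{Y_0}(t)$ at $t=a,y_0,b$), concludes $Y_0\indep X\mid\{Y_0\in\mathcal{T}\}$, and hence $p(y_0)$ is constant on $\mathcal{T}$; only then does it invoke Theorem~\ref{thm:AvgValueCharacterization} to pin down that constant as $\Prob(X=1)$. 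Their argument is more elementary---no differentiation theorem or $\pi$-system step is needed---while yours is more modular in that it treats Theorem~\ref{thm:AvgValueCharacterization} as the sole interface with $\mathcal{T}$-independence.

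One small caveat: your claim that strict monotonicity of $F_{Y_0}$ makes ``$F_{Y_0}$-a.e.'' and ``Lebesgue-a.e.'' coincide on $(a,b)$ is not correct in general (a strictly increasing continuous cdf can still assign measure zero to a set of positive Lebesgue measure). This does not damage the proof, however, because $\Prob(X=1\mid Y_0=y_0)$ is only defined up to $F_{Y_0}$-null sets, so the conclusion ``for almost all $y_0\in\mathcal{T}$'' is naturally an $F_{Y_0}$-a.e.\ statement, which is exactly what your argument delivers; the conversion step is simply unnecessary.
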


Corollary \ref{prop:TauImpliesU} shows that $\mathcal{T}$-independence requires the latent propensity score to be constant on $\mathcal{T}$ and equal to the overall unconditional probability of being treated. The first property---that the latent propensity score is flat on $\mathcal{T}$---means that random assignment holds within the subpopulation of units whose untreated outcomes are in the set $\mathcal{T}$; that is, $X \indep Y_0 \mid \{ Y_0 \in \mathcal{T} \}$. Corollary \ref{prop:TauImpliesU} can be generalized to allow $\mathcal{T}$ to be a finite union of intervals, but we omit this for brevity.

This corollary motivates the following definition.

\begin{definition}
Let $\mathcal{U} \subseteq [\underline{y}_x,\overline{y}_x]$ be an interval. Say that $Y_x$ is \emph{$\mathcal{U}$-independent} of $X$ if $\Prob(X=1 \mid Y_x=y_x) = \Prob(X=1)$ for almost all $y_x \in \mathcal{U}$.
\end{definition}

Importantly, unlike $\mathcal{T}$-independence, $\mathcal{U}$-independence allows for monotonic treatment selection. Corollary \ref{prop:TauImpliesU} shows that $\mathcal{T}$-independence implies $\mathcal{U}$-independence with $\mathcal{U} = \mathcal{T}$. The converse does not hold since $\mathcal{T}$-independence requires additional average value constraints to hold, by theorem \ref{thm:AvgValueCharacterization}. In particular, $\mathcal{U}$-independence implies the average value constraint
\[
	\Exp(p(Y_x) \mid Y_x \in (t_1,t_2)) = \Prob(X=1)
	\tag{\ref{eq:averageValueCondition_main}}
\]
for all $t_1,t_2 \in \mathcal{U}$, because it requires that $p(u)$ is constant on $\mathcal{U}$. But $\mathcal{T}$-independence also requires that \eqref{eq:averageValueCondition_main} holds for choices of $t_1$ and $t_2$ in $\mathcal{U} \cup \{\underline{y}_x,\overline{y}_x\}$. That is, $t_1$ and $t_2$ can equal the end points $\underline{y}_x$ or $\overline{y}_x$. Hence it imposes an average value constraint outside of the set $\mathcal{U}$. For example, figure \ref{TauIndepFlatPropScore} shows two latent propensity scores. One satisfies $\mathcal{T}$-independence, but the other only satisfies $\mathcal{U}$-independence. Finally, note that $\mathcal{U}$-independence is a nontrivial assumption only when $\Prob(Y_x \in \mathcal{U}) > 0$. Conversely, $\mathcal{T}$-independence is nontrivial even when $\mathcal{T}$ is a singleton.

\begin{figure}[t]
\centering
\includegraphics[width=52mm]{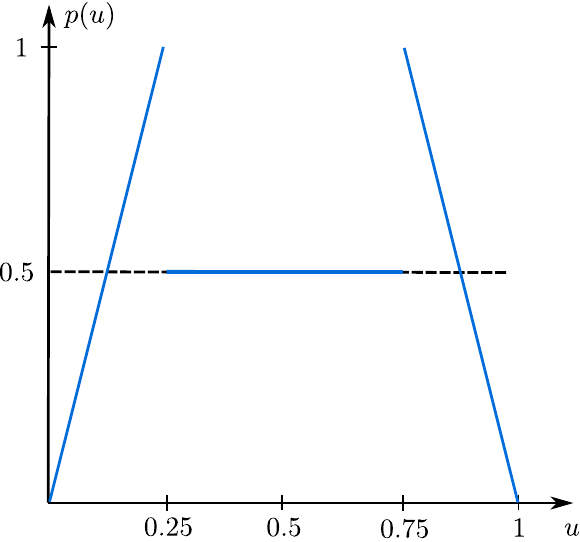}
\hspace{5mm}
\includegraphics[width=52mm]{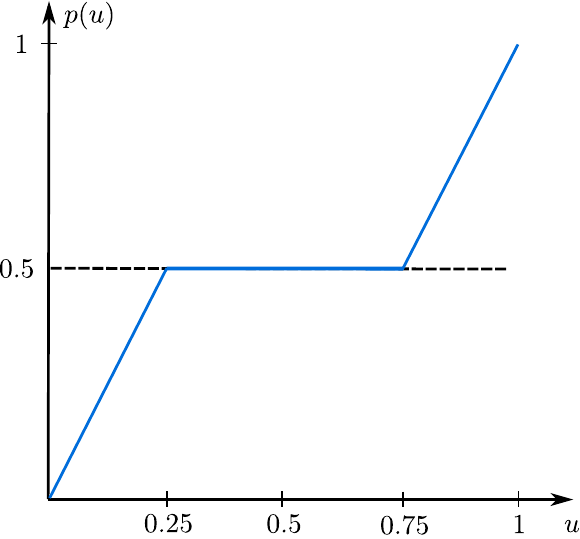}
\caption{Let $\mathcal{T} = \mathcal{U} = [0.25,0.75]$ and $\Prob(X=1) = 0.5$. Also normalize $Y_0$ to be uniform on $(0,1)$ for simplicity. This figure shows a latent propensity score consistent with both $\mathcal{T}$- and $\mathcal{U}$-independence on the left and a latent propensity score consistent with $\mathcal{U}$-, but not $\mathcal{T}$-independence on the right.}
\label{TauIndepFlatPropScore}
\end{figure}

\subsection{The Identified Sets For ATT and $\text{QTT}(q)$}\label{sec:treatmentEffectBounds}

In this subsection we derive sharp bounds on the ATT and $\text{QTT}(q)$ under both $\mathcal{T}$- and $\mathcal{U}$-independence. To do so, it suffices to derive bounds on $Q_{Y_0 \mid X}(q \mid 1)$. 

We show the validity of the following bounds in proposition \ref{prop:quantileTrtBounds} below. Let $\mathcal{T} = \mathcal{U} = [Q_{Y_0}(a),Q_{Y_0}(b)]$ for  $0 < a \leq b < 1$. The $\mathcal{T}$-independence bounds are then defined by
\[
	\overline{Q}_{Y_0 \mid X}^\mathcal{T}(\tau \mid 1)
	=
	\begin{cases}
		Q_{Y \mid X}(a \mid 0)
			&\text{ for $\tau \in (0,a]$} \\
		Q_{Y \mid X}(\tau \mid 0)
			&\text{ for $\tau\in(a,b]$} \\
		Q_{Y \mid X}(1 \mid 0)
			&\text{ for $\tau\in(b,1)$},
	\end{cases}
	\qquad
	\underline{Q}_{Y_0 \mid X}^\mathcal{T}(\tau \mid 1)
	=
	\begin{cases}
		Q_{Y \mid X}(0 \mid 0)
			&\text{ for $\tau\in(0,a]$} \\
		Q_{Y \mid X}(\tau \mid 0)
			&\text{ for $\tau\in(a,b]$} \\
		Q_{Y \mid X}(b \mid 0)
			&\text{ for $\tau \in(b,1)$}.
	\end{cases}
\]
We let $Q_{Y \mid X}(0 \mid x) = \underline{y}_x$ and $Q_{Y \mid X}(1 \mid x) = \overline{y}_x$. For $\mathcal{U}$-independence, there are two cases. First consider the lower bound. If $(1-(b-a))p_1 \leq a$,
\[
	\underline{Q}_{Y_0 \mid X}^\mathcal{U}(\tau \mid 1)
	=
	\begin{cases}
		Q_{Y \mid X}(0 \mid 0)
			&\text{ for $\tau \in (0,1-(b-a)]$} \\
		Q_{Y \mid X}\left( \tau + \dfrac{b-1}{p_0} \mid 0 \right)
			&\text{ for $\tau \in (1-(b-a), 1)$}.
	\end{cases}
\]
If $(1-(b-a))p_1 \geq a$,
\[
	\underline{Q}_{Y_0 \mid X}^\mathcal{U}(\tau \mid 1)
	=
	\begin{cases}
		Q_{Y \mid X}(0 \mid 0)
			&\text{ for $\tau \in \left(0,\dfrac{a}{p_1} \right]$} \\
		Q_{Y \mid X} \left(\tau - \dfrac{a}{p_1} \mid 0 \right)
			&\text{ for $\tau \in \left( \dfrac{a}{p_1},\dfrac{a}{p_1} + b-a \right]$} \\
		Q_{Y \mid X}(b-a \mid 0)
			&\text{ for $\tau \in \left( \dfrac{a}{p_1} + b-a,1 \right)$}.
	\end{cases}
\]
Next consider the upper bound. If $(1-(b-a))p_0 \leq a$,
\[
	\overline{Q}_{Y_0 \mid X}^\mathcal{U}(\tau \mid 1)
	= \begin{cases}
		 Q_{Y \mid X}(1-(b-a) \mid 0)
		 &\text{for $\tau \in \left( 0,1-(b-a) - \dfrac{1-b}{p_1} \right]$} \\
		 Q_{Y \mid X} \left( \tau +\dfrac{1-b}{p_1} \mid 0 \right)
		 &\text{for $\tau \in \left( 1-(b-a) - \dfrac{1-b}{p_1}, 1 - \dfrac{1-b}{p_1} \right]$} \\
		 Q_{Y \mid X}(1 \mid 0)
		 &\text{for $\tau \in \left( 1 - \dfrac{1-b}{p_1}, 1 \right)$.}											\end{cases}
\]
If $(1-(b-a))p_0 \geq a$,
\[
	\overline{Q}_{Y_0 \mid X}^\mathcal{U}(\tau \mid 1)
	=
	\begin{cases}
	Q_{Y \mid X} \left(\tau + \dfrac{a}{p_0} \mid 0 \right)
		&\text{ for $\tau \in (0, b-a]$} \\
	Q_{Y \mid X}(1 \mid 0)
		&\text{ for $\tau \in (b-a,1)$}.
	\end{cases}
\]

\begin{proposition}\label{prop:quantileTrtBounds}
Let A\ref{assn:continuity} hold. Suppose $Y_0$ is $\mathcal{T}$-independent of $X$ with $\mathcal{T} = [Q_{Y_0}(a),Q_{Y_0}(b)]$, $0 < a \leq b < 1$. Suppose the joint distribution of $(Y,X)$ is known. Let $q \in (0,1)$. Then 
\begin{equation}\label{eq:Y0quantileIdentifiedSet}
	Q_{Y_0 \mid X}(q \mid 1) \in \left[ \underline{Q}_{Y_0 \mid X}^\mathcal{T}(q \mid 1), \, \overline{Q}_{Y_0 \mid X}^\mathcal{T}(q \mid 1) \right].
\end{equation}
Moreover, the interior of the set in equation \eqref{eq:Y0quantileIdentifiedSet} equals the interior of the identified set. Finally, the proposition also holds if we replace $\mathcal{T}$ with $\mathcal{U}$ .
\end{proposition}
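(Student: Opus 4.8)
The plan is to reduce everything to a statement about a single unknown cdf and then describe the set of admissible cdfs under each exogeneity assumption. Since the distribution of $(Y,X)$ is known, we know $F_{Y\mid X}(\cdot\mid 0)=F_{Y_0\mid X}(\cdot\mid 0)$, $F_{Y\mid X}(\cdot\mid 1)=F_{Y_1\mid X}(\cdot\mid 1)$, and $p_0,p_1$, and the only feature of the joint law of $(Y_1,Y_0,X)$ that $Q_{Y_0\mid X}(q\mid 1)$ depends on is $G:=F_{Y_0\mid X}(\cdot\mid 1)$, via $Q_{Y_0\mid X}(q\mid 1)=G^{-1}(q)$. Conversely, any $G$ that is continuous, strictly increasing, with support $[\underline{y}_0,\overline{y}_0]$ (so that A\ref{assn:continuity} holds) and that satisfies the exogeneity constraint is induced by some admissible joint law: set $\Prob(X=1)=p_1$, take $F_{Y_0\mid X}(\cdot\mid 1)=G$, keep the other two conditional marginals as observed (resp.\ arbitrary but A\ref{assn:continuity}-compliant), and couple $(Y_1,Y_0)$ within each cell by an arbitrary copula. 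Hence the identified set for $Q_{Y_0\mid X}(q\mid 1)$ is $\{G^{-1}(q): G \text{ admissible}\}$, and the whole proof is about describing this set.

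For $\mathcal{T}$-independence I would first show the constraint on $G$ is simply $G=F_{Y\mid X}(\cdot\mid 0)$ on $[Q_{Y\mid X}(a\mid 0),Q_{Y\mid X}(b\mid 0)]$. By the cdf definition of $\mathcal{T}$-independence (or Theorem \ref{thm:AvgValueCharacterization} with $t_1=\underline{y}_0$), $G$ and $F_{Y\mid X}(\cdot\mid 0)$ agree on $\mathcal{T}=[Q_{Y_0}(a),Q_{Y_0}(b)]$; wherever they agree, so does $F_{Y_0}=p_1G+p_0F_{Y\mid X}(\cdot\mid 0)$, so at the endpoints of $\mathcal{T}$ one gets $F_{Y\mid X}(Q_{Y_0}(a)\mid 0)=F_{Y_0}(Q_{Y_0}(a))=a$, hence $Q_{Y_0}(a)=Q_{Y\mid X}(a\mid 0)$ by strict monotonicity, and likewise at $b$ — this removes the apparent circularity in the definition of $\mathcal{T}$. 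Consequently $G^{-1}(q)=Q_{Y\mid X}(q\mid 0)$ is forced for $q\in[a,b]$, while for $q\notin[a,b]$ continuity and monotonicity of $G$, together with the forced values $G^{-1}(a)=Q_{Y\mid X}(a\mid 0)$, $G^{-1}(b)=Q_{Y\mid X}(b\mid 0)$, $G^{-1}(0^+)=\underline{y}_0$, $G^{-1}(1^-)=\overline{y}_0$, confine $G^{-1}(q)$ to the open interval between the claimed endpoints and impose nothing more. Sharpness of the interior follows by exhibiting, for any target $v$ strictly inside, an admissible $G$ with $G^{-1}(q)=v$: keep $G=F_{Y\mid X}(\cdot\mid 0)$ on the middle interval and interpolate elsewhere by any continuous strictly increasing function through $(q,v)$ and the forced endpoints. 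This produces $\overline{Q}^{\mathcal{T}}$ and $\underline{Q}^{\mathcal{T}}$.

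The $\mathcal{U}$-independence case is the substantive one. I would first rewrite $\mathcal{U}$-independence as: $G$ and $F_{Y\mid X}(\cdot\mid 0)$ assign the same mass to every Borel subset of $\mathcal{U}$, i.e.\ $G(y)-G(c)=F_{Y\mid X}(y\mid 0)-F_{Y\mid X}(c\mid 0)$ for $y\in\mathcal{U}=:[c,d]$; totaling this and using $F_{Y_0}(c)=a$, $F_{Y_0}(d)=b$ forces $F_{Y\mid X}(d\mid 0)-F_{Y\mid X}(c\mid 0)=b-a=G(d)-G(c)$. Unlike the $\mathcal{T}$ case the location of $\mathcal{U}$ is not pinned by the data, so I would parametrize it by the shift $\alpha:=G(c)$, the $G$-mass below $\mathcal{U}$. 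Given $\alpha$ one has $F_{Y\mid X}(c\mid 0)=(a-p_1\alpha)/p_0$, hence $c=Q_{Y\mid X}((a-p_1\alpha)/p_0\mid 0)$, $d=Q_{Y\mid X}((a-p_1\alpha)/p_0+b-a\mid 0)$, and on $[c,d]$ the cdf $G$ is the constant-$\alpha$ vertical shift of $F_{Y\mid X}(\cdot\mid 0)$, so $G^{-1}(q)=Q_{Y\mid X}(q+(a-\alpha)/p_0\mid 0)$ for $q\in[\alpha,\alpha+b-a]$, while $G^{-1}(q)$ is free in $(\underline{y}_0,c)$ for $q<\alpha$ and in $(d,\overline{y}_0)$ for $q>\alpha+b-a$. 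Requiring $0<F_{Y\mid X}(c\mid 0)<F_{Y\mid X}(d\mid 0)<1$ and $0<G(c)<G(d)<1$ pins the feasible $\alpha$ to the interval with endpoints $\alpha_{\min}=\max\{0,\,a-p_0(1-b)/p_1\}$ and $\alpha_{\max}=\min\{a/p_1,\,1-(b-a)\}$. The identified set for $Q_{Y_0\mid X}(q\mid 1)$ is the union over feasible $\alpha$ of the per-$\alpha$ sets; since the ``free below / pinned middle / free above'' pieces glue continuously at $q=\alpha$ and $q=\alpha+b-a$, this union is an interval, whose infimum and supremum I would compute by pushing $\alpha$ to $\alpha_{\max}$ and to $\alpha_{\min}$ respectively: the infimum is $\underline{y}_0$ when $q<\alpha_{\max}$, equals $Q_{Y\mid X}(q+(a-\alpha_{\max})/p_0\mid 0)$ otherwise, and flattens to $Q_{Y\mid X}(b-a\mid 0)$ once $q>\alpha_{\max}+b-a$. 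Substituting the two possible values of $\alpha_{\max}$ — i.e.\ splitting on whether $(1-(b-a))p_1\le a$ — reproduces the two stated lower-bound formulas, and the mirror-image computation with $\alpha_{\min}$ (splitting on whether $(1-(b-a))p_0\le a$, equivalently whether $\alpha_{\min}=0$) reproduces the upper-bound formulas, the first of which acquires a third ``free below'' piece $Q_{Y\mid X}(1-(b-a)\mid 0)$ exactly when $\alpha_{\min}>0$. Sharpness of the interior again follows from the interpolation construction, now using the $\alpha$ that realizes the target value.

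The main obstacle is the $\mathcal{U}$-independence bookkeeping: getting the feasible range of the shift parameter $\alpha$ right and, in particular, identifying which of its two defining inequalities binds (this is precisely what generates the case splits in the formulas), checking that the union over $\alpha$ of the per-$\alpha$ value sets is a single interval rather than something disconnected, and tracking the piecewise form of $G^{-1}$ so that the optimization over $\alpha$ lands on the stated breakpoints $1-(b-a)$, $a/p_1$, $a/p_1+b-a$, and $1-(1-b)/p_1$. Secondary points I would dispatch quickly: the degenerate case $a=b$ (then $\mathcal{U}$ is a singleton, $\mathcal{U}$-independence is vacuous, and the bounds correctly collapse to $[\underline{y}_0,\overline{y}_0]$), the endpoint quantiles $q\in\{a,b\}$ (where for $\mathcal{T}$-independence $G^{-1}(q)$ is in fact pinned, so the case boundaries in the stated formulas are read with that convention), and the infinite-support cases $\underline{y}_0=-\infty$ or $\overline{y}_0=+\infty$, which affect only whether the outer quantile bounds are attained, not their interiors.
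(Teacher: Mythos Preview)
Your argument is correct and reaches the stated bounds, but it proceeds by a genuinely different route than the paper. The paper does not work directly with $G=F_{Y_0\mid X}(\cdot\mid 1)$. Instead it passes to the rank $R_0=F_{Y_0}(Y_0)\sim\text{Unif}[0,1]$, invokes an auxiliary result (Proposition~\ref{prop:TauCDFbounds}) that gives sharp cdf bounds on $F_{R_0\mid X}(\cdot\mid 0)$ when the marginals of $R_0$ and $X$ are known and $\mathcal{T}$- or $\mathcal{U}$-independence holds, and then uses the identity $F_{Y\mid X}(y\mid 0)=F_{R_0\mid X}(F_{Y_0}(y)\mid 0)$ together with the law of total probability $F_{Y_0\mid X}(y\mid 1)=(F_{Y_0}(y)-p_0F_{Y\mid X}(y\mid 0))/p_1$ to translate those rank-space bounds into quantile bounds for $Y_0\mid X=1$. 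Sharpness in the paper is obtained not by interpolating $G$ directly but by constructing explicit latent propensity scores $\overline p_x(u),\underline p_x(u)$ that attain the rank-space cdf bounds, and then approximating by strictly-interior versions (convex combinations with the identity in the $\mathcal{T}$ case; trimming the propensity scores away from $0$ and $1$ in the $\mathcal{U}$ case) to produce strictly increasing, continuous $F_{Y_0\mid X}(\cdot\mid 1)$.

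What each approach buys: the paper's detour through ranks is modular---Proposition~\ref{prop:TauCDFbounds} is a standalone result about a generic $(U,X)$ pair that can be reused elsewhere---and its propensity-score constructions make the connection to selection on unobservables explicit. Your approach is more direct and arguably more transparent for this particular proposition: you never leave outcome space, and your one-parameter $\alpha=G(c)$ device for $\mathcal{U}$-independence neatly exposes why the case splits on $(1-(b-a))p_1\lessgtr a$ and $(1-(b-a))p_0\lessgtr a$ arise (they are precisely which constraint binds in $\alpha_{\max}=\min\{a/p_1,\,1-(b-a)\}$ and $\alpha_{\min}=\max\{0,\,a-p_0(1-b)/p_1\}$). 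Both routes handle the circularity that $\mathcal{T}=\mathcal{U}=[Q_{Y_0}(a),Q_{Y_0}(b)]$ depends on the unknown $F_{Y_0}$: you resolve it by showing the endpoints equal $Q_{Y\mid X}(a\mid 0),Q_{Y\mid X}(b\mid 0)$ under $\mathcal{T}$-independence and by treating the location as the free parameter $\alpha$ under $\mathcal{U}$-independence, while the paper resolves it by normalizing to uniform ranks.
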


$\mathcal{T}$-independence of $Y_0$ from $X$ with $\mathcal{T} = [Q_{Y_0}(a),Q_{Y_0}(b)]$ is equivalent to the quantile independence assumptions $Q_{Y_0 \mid X}(\tau \mid x) = Q_{Y_0}(\tau)$ for all $\tau \in [a,b]$, by A\ref{assn:continuity}. The bounds \eqref{eq:Y0quantileIdentifiedSet} are also sharp for the function $Q_{Y_0 \mid X}(\cdot \mid 1)$ in a sense similar to that used in proposition \ref{prop:TauCDFbounds} in the appendix; we omit the formal statement for brevity. This functional sharpness delivers the following result.

\begin{corollary}\label{corr:ATT_Y0bounds}
Suppose the assumptions of proposition \ref{prop:quantileTrtBounds} hold. Let $\Exp( | Y_0 |) < \infty$. Then $\Exp(Y_0 \mid X=1)$ lies in the set
\[
	\left[ \underline{\Exp}^\mathcal{T}(Y_0 \mid X=1), \,\overline{\Exp}^\mathcal{T}(Y_0 \mid X=1) \right]
	\equiv
	\left[
		\int_0^1 \underline{Q}_{Y_0 \mid X}^\mathcal{T}(q \mid 1) \; dq, \,
		\int_0^1 \overline{Q}_{Y_0 \mid X}^\mathcal{T}(q \mid 1) \; dq
	\right].
\]
Moreover, the interior of this set equals the interior of the identified set for $\Exp(Y_0 \mid X=1)$. Finally, the corollary also holds if we replace $\mathcal{T}$ with $\mathcal{U}$.
\end{corollary}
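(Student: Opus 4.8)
The plan is to pass from the quantile-function bounds of Proposition~\ref{prop:quantileTrtBounds} to the mean through the representation
\[
	\Exp(Y_0 \mid X=1) = \int_0^1 Q_{Y_0 \mid X}(q \mid 1)\, dq,
\]
valid because $p_1>0$ and $\Exp(|Y_0|)<\infty$ give $\Exp(|Y_0| \mid X=1)<\infty$. First I would check that the bounding functions are integrable in the relevant one-sided sense: $\overline{Q}_{Y_0 \mid X}^{\mathcal{T}}(\cdot \mid 1)$ is bounded below and $\underline{Q}_{Y_0 \mid X}^{\mathcal{T}}(\cdot \mid 1)$ bounded above by fixed finite values of $Q_{Y \mid X}(\cdot \mid 0)$, while on $(a,b]$ both coincide with $Q_{Y \mid X}(\cdot \mid 0)$, which is integrable since $\Exp(|Y_0| \mid X=0) < \infty$ (using $p_0>0$). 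Hence $\underline{\Exp}^{\mathcal{T}}(Y_0 \mid X=1)$ and $\overline{\Exp}^{\mathcal{T}}(Y_0 \mid X=1)$ are well defined in $[-\infty,\infty]$, and integrating the pointwise bounds $\underline{Q}_{Y_0 \mid X}^{\mathcal{T}}(q \mid 1) \le Q_{Y_0 \mid X}(q \mid 1) \le \overline{Q}_{Y_0 \mid X}^{\mathcal{T}}(q \mid 1)$ from Proposition~\ref{prop:quantileTrtBounds} gives the membership claim. Doing the same for every joint distribution in the identified model shows the identified set $S$ for $\Exp(Y_0 \mid X=1)$ satisfies $S \subseteq [\underline{\Exp}^{\mathcal{T}}(Y_0 \mid X=1), \overline{\Exp}^{\mathcal{T}}(Y_0 \mid X=1)]$, so $\Int(S) \subseteq (\underline{\Exp}^{\mathcal{T}}(Y_0 \mid X=1), \overline{\Exp}^{\mathcal{T}}(Y_0 \mid X=1))$.

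For the reverse inclusion it suffices to show every $e$ with $\underline{\Exp}^{\mathcal{T}}(Y_0 \mid X=1) < e < \overline{\Exp}^{\mathcal{T}}(Y_0 \mid X=1)$ belongs to $S$, since the open interval is then open and sandwiched, $(\underline{\Exp}^{\mathcal{T}}(Y_0 \mid X=1),\overline{\Exp}^{\mathcal{T}}(Y_0 \mid X=1)) \subseteq S \subseteq [\underline{\Exp}^{\mathcal{T}}(Y_0 \mid X=1),\overline{\Exp}^{\mathcal{T}}(Y_0 \mid X=1)]$, hence equal to $\Int(S)$. To realize a target $e$ I would construct a non-decreasing function $Q$ on $(0,1)$ that (i) equals $Q_{Y \mid X}(\cdot \mid 0)$ on $[a,b]$ --- which, by A\ref{assn:continuity}, is precisely the $\mathcal{T}$-independence restriction on the otherwise free conditional quantile function $Q_{Y_0 \mid X}(\cdot \mid 1)$ --- (ii) lies strictly between $\underline{Q}_{Y_0 \mid X}^{\mathcal{T}}$ and $\overline{Q}_{Y_0 \mid X}^{\mathcal{T}}$ outside $[a,b]$, and (iii) has $\int_0^1 Q(q)\,dq = e$. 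The functional sharpness of Proposition~\ref{prop:quantileTrtBounds} (as in Proposition~\ref{prop:TauCDFbounds}) then provides a joint law in the identified model whose $Y_0 \mid X=1$ quantile function is $Q$, with $\Exp(Y_0 \mid X=1)=e$.

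Constructing such a $Q$ is elementary. When $\underline{y}_0,\overline{y}_0$ are finite, the convex combination $Q_\lambda = \lambda\,\overline{Q}_{Y_0 \mid X}^{\mathcal{T}} + (1-\lambda)\,\underline{Q}_{Y_0 \mid X}^{\mathcal{T}}$ is non-decreasing, agrees with $Q_{Y \mid X}(\cdot \mid 0)$ on $[a,b]$, is strictly interior to the bounds outside $[a,b]$ for $\lambda \in (0,1)$, and has integral $\lambda\,\overline{\Exp}^{\mathcal{T}}(Y_0 \mid X=1) + (1-\lambda)\,\underline{\Exp}^{\mathcal{T}}(Y_0 \mid X=1)$, which covers the whole open interval as $\lambda$ runs over $(0,1)$. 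When an endpoint is infinite (so $\overline{\Exp}^{\mathcal{T}}(Y_0 \mid X=1) = +\infty$ or $\underline{\Exp}^{\mathcal{T}}(Y_0 \mid X=1) = -\infty$) I would first truncate the unbounded tail of the offending bound --- e.g.\ set $\overline{Q}_{Y_0 \mid X}^{\mathcal{T}} = Q_{Y \mid X}(\beta \mid 0)$ on $(b,1)$ with $\beta \in (b,1)$, and symmetrically $\underline{Q}_{Y_0 \mid X}^{\mathcal{T}} = Q_{Y \mid X}(\alpha \mid 0)$ on $(0,a]$ with $\alpha \in (0,a)$ --- which keeps the functions finite, non-decreasing, and strictly interior, with integrals increasing continuously to $+\infty$ as $\beta \uparrow 1$ (resp.\ decreasing to $-\infty$ as $\alpha \downarrow 0$); an intermediate-value argument in $(\lambda,\alpha,\beta)$ then reaches every finite $e$ in the open interval. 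If A\ref{assn:continuity} is to hold exactly for the counterfactual distribution, a final small strictly increasing perturbation of $Q$ --- which moves its integral continuously while keeping it strictly interior to the bounds --- restores continuity and full support without losing the ability to hit any prescribed interior value. The $\mathcal{U}$-independence statement is obtained verbatim, replacing the $\mathcal{T}$-bounds by $\underline{Q}_{Y_0 \mid X}^{\mathcal{U}},\overline{Q}_{Y_0 \mid X}^{\mathcal{U}}$ and using the corresponding sharpness from Proposition~\ref{prop:quantileTrtBounds}.

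I expect the main obstacle to be this achievability step: lifting the functional, quantile-by-quantile sharpness of Proposition~\ref{prop:quantileTrtBounds} to sharpness for the scalar functional $\Exp(Y_0 \mid X=1)$ forces one to exhibit, for each interior mean, a legitimate conditional distribution that is simultaneously feasible and strictly inside the bounds, and to handle possibly infinite support endpoints, where the naive convex-combination interpolation would be $\pm\infty$ on a set of positive measure and must be replaced by the truncation device. The outer bound and the membership claim, by contrast, follow in one line by integrating the pointwise bounds.
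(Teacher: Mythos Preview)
Your outer bound and membership argument are correct and match the paper's approach: integrate the pointwise quantile bounds from Proposition~\ref{prop:quantileTrtBounds}. The paper's own proof is a one-line deferral to the analogous result in \cite{MastenPoirier2017}, so the substantive comparison concerns the sharpness step.

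For $\mathcal{T}$-independence your direct construction is fine. Your observation that condition (i)---``$Q$ equals $Q_{Y\mid X}(\cdot\mid 0)$ on $[a,b]$''---is \emph{exactly} the $\mathcal{T}$-independence restriction on $Q_{Y_0\mid X}(\cdot\mid 1)$ is correct, and it makes the convex combination $Q_\lambda$ feasible (up to the A\ref{assn:continuity} perturbation you describe). This is a legitimate alternative to the paper's route.

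The gap is in the ``verbatim'' claim for $\mathcal{U}$-independence. Condition (i) has no clean analogue there: the $\mathcal{U}$-constraint is that $\Prob(X=1\mid Y_0=y_0)=p_1$ on $\mathcal{U}=[Q_{Y_0}(a),Q_{Y_0}(b)]$, which translates into a density-ratio restriction $f_{Y_0\mid X}(\cdot\mid 1)=f_{Y\mid X}(\cdot\mid 0)$ on an interval that itself depends on the unknown $F_{Y_0}$. It does \emph{not} reduce to ``$Q$ agrees with $Q_{Y\mid X}(\cdot\mid 0)$ on a fixed quantile window,'' as the shifted piecewise forms of $\underline{Q}^{\mathcal{U}}$ and $\overline{Q}^{\mathcal{U}}$ already indicate. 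Consequently, your $Q_\lambda^{\mathcal{U}}=\lambda\,\overline{Q}^{\mathcal{U}}+(1-\lambda)\,\underline{Q}^{\mathcal{U}}$ need not correspond to any $\mathcal{U}$-feasible joint law, and invoking ``functional sharpness'' does not help: Proposition~\ref{prop:TauCDFbounds} only certifies that the $\varepsilon$-convex combinations of the \emph{cdf} bounds (equivalently, of the propensity scores $\overline{p}_x,\underline{p}_x$) are achievable, not that arbitrary quantile functions between the quantile bounds are.

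The fix---and almost certainly what the referenced argument in \cite{MastenPoirier2017} does---is to work directly with that achievable $\varepsilon$-family (with the $\eta$-regularization from the proof of Proposition~\ref{prop:quantileTrtBounds}). For fixed small $\eta$, the map $\varepsilon\mapsto \int_0^1 Q^{\varepsilon,\eta}_{Y_0\mid X}(q\mid 1)\,dq$ is continuous (dominated convergence, since the family is sandwiched between integrable functions), and its endpoints approach $\overline{\Exp}^{\mathcal{U}}(Y_0\mid X=1)$ and $\underline{\Exp}^{\mathcal{U}}(Y_0\mid X=1)$ as $\eta\to 0$; the intermediate value theorem then delivers every interior mean. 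This argument works uniformly for both $\mathcal{T}$ and $\mathcal{U}$, avoids the need to characterize the feasible set of quantile functions, and handles the infinite-endpoint case without a separate truncation device, since the $\eta$-regularized family already has finite means.
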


By proposition \ref{prop:quantileTrtBounds} we have that $\mathcal{T}$-independence implies that $\text{QTT}(q)$ lies in the set
\[
	\left[ Q_{Y \mid X}(q \mid 1) - \overline{Q}_{Y_0 \mid X}^\mathcal{T}(q \mid 1), \;
		Q_{Y \mid X}(q \mid 1) - \underline{Q}_{Y_0 \mid X}^\mathcal{T}(q \mid 1) \right]
\]
and that the interior of this set equals the interior of the identified set for $\text{QTT}(q)$. Likewise for $\mathcal{U}$-independence. If $q \in \mathcal{T}$, then $\text{QTT}(q)$ is point identified under $\mathcal{T}$-independence; this follows immediately from our bound expressions above. This result---that a single quantile independence condition can be sufficient for point identifying a treatment effect---was shown by \cite{Chesher2003}. A similar result holds in the instrumental variables model of \cite{ChernozhukovHansen2005} and the LATE model of \cite{ImbensAngrist1994}. See the discussion around assumption 4 in section 1.4.3 of \cite{MellyWuthrich2017}.

By corollary \ref{corr:ATT_Y0bounds} we have that $\mathcal{T}$-independence implies that the ATT lies in the set
\[
	\left[ \Exp(Y \mid X=1) - \overline{\Exp}^\mathcal{T}(Y_0 \mid X=1), \;
		\Exp(Y \mid X=1) - \underline{\Exp}^\mathcal{T}(Y_0 \mid X=1) \right]
\]
and that the interior of this set equals the interior of the identified set for the ATT. Likewise for $\mathcal{U}$-independence. Furthermore, in appendix \ref{appendix:cdfbounds} we show that these ATT bounds have simple analytical expressions, obtained from integrating our closed form expressions for the bounds on $Q_{Y_0 \mid X}(q \mid 1)$.

\section{Empirical Illustration: The Effect of Child Soldiering on Wages}
\label{sec:empirical}

In this section we use our results to study the impact of relaxing the unconfoundedness assumption in an empirical study of the effects of child soldiering on wages. We do this using both the $\mathcal{T}$- and $\mathcal{U}$-independence relaxations of statistical independence. We find that the identified sets are significantly larger under $\mathcal{U}$-independence. This implies that the average value constraints imposed by $\mathcal{T}$-independence have substantial identifying power; recall that these constraints are the features of quantile independence that require the latent propensity score to be non-monotonic. In particular, the baseline empirical results are generally quite robust under the $\mathcal{T}$-independence relaxation of unconfoundedness, but not the under $\mathcal{U}$-independence relaxation. This difference highlights the importance of the choice of exogeneity assumptions in practice, and how researchers can use their beliefs about the form of latent selection to assist in this choice.

\subsection*{Background}

By collecting extensive survey data, \cite{BlattmanAnnan2010} study the impact of child abductions during a twenty year war in Uganda, where ``an unpopular rebel group has forcibly recruited tens of thousands of youth'' (page 882). Although they consider a variety of outcome variables, we focus on the impact of abduction on later life wages. 

The main identification problem is that selection into military service is typically non-random. They argue, however, that forced recruitment in Uganda led to conditional random assignment of military service. They condition on two variables: (1) Prewar household size, because larger households were less likely to be raided by small bands of rebels, and (2) Year of birth, because abduction levels varied over time, so that some youth ages were more likely to be abducted than others. Hence their identification strategy is based on unconfoundedness, conditioning on these two variables. Although their qualitative evidence supporting unconfoundedness is compelling, this assumption is still nonrefutable. We therefore use our results to assess the sensitivity of relaxing unconfoundedness on their empirical conclusions.

\subsection*{Sample Definition}

The data comes from phase 1 of SWAY, the Survey of War Affected Youth in northern Uganda (see \citealt{AnnanBlattmanHorton2006}). This phase has 1216 males born between 1975 and 1991. We look at the subsample of units who (1) have wage data available and (2) earned positive wages. This leaves us with 448 observations. Let $Y$ denote log wage. We define treatment $X$ to be an indicator that the person was \emph{not} abducted. We include the two covariates discussed above, age when surveyed and household size in 1996. We omit other covariates for simplicity.

Age has 17 support points, household size has 21 support points, and treatment has 2 support points. Hence there are 714 total conditioning variable cells, relative to our sample size of 448 observations. To ensure that our conditional quantile estimators are reasonably smooth in the quantile index, we collapse these conditioning variables into 8 cells. Specifically, we replace age with a binary indicator of whether one is above or below the median age. Likewise, we replace household size with a binary indicator of whether one lived in a household with above or below median household size. This gives 8 total conditioning variable cells, with approximately 55 observations each.

\subsection*{Baseline Analysis}

First we present estimates of conditional average treatment effects for the treated (CATT) and conditional quantile treatment effects for the treated (CQTTs), under the unconfoundedness assumption. For brevity we focus on the covariate cell $w$ = (age, household size) = (above median, above median). This group has the largest baseline effects of treatment, meaning that being abducted lowered their later life wages by the largest. Specifically, our estimate of the CATT for this group is 0.57. Our CQTT estimates are 0.67 for $\tau = 0.25$, 0.54 for $\tau = 0.5$, and 0.56 for $\tau = 0.75$. Note that our sample size is small, with 121 observations in this cell. We omit standard errors here because the purpose of this section is to illustrate the methods developed in our paper.

\subsection*{Sensitivity Analysis}

To check the robustness of these baseline point estimates to failure of unconfoundedness, we estimate identified sets for the CATT and CQTT using our results from section \ref{sec:treatmentEffects}. To highlight the importance of the choice of relaxation, we consider sets $\mathcal{T} = \mathcal{U}$. In this case, corollary \ref{prop:TauImpliesU} shows that $\mathcal{T}$-independence implies $\mathcal{U}$-independence. Hence identified sets using $\mathcal{T}$-independence must necessarily be weakly contained within identified sets using only $\mathcal{U}$-independence. We explore the magnitude of this difference in the data. Since $\mathcal{T}$-independence is simply $\mathcal{U}$-independence combined with some additional average value constraints, the difference between these identified sets tells us the identifying power of these additional average value constraints.

Specifically, we use the choice $\mathcal{T} = \mathcal{U} = [\delta,1-\delta]$ for $\delta \in [0,0.5]$. For $\delta = 0$, this choice corresponds to full conditional independence $Y_0 \indep X \mid W=w$ under both classes of assumptions. For $\delta = 0.5$, this choice corresponds to median independence for $\mathcal{T}$-independence, and no assumptions for $\mathcal{U}$-independence. Values of $\delta$ between 0 and $0.5$ yield conditional partial independence between $Y_0$ and $X$ for both classes of assumptions.

\begin{figure}[t]
\centering
\includegraphics[width=0.95\linewidth]{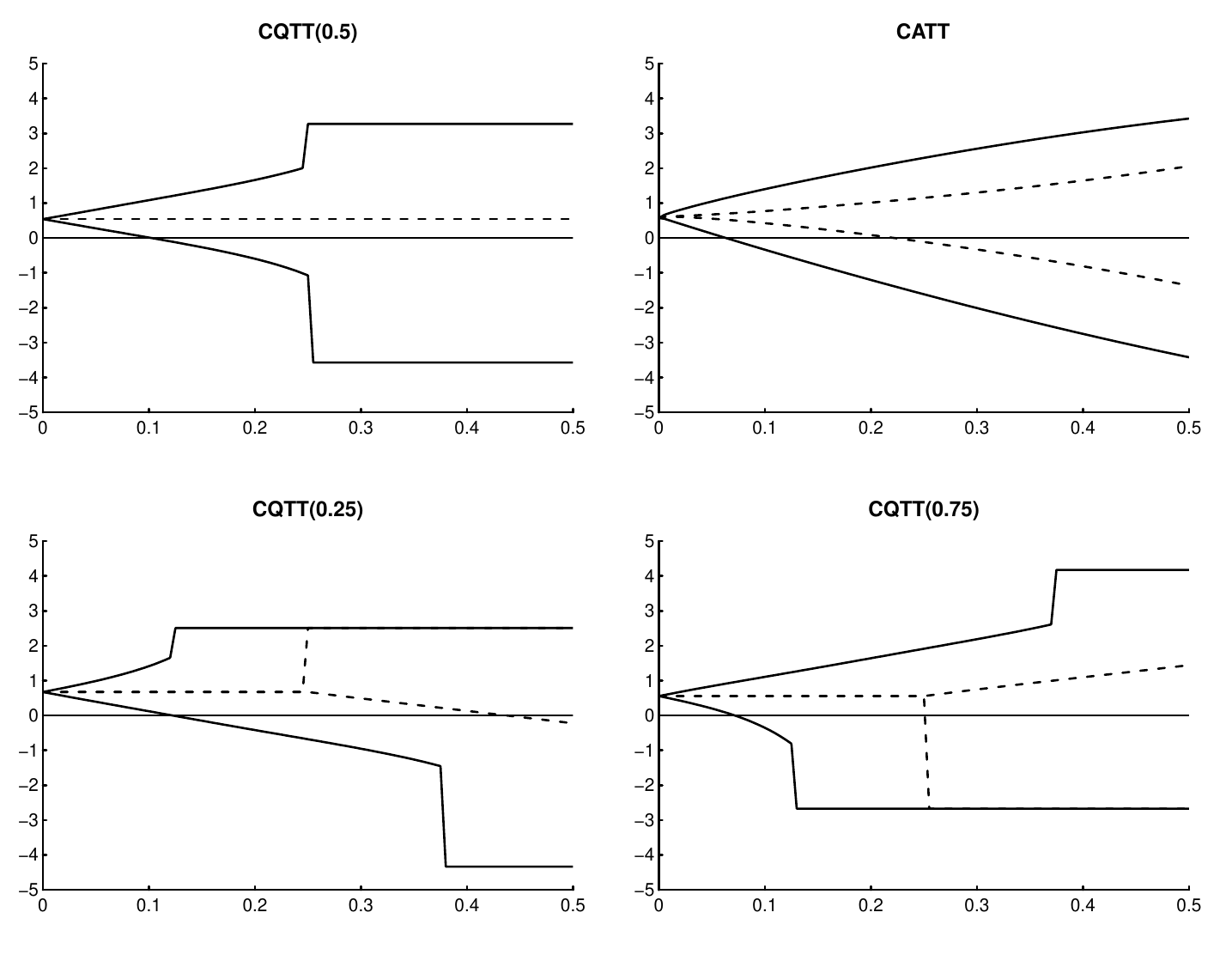}
\caption{Estimated identified sets for various parameters of interest, for $\mathcal{U} = \mathcal{T} = [\delta,1-\delta]$ with $\delta \in [0,0.5]$. Solid: $\mathcal{U}$-independence. Dashed: $\mathcal{T}$-independence. The horizontal axis shows values of $\delta$.}
\label{fig:empiricalPlots}
\end{figure}

Figure \ref{fig:empiricalPlots} shows estimated identified sets for both CATT and $\text{CQTT}(\tau)$ as $\delta$ varies from $0$ to $0.5$, and for $\tau \in \{ 0.25, 0.5, 0.75 \}$. These are sample analog estimates, where $\widehat{Q}_{Y \mid X,W}(\cdot \mid x,w)$ is estimated by inverting a kernel based estimate of $F_{Y \mid X,W}(\cdot \mid x,w)$. First consider the plot on the top left, which shows the estimated $\text{CQTT}(0.5)$ bounds. The dashed lines are the identified sets under $\mathcal{T}$-independence. Since median independence of $Y_0$ from $X$ conditional on $W=w$ is sufficient to point identify the conditional median $Q_{Y_0 \mid X,W}(0.5 \mid 1, w)$, median independence is also sufficient to point identify the CQTT at $0.5$. Hence the identified set is a singleton for all $\delta \in [0,0.5]$. This singleton equals 0.54, the baseline estimate. Next consider the solid lines. These are the estimated identified sets under $\mathcal{U}$-independence. When $\delta = 0.5$, $\mathcal{U}$-independence does not impose any constraints on the model, and hence we obtain the no assumption bounds, which are quite wide: $[-3.42, 3.42]$. If we decrease $\delta$ a small amount, thus making the $\mathcal{U}$-independence constraint nontrivial, the estimated identified set does not change. In fact, we can impose random assignment for about the middle 50\% of units (i.e., $\mathcal{U} = [0.25,0.75]$, or $\delta = 0.25$) and still we only obtain the no assumption bounds. Consequently, for intervals $\mathcal{T} \subseteq [0.25,0.75]$, the point identifying power of $\mathcal{T}$-independence is due solely to the constraint it imposes on the average value of the latent propensity score outside the interval $\mathcal{T}$, rather than the constraint that random assignment holds for units in the middle of the distribution of $Y_0$. 

Next define
\[
	\delta^\mathcal{U}_\text{bp}(\tau) = \sup \{ \delta \in [0,0.5] : \text{LB}^\mathcal{U}(\tau, \delta) \geq 0 \}
\]
where $\text{LB}^\mathcal{U}(\tau,\delta)$ is the lower bound of the identified set for $\text{CQTT}(\tau)$ under $\mathcal{U}$-independence with $\mathcal{U} = [\delta,1-\delta]$. Define $\delta^\mathcal{T}_\text{bp}(\tau)$ analogously. This value $\delta^\mathcal{U}_\text{bp}(\tau)$ is a breakdown point: It is the largest amount we can relax full independence while still being able to conclude that the treatment effect is nonnegative. For $\tau = 0.5$, the estimated breakdown point for $\text{CQTT}(0.5)$ is 0.103. Thus we can allow randomization to fail for about 20.6\% of units while still being able to conclude that $\text{CQTT}(0.5)$ is nonnegative. In contrast, as mentioned above, the breakdown point for $\mathcal{T}$-independence is always 0.5.

Next consider the lower two plots of figure \ref{fig:empiricalPlots}. These plots show estimated identified sets for $\text{CQTT}(0.25)$ on the left and $\text{CQTT}(0.75)$ on the right. There are two main differences between these plots and that of $\text{CQTT}(0.5)$: First, the $\mathcal{U}$-independence upper and lower bounds are not symmetric. Nonetheless, the qualitative robustness conclusions are similar. For example, $\widehat{\delta}^\mathcal{U}_\text{bp}(0.25)$ is 0.122 and $\widehat{\delta}^\mathcal{U}_\text{bp}(0.75)$ is 0.071. So conclusions about smaller quantiles are slightly more robust than conclusions about larger quantiles. Second, the $\mathcal{T}$-independence identified sets are no longer always singletons. In particular, we obtain non-singleton bounds when $\delta > 0.25$. However, conclusions under the $\mathcal{T}$-independence relaxation are substantially more robust than conclusions under the $\mathcal{U}$-independence relaxation. Specifically, $\widehat{\delta}^\mathcal{T}_\text{bp}(0.25)$ is 0.437. This is about 3.5 times as large as $\widehat{\delta}^\mathcal{U}_\text{bp}(0.25)$. Similarly, $\widehat{\delta}^\mathcal{T}_\text{bp}(0.75)$ is 0.25. This is also about 3.5 times as large as $\widehat{\delta}^\mathcal{U}_\text{bp}(0.75)$.

Finally consider the plot on the top right of figure \ref{fig:empiricalPlots}, which shows estimated identified sets for CATT. First consider the $\mathcal{T}$-independence relaxation, the dashed lines. The CATT is no longer point identified under median independence, or any set $\mathcal{T} \subsetneq (0,1)$ of quantile independence conditions; that is, the CATT is partially identified for all $\delta > 0$. Nonetheless, even median independence alone has substantial identifying power: For $\delta = 0.5$, the estimated identified set under median independence is $[-1.36, 2.06]$, whereas the no assumption bounds are $[-3.42, 3.42]$. Thus the width of the bounds has been cut in half. For $\delta > 0$, $\mathcal{U}$-independence has non-trivial identifying power, as shown in the solid lines. However, comparing the length of these bounds to the length to the $\mathcal{T}$-independence bounds, we see that imposing the average value constraint outside the interval $[\delta,1-\delta]$ again has substantial identifying power: the $\mathcal{T}$-independence bounds are anywhere from 50\% ($\delta = 0.5)$ to almost 100\% (arbitrarily small $\delta$) smaller than the $\mathcal{U}$-independence bounds. That is, the difference in lengths increases as we get closer to independence (as $\delta$ gets smaller). Thus conclusions about CATT are substantially more sensitive to small deviations from independence which do not impose the average value constraint outside the interval $[\delta,1-\delta]$, compared with small deviations which do impose that constraint. A second way to see this is to compare the breakdown points under the two relaxations. Define
\[
	\delta^\mathcal{U}_\text{bp} = \sup \{ \delta \in [0,0.5] : \text{LB}^\mathcal{U}(\delta) \geq 0 \}
\]
where $\text{LB}^\mathcal{U}(\delta)$ is the lower bound of the identified set for CATT under $\mathcal{U}$-independence with $\mathcal{U} = [\delta,1-\delta]$. Define $\delta^\mathcal{T}_\text{bp}$ analogously. As shown in the plots above, $\widehat{\delta}^\mathcal{U}_\text{bp} = 0.063$ while $\widehat{\delta}^\mathcal{T}_\text{bp} = 0.222$. Thus the breakdown point under $\mathcal{T}$-independence is again about 3.5 times as large as the breakdown point under $\mathcal{U}$-independence.

\subsection*{Empirical Conclusions}

In this section we used our identification results to study the robustness of conclusions about CATT and CQTTs to failures of unconfoundedness. Our baseline point estimates suggest that child abduction and forced military service has a negative effect on later life wages, for those children who were older when they were abducted and who came from larger households. This holds both on average (from the CATT) and across the distribution of treatment effects (as seen in the CQTTs). We then asked: How sensitive are these conclusions to failures of unconfoundedness? We saw that using the $\mathcal{T}$-independence relaxation, these conclusions are generally robust to large relaxations of unconfoundedness. However, using the $\mathcal{U}$-independence relaxation, these conclusions appear much more sensitive. As we earlier discussed, the difference arises from the additional average value constraints that $\mathcal{T}$-independence imposes. Those constraints are the features of quantile independence that require the latent propensity score to be non-monotonic. Thus it is critical to assess the plausibility of those additional constraints when deciding between these two forms of exogeneity assumptions to use for assessing sensitivity. 

In this empirical context, a monotonic latent propensity score arises when youths who have larger potential earnings when they're abducted (larger $Y_0$) are more likely to be abducted. If youths are targeted for abduction because of their innate or pre-existing skills, which would generally lead to large $Y_0$, then this would be a form of monotonic selection that would \emph{not} be allowed for by the $\mathcal{T}$-independence relaxation, but \emph{would} be allowed for by the $\mathcal{U}$-independence relaxation. So if we are concerned that unconfoundedness fails due to this kind of non-random selection, then $\mathcal{U}$-independence is a more appropriate choice for assessing sensitivity than $\mathcal{T}$-independence. Given this choice, the baseline results still hold under mild relaxations of unconfoundedness, since we saw that $\mathcal{U}$-independence breakdown points were generally around $\delta = 0.1$. But the baseline results no longer hold for larger relaxations; in this case, the data are inconclusive.

\section{The Treatment Selection Implications of a Roy Model} \label{sec:LatentSelectionModels}

As we emphasized, there is a direct mapping between exogeneity assumptions and the allowed forms of treatment selection. At one extreme, full independence assumes no selection at all of $X$ on $Y_x$, and therefore $p(y_x)$ is constant. On the other hand, weaker exogeneity assumptions allow for a class of deviations that one wishes to be robust against. Since this class is often not explicitly specified, we refer to such deviations as \emph{latent selection models}. Our main results in section \ref{sec:characterizationSection} characterize the set of latent selection models allowed by quantile and mean independence restrictions.

In this section, we consider a class of Roy Models and examine the relationship between their implied treatment selection functions and the exogeneity assumptions of section \ref{sec:characterizationSection}.  We discuss different assumptions on the economic primitives which lead these models to be either consistent or inconsistent with quantile or mean independence restrictions. We only consider single-agent models, but similar analyses can likely be done for multi-agent models.

Suppose we are again interested in identifying the average treatment effect for the treated parameter
\begin{align*}
	\text{ATT}
	&= \Exp(Y_1 - Y_0 \mid X=1) \\
	&= \Exp(Y \mid X=1) - \Exp(Y_0 \mid X=1).
\end{align*}
As in section \ref{sec:treatmentEffects}, its identification depends on our assumptions about the stochastic relationship between $X$ and $Y_0$. Suppose agents choose treatment to maximize their outcome:
\begin{equation}\label{eq:RoyModel}
	X = \indicator(Y_1 > Y_0).
\end{equation}
This is the classical Roy model (see \citealt{HeckmanVytlacil2007part1}). This assumption specifies how treatment $X$ relates to $Y_0$. Specifically, consider the latent propensity score
\begin{align*}
	p(y_0)	&\equiv \Prob(X=1 \mid Y_0 = y_0) \\
			&= \Prob(Y_1 > y_0 \mid Y_0 = y_0).
\end{align*}
The second line follows by our Roy model treatment choice assumption. Thus the shape of $p$ depends on the joint distribution of $(Y_1,Y_0)$. We classify these distributions into two possible cases, based on a concept called regression dependence (which is formally defined in definition \ref{def:regdependence} in appendix \ref{sec:ctsX}).
\begin{enumerate}
\item First suppose $(Y_1,Y_0)$ is such that $Y_1$ is regression dependent on $Y_0$. This implies that $p$ is monotonic. Corollary \ref{corr:monotonicPropensityScores} and proposition \ref{prop:mean-indep prop scores} therefore imply that no mean or quantile independence conditions of $Y_0$ on $X$ can hold unless $X \indep Y_0$. This occurs when $X$ is degenerate, as when treatment effects $Y_1-Y_0$ are constant, or more generally when $(Y_1 - Y_0) \indep Y_0$. In particular, any mean or quantile independence condition of $Y_0$ on $X$ rules out bivariate normally distributed $(Y_1,Y_0)$, again unless $X \indep Y_0$.

\item Next suppose $(Y_1,Y_0)$ is such that $Y_1$ is not regression dependent on $Y_0$. For example, let 
$Y_1 = Y_0 + \mu(Y_0) - \varepsilon$ where $\mu$ is a deterministic function and $\varepsilon \sim \normal(0,1)$, $\varepsilon \indep Y_0$. Then
\[
	p(y_0) = \Prob(X=1 \mid Y_0 = y_0) = \Phi[ \mu( y_0 ) ],
\]
where $\Phi$ is the standard normal cdf. If $\mu$ is non-monotonic then $p$ will also be non-monotonic. For this joint distribution of potential outcomes, the unit level treatment effects $Y_1-Y_0$ conditional on the baseline outcome $Y_0=y_0$ are distributed $\normal( \mu(y_0), 1)$. Hence non-monotonicity of $\mu$ implies that the mean of this distribution of treatment effects is not monotonic. For instance, suppose the outcome is earnings and treatment is completing college. Let
\begin{align*}
	\mu(y_0) &> 0 \qquad\text{if $y_0 \in (\alpha,\beta)$} \\
	\mu(y_0) &\leq 0 \qquad\text{if $y_0 \in (-\infty,\alpha] \cup [\beta,\infty)$}
\end{align*}
for $-\infty < \alpha < \beta < \infty$. Then people with sufficiently small or sufficiently large earnings when they do not complete college do not benefit from completing college, on average. People with moderate earnings when they do not complete college, on the other hand, do typically benefit from completing college. This kind of joint distribution of potential outcomes combined with the Roy model assumption \eqref{eq:RoyModel} on treatment selection produces non-monotonic latent propensity scores.

\medskip

We just gave one example joint distribution of $(Y_1,Y_0)$ where regression dependence fails. More generally, theorem 5.2.10 on page 196 of \cite{Nelsen2006} characterizes the set of copulas for which $Y_1$ is regression dependent on $Y_0$, when both are continuously distributed. This result therefore also tells us the set of copulas where $Y_1$ is \emph{not} regression dependent on $Y_0$. Among these copulas, $\mathcal{T}$-independence (or, analogously, mean independence) of $Y_0$ from $X$ will specify a further subset of allowed dependence structures. The precise set is given by all copulas which lead to latent propensity scores that satisfy the average value constraint. 
\end{enumerate}
Whether one of these cases is plausible depends on the specific application at hand. For example, Heckman, Smith, and Clements' \citeyearpar{HeckmanSmithClements1997} study the Job Training Partnership Act (JTPA). They find that ``plausible impact distributions require high measures of positive dependence [of $Y_1$ on $Y_0$]'' (page 506). This suggests that case 1 is more relevant for their data, and hence it may be unlikely that any quantile or mean independence holds in their setting.

\section{Conclusion}

In this paper we gave several results to help researchers assess the plausibility of quantile and mean independence assumptions on structural unobservables like potential outcomes. Keep in mind, however, that when doing identification analysis it is not necessary to choose a single exogeneity assumption. For example, researchers may want to consider a variety of exogeneity assumptions in this step, as part of a sensitivity analysis. We illustrated this in sections \ref{sec:treatmentEffects} and \ref{sec:empirical}. The choice of which exogeneity assumptions to consider is still determined by considering the kinds of treatment selection we want to allow for, as discussed in section \ref{sec:whyRelaxIndependence}. Conversely, there may be situations where researchers do not find it plausible to impose \emph{any} kind of exogeneity assumption. In this case we often can still learn something about the parameters of interest, as in the classical no assumption bounds of \cite{Manski1990}. In this paper we focused on the case where the researcher does want to impose some kind of exogeneity assumption, however. In this case, we hope that our results can help researchers better select the most appropriate exogeneity assumptions for their settings.

\singlespacing
\bibliographystyle{econometrica}
\bibliography{QI_paper}

\normalsize
\allowdisplaybreaks

\appendix

\section{Structural and Reduced Form Unobservables}\label{sec:twoKindsOfUnobs}

The approach we recommend in section \ref{sec:whyRelaxIndependence} begins by distinguishing between two kinds of unobservables: (1) Structural unobservables and (2) Reduced form unobservables. In this appendix we discuss a specific example to clarify the distinction between these two variables. We also use this example to discuss the difference between exogeneity assumptions involving reduced form unobservables and those involving structural unobservables.

\subsubsection*{Example: The Binary Response Random Coefficients Model}

Let $X$ be a scalar observed random variable. Let $Y^*(x)$ denote a latent potential outcome for $x \in \R$. Let $Y^* = Y^*(X)$ denote realized latent potential outcomes. Let
\[
	Y(x) = \indicator[ Y^*(x) \geq 0 ]
\]
denote the usual potential outcomes. Let $Y = Y(X)$ denote our observed outcome. Suppose latent potential outcomes satisfy the linear random coefficient model
\begin{equation}\label{eq:deepStructuralLinearRC}
	Y^*(x) = A + B x,
\end{equation}
where $A$ and $B$ are structural unobserved random variables. Suppose we impose the following constraint on the tails of $A$, $B$, and $X$.

\begin{misspecAssump}\label{assump:finiteMeansRC}
$\Exp(A)$, $\Exp(B)$, and $\Exp(X)$ exist and are finite.
\end{misspecAssump}

In this model, $(A,B)$ are structural unobservables. An exogeneity assumption about the relationship between $(A,B)$ and $X$ is therefore a statement about treatment selection: How does assigned treatment depend on these structural unobservables? This is the kind of exogeneity assumption we focus on in this paper. In this appendix however, we will discuss a second kind of exogeneity condition, which constrains the relationship between treatment and a reduced form unobservable. This is a \emph{derived} exogeneity condition: It is not directly imposed but rather is a consequence of a choice of (1) an exogeneity assumption involving the structural unobservables and (2) the functional form of the reduced form unobservables.

To illustrate this kind of derived exogeneity condition, we'll assume treatment is randomly assigned.

\begin{misspecAssump}\label{assump:indepRandomCoeffs}
$(A,B) \indep X$.
\end{misspecAssump}

Next write the equation for realized latent potential outcomes as
\begin{align}\label{eq:quasiLinearRC}
	Y^*(X)
		&= \Exp(A) + \Exp(B) X + \big( [A - \Exp(A)] + [B - \Exp(B)] X \big) \notag \\
		&\equiv \Exp(A) + \Exp(B) X + V. 
\end{align}
$V$ is a reduced form unobservable. It is a function of the structural unobservables $(A,B)$ as well as the realized treatment $X$. Consequently, it is not invariant to changes in the distribution of $X$. The coefficients $\Exp(A)$ and $\Exp(B)$ in equation \eqref{eq:quasiLinearRC} do have structural interpretations, however. %

By definition, $V$ depends on $X$, and so typically $V$ is not independent of $X$. Nonetheless, we can derive some restrictions on the distribution of $V \mid X$. Specifically, suppose we also make the following assumption.

\begin{misspecAssump}\label{assump:symmetricPotentialOutcomes}
$Y^*(x)$ is symmetrically distributed about $\Exp[Y^*(x)]$ for all $x \in \supp(X)$.
\end{misspecAssump}

Under this additional assumption, \citet[page 220]{Manski1975} showed the following result; also see \citet[pages 247--249]{Manski1977} and \citet[pages 1007--1008]{Fox2007}. 

\begin{proposition}\label{prop:randomCoefsGiveMedianIndep}
Suppose B\ref{assump:finiteMeansRC}, B\ref{assump:indepRandomCoeffs}, and B\ref{assump:symmetricPotentialOutcomes} hold. Then $\Prob(V \leq 0 \mid X=x) = \Prob(V \leq 0) = 0.5$ for all $x \in \supp(X)$. That is, $V$ is median independent of $X$.
\end{proposition}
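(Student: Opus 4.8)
The plan is to show that, conditional on $X=x$, the reduced form unobservable $V$ is distributed like the centered latent potential outcome $Y^*(x) - \Exp[Y^*(x)]$, and then to read off the conclusion from the symmetry assumption B\ref{assump:symmetricPotentialOutcomes}.

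First I would fix $x \in \supp(X)$. By B\ref{assump:finiteMeansRC} the centering constants $\Exp(A)$ and $\Exp(B)$ are finite, so $V = [A-\Exp(A)] + [B-\Exp(B)]X$ is well defined, and on the event $\{X=x\}$ it equals $[A-\Exp(A)] + [B-\Exp(B)]x$. By B\ref{assump:indepRandomCoeffs}, $(A,B)$ is independent of $X$, so the conditional law of $(A,B)$ given $X=x$ is just its marginal law; hence the conditional law of $V$ given $X=x$ is the unconditional law of $[A-\Exp(A)] + [B-\Exp(B)]x$. Now observe that for the fixed scalar $x$ we have $Y^*(x) = A + Bx$ with $\Exp[Y^*(x)] = \Exp(A) + \Exp(B)x$, so this random variable is precisely $Y^*(x) - \Exp[Y^*(x)]$. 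Therefore the conditional distribution of $V$ given $X=x$ coincides with the distribution of $Y^*(x) - \Exp[Y^*(x)]$.

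Next I would invoke B\ref{assump:symmetricPotentialOutcomes}: since $Y^*(x)$ is symmetrically distributed about $\Exp[Y^*(x)]$, the variable $Y^*(x) - \Exp[Y^*(x)]$ is symmetric about $0$, which gives $\Prob\big(Y^*(x) - \Exp[Y^*(x)] \leq 0\big) = 1/2$ and hence $\Prob(V \leq 0 \mid X=x) = 1/2$ for every $x \in \supp(X)$. The unconditional statement then follows by the law of iterated expectations, $\Prob(V \leq 0) = \Exp\big[\Prob(V \leq 0 \mid X)\big] = 1/2$.

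The argument is short, and I do not expect a serious obstacle. The one point needing care is the passage from \emph{symmetric about $0$} to \emph{probability of being $\leq 0$ equals exactly $1/2$}: symmetry alone only gives $\Prob(W \leq 0) = \tfrac12\big(1 + \Prob(W=0)\big)$, so obtaining exactly $1/2$ requires $Y^*(x)$ to place no mass at its mean. I would note that this is built into the notion of a symmetric distribution as used here (or, equivalently, one may treat continuity of $Y^*(x)$ as part of B\ref{assump:symmetricPotentialOutcomes}). A minor secondary point is that when $X$ is continuously distributed the conditioning should be understood through regular conditional distributions, but B\ref{assump:indepRandomCoeffs} makes this harmless since the relevant conditional law is simply the marginal for almost every $x$.
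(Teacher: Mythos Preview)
Your argument is correct and is the natural one: use $(A,B)\indep X$ to identify the conditional law of $V$ given $X=x$ with the law of $Y^*(x)-\Exp[Y^*(x)]$, then apply the symmetry assumption. The paper does not actually supply its own proof of this proposition; it simply attributes the result to \cite{Manski1975} (with further references to \cite{Manski1977} and \cite{Fox2007}), so there is no alternative argument to compare against. Your caveat about the point mass at the mean is well taken and worth flagging, since assumption B\ref{assump:symmetricPotentialOutcomes} as stated does not by itself rule out $\Prob\big(Y^*(x)=\Exp[Y^*(x)]\big)>0$; noting that continuity (or the convention that ``symmetric'' here excludes an atom at the center) is implicitly required is the right way to handle it.
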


Thus we have derived a median independence restriction on the reduced form unobservable $V$ as a consequence of (1) the definition of $V$ and (2) the exogeneity assumption about the relationship between $X$ and the structural unobservables.

\subsubsection*{Discussion}

In this example there are two kinds of unobservables: The structural unobservables $(A,B)$ and the reduced form unobservable $V$. We made an exogeneity assumption about the relationship between $(A,B)$ and $X$ based on our beliefs about treatment assignment. We then \emph{derived} an exogeneity condition on the relationship between the reduced form unobservable $V$ and $X$. More generally, for researchers interested in choosing an exogeneity assumption that relates treatment to reduced form unobservables, our recommendation is that this assumption be derived from a more primitive exogeneity assumption about the structural unobservables, as in the example. The analysis in sections \ref{sec:whyRelaxIndependence}--\ref{sec:LatentSelectionModels} of our paper can then be used to assess the plausibility of these more primitive exogeneity assumptions.

Researchers may sometimes prefer to make an exogeneity assumption on the reduced form unobservable directly, without directly deriving it from a more primitive model like we did above. There are a few concerns with this approach, however:
\begin{enumerate}
\item As \cite{Angrist2001} argues, we often care about parameters like average structural functions (ASFs) and average treatment effects (ATEs). In nonseparable models like the example above, however, the ASF depends on the distribution of the structural unobservables. In particular, for that example,
\begin{align*}
	\text{ASF}(x)
		&= \Exp[Y(x)] \\
		&= \Prob_{A,B}(A + B x \geq 0) \\
		&\neq \Prob_V( \Exp(A) + \Exp(B) x + V \geq 0).
\end{align*}
The true ASF does \emph{not} generally equal the parameter that you would compute if you worked with equation \eqref{eq:quasiLinearRC}, but incorrectly treated $V$ as a structural unobservable in an ASF calculation. Thus, in nonseparable models, it is generally not possible to avoid working with structural unobservables if we are interested in ASFs and ATEs.

For example, among many other derivations, \cite{Torgovitsky2015} computes identified sets for ASFs and ATEs in a binary response model with constant coefficients and median independence. For these identified sets to have the correct interpretation, the unobservable $V$ in his model must be structural, rather than a reduced form. Consequently, the fact that we are only imposing median independence of these unobservables from treatment implies that we are concerned with a particular kind of non-random treatment assignment. Our results in section \ref{sec:characterizationSection} characterize the kind of non-random treatment assignment consistent with median independence assumptions.

\item If we only work with the reduced form unobservables, then we might be ignoring a lot of useful information. Consider the example again: Relative to the assumption that $V$ is median independent of $X$, the stronger assumptions B\ref{assump:indepRandomCoeffs} and B\ref{assump:symmetricPotentialOutcomes} have potentially different implications for falsification, identification, rates of convergence, and efficiency. For example, in the model we know that the ASF is point identified:
\[
	\Exp[Y(x)] = \Prob(Y=1 \mid X=x).
\]
But if we only impose median independence of $V$ from $X$ then the ASF is generally only partially identified.
\end{enumerate}
For these reasons, we recommend working directly with the structural unobservables. This does not require that researchers make strong assumptions like statistical independence on these structural unobservables, however. Instead, they can use the methods discussed in this paper to think about the form of exogeneity you want to impose on the relationship between the structural unobservables and the treatment variables.

\section{Characterizing Quantile Independence with Non-Binary Treatments}\label{sec:multivalTreat}

In this section we generalize our characterization results to allow treatment $X$ to be non-binary. For any logically possible value of treatment $x$, let $Y(x)$ denote the potential outcomes. Although our results extend to general potential outcomes, they are arguably more natural for the simpler nonseparable model
\[
	Y(x) = m(x,U)
\]
where $U$ is continuously distributed scalar heterogeneity and $m$ is an unknown, potentially nonseparable function that is strictly increasing in $U$. This model allows for the impact of changes in treatment values to arbitrarily depend on $U$. It was studied in \cite{Matzkin2003}, \cite{Imbens2007}, and \cite{Torgovitsky2015a}, among others. Note that it includes the classical linear model for potential outcomes as a special case.

\subsection{Extension to Continuous Treatments}\label{sec:ctsX}

First we extend our main characterization result (theorem \ref{thm:AvgValueCharacterization}) to continuous treatments $X$. In this setting, it is useful to work with this equivalent representation of $\tau$-cdf independence condition \eqref{cdfIndependence}:
\begin{align*}
	F_{U \mid X}(\tau \mid x) &= F_{U}(\tau)
\end{align*}
for all $x \in \supp(X)$. This equivalence exists since there is an invertible mapping between $Y_x$ and $U$, and by quantile invariance. Since the structural function $m$ and the distribution of $U$ are not separately identified, it is common to normalize $U \sim \text{Unif}[0,1]$. We also use this normalization to simplify the statements and proofs of the results, but all of our results can be extended to the case where $U$ is not normalized.

As in the binary $X$ case, our results show that the deviations from independence allowed by quantile independence require a kind of non-monotonic selection on unobservables. We start by giving the analog to theorem \ref{thm:AvgValueCharacterization} in the continuous $X$ case.

\begin{theorem}[Average value characterization]
\label{thm:AvgValueCharacterization_ctsX}
Suppose $X$ and $U$ are continuously distributed; normalize $U \sim \text{Unif}[0,1]$. Then $U$ is $\mathcal{T}$-independent of $X$ if and only if
\begin{equation}\label{eq:averageValueCondition_main_ctsX}
	\Exp \big( \Prob(X > x \mid U ) \mid  U \in (t_1,t_2) \big) = \Prob(X > x)
	\quad \text{for all $x \in \supp(X)$}
\end{equation}
for all $t_1, t_2 \in\mathcal{T} \cup \{ 0,1 \}$ with $t_1 < t_2$.
\end{theorem}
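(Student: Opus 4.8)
The plan is to show that each side of the asserted equivalence is, on its own, equivalent to the intermediate condition
\[
	\Prob(X > x,\, U \le \tau) = \Prob(X > x)\,\Prob(U \le \tau)
	\quad\text{for all } x \in \supp(X) \text{ and all } \tau \in \mathcal{T};
\]
call this condition $(\dagger)$. This parallels the proof of Theorem \ref{thm:AvgValueCharacterization}, with the scalar $\Prob(X = 1 \mid \cdot)$ there replaced here by the full conditional survival function $x \mapsto \Prob(X > x \mid \cdot)$.

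First I would treat the definitional side. By the cdf representation of quantile independence recorded above, $U$ is $\mathcal{T}$-independent of $X$ if and only if, for every $\tau \in \mathcal{T}$, the map $x \mapsto F_{U \mid X}(\tau \mid x)$ equals the constant $F_U(\tau)$ on $\supp(X)$. Since $\indicator(U \le \tau)$ is $\{0,1\}$-valued, this says precisely that $\Exp[\indicator(U \le \tau) \mid X]$ is a.s.\ constant, i.e.\ $\indicator(U \le \tau) \indep X$. For a $\{0,1\}$-valued random variable $W$, independence of $W$ and $X$ is in turn equivalent to $\Prob(W = 1,\, X > x) = \Prob(W = 1)\Prob(X > x)$ for all $x$: the forward implication is immediate, and the converse holds because these products, together with $\Prob(W = 0,\, X > x) = \Prob(X > x) - \Prob(W = 1,\, X > x)$, determine the joint law of $(W, X)$ and exhibit it as a product. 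Taking $W = \indicator(U \le \tau)$ shows that $\mathcal{T}$-independence is equivalent to $(\dagger)$.

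Next I would put the average-value condition \eqref{eq:averageValueCondition_main_ctsX} into cumulative form. Write $S(x \mid u) = \Prob(X > x \mid U = u)$. Using $U \sim \text{Unif}[0,1]$ and the tower property, $\Prob(X > x,\, U \le \tau) = \int_0^\tau S(x \mid u)\, du$ and $\Prob(X > x) = \int_0^1 S(x \mid u)\, du$, and for $0 \le t_1 < t_2 \le 1$ one has $\Exp\!\big(S(x \mid U) \mid U \in (t_1, t_2)\big) = (t_2 - t_1)^{-1} \int_{t_1}^{t_2} S(x \mid u)\, du$. Hence \eqref{eq:averageValueCondition_main_ctsX} is equivalent to $\int_{t_1}^{t_2} \big[ S(x \mid u) - \Prob(X > x) \big]\, du = 0$ for all $x \in \supp(X)$ and all $t_1 < t_2$ in $\mathcal{T} \cup \{0,1\}$. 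Setting $G_x(t) = \int_0^t \big[ S(x \mid u) - \Prob(X > x) \big]\, du$, we have $G_x(0) = G_x(1) = 0$ identically, so these interval conditions say exactly that $G_x$ is constant on $\mathcal{T} \cup \{0,1\}$; since $0$ lies in that set and $G_x(0) = 0$, this is the same as $G_x(\tau) = 0$ for all $\tau \in \mathcal{T}$, i.e.\ $\int_0^\tau S(x \mid u)\, du = \tau\, \Prob(X > x)$ for all such $\tau$ and $x$, which is precisely $(\dagger)$. Combining the two reductions gives the theorem.

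The point requiring care—more a bookkeeping matter than a real obstacle—is the quantifier ``for all $x \in \supp(X)$'', which appears both in \eqref{eq:averageValueCondition_main_ctsX} and in the conditional-cdf form of $\mathcal{T}$-independence, against the fact that $F_{U \mid X}(\tau \mid \cdot)$ and $S(x \mid \cdot)$ are determined only up to $\Prob$-null sets: in the converse direction of the definitional step, $(\dagger)$ yields $F_{U \mid X}(\tau \mid x) = F_U(\tau)$ only for $F_X$-a.e.\ $x$, via uniqueness of the Radon--Nikodym derivative, which is the intended reading. The remaining ingredients—the tower property linking joint probabilities to integrals of $S(x \mid \cdot)$, and the elementary reduction from ``all intervals with endpoints in $\mathcal{T} \cup \{0,1\}$'' to ``all points of $\mathcal{T}$'' once one observes that $G_x$ vanishes at $0$ and $1$—are routine.
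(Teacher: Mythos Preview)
Your argument is correct. The route differs from the paper's, which dispatches the theorem in one line: define $\widetilde{X} = \indicator(X > x)$ for each fixed $x$ and apply the discrete-treatment characterization (Theorem \ref{thm:AvgValueCharacterization_discreteX}) to the binary variable $\widetilde{X}$. That reduction works because $\mathcal{T}$-independence of $U$ from $X$ is equivalent to $\mathcal{T}$-independence of $U$ from $\indicator(X > x)$ for every $x$, and the average-value condition \eqref{eq:averageValueCondition_main_ctsX} is exactly the binary average-value condition applied to each such $\widetilde{X}$.

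Your proof is instead self-contained: you introduce the product condition $(\dagger)$ and show directly that both sides of the equivalence reduce to it. Conceptually this is what one obtains by unfolding the paper's reduction together with the proof of Theorem \ref{thm:AvgValueCharacterization} (or \ref{thm:AvgValueCharacterization_discreteX}), so the two arguments are close in substance. The paper's approach buys brevity and modularity, since the work has already been done in the binary case. Your approach buys independence from that earlier result, and the antiderivative device $G_x(t) = \int_0^t [S(x\mid u) - \Prob(X>x)]\,du$, together with the observation $G_x(0)=G_x(1)=0$, makes the passage from ``all intervals with endpoints in $\mathcal{T}\cup\{0,1\}$'' to ``all points of $\mathcal{T}$'' especially transparent---this reduction is implicit in the paper's proofs of Theorems \ref{thm:AvgValueCharacterization} and \ref{thm:AvgValueCharacterization_discreteX} (via the choice $t_1 = \underline{y}_x$ or $t_1 = 0$) but is not isolated as cleanly there.
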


The interpretation of equation \eqref{eq:averageValueCondition_main_ctsX} is similar to the binary $X$ case: $\mathcal{T}$-independence holds if and only if, for each possible level of treatment $x$, and for each interval with endpoints in $\mathcal{T} \cup \{ 0,1 \}$ the average value of the conditional probability of receiving treatment larger than $x$ given the unobservable equals the overall unconditional probability of receiving treatment larger than $x$. Notice that, by adding $-1$ to each side of equation \eqref{eq:averageValueCondition_main_ctsX}, this constraint can equivalently be seen as a constraint on the conditional cdf $F_{X \mid U}$.

As in the binary $X$ case, the constraint \eqref{eq:averageValueCondition_main_ctsX} imposes a non-monotonicity condition.

\begin{corollary}\label{corr:noRegDependence}
Suppose $X$ and $U$ are continuously distributed. Suppose there is some $x \in \supp(X)$ such that $\Prob(X > x \mid U=u)$ is weakly monotonic and not constant over $u \in \text{int}[\supp(U)]$. Then, for all $\tau \in \supp(U)$, $U$ is not $\tau$-cdf independent of $X$.
\end{corollary}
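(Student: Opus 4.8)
The plan is to mirror the proof of Corollary~\ref{corr:monotonicPropensityScores} from the binary case, with Theorem~\ref{thm:AvgValueCharacterization_ctsX} playing the role of Theorem~\ref{thm:AvgValueCharacterization}. Fix $\tau$ in the interior of $\supp(U)=[0,1]$ (the endpoints give vacuous cdf independence, so there is nothing to prove there), and suppose toward a contradiction that $U$ is $\tau$-cdf independent of $X$. Then $\mathcal{T}=\{\tau\}$ satisfies the hypotheses of Theorem~\ref{thm:AvgValueCharacterization_ctsX}, so the average-value identity \eqref{eq:averageValueCondition_main_ctsX} holds for every pair $t_1<t_2$ drawn from $\{\tau\}\cup\{0,1\}$ and every $x\in\supp(X)$. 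Specializing to the two intervals $(0,\tau)$ and $(\tau,1)$, and to the particular value $x$ furnished by the hypothesis, I would set $g(u):=\Prob(X>x\mid U=u)$ and read off
\[
	\Exp\big(g(U)\mid U\in(0,\tau)\big)=\Prob(X>x)=\Exp\big(g(U)\mid U\in(\tau,1)\big).
\]

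The next step is to convert this into a statement about interval averages of $g$ and contradict monotonicity. Since $U\sim\text{Unif}[0,1]$, the two conditional expectations equal $\tfrac1\tau\int_0^\tau g$ and $\tfrac1{1-\tau}\int_\tau^1 g$, so these coincide. Assume without loss of generality that $g$ is weakly increasing on $(0,1)$ (the weakly decreasing case is symmetric, reversing all inequalities). Then $g(u)\le g(\tau^-)$ for every $u\in(0,\tau)$ and $g(v)\ge g(\tau^+)\ge g(\tau^-)$ for every $v\in(\tau,1)$, which forces the chain
\[
	\tfrac1\tau\int_0^\tau g \;\le\; g(\tau^-) \;\le\; g(\tau^+) \;\le\; \tfrac1{1-\tau}\int_\tau^1 g .
\]
Equality of the two ends collapses the whole chain: $\tfrac1\tau\int_0^\tau g=g(\tau^-)$ together with $g\le g(\tau^-)$ on $(0,\tau)$ forces $g\equiv g(\tau^-)$ a.e.\ on $(0,\tau)$; symmetrically $g\equiv g(\tau^+)$ a.e.\ on $(\tau,1)$; and $g(\tau^-)=g(\tau^+)$. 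Hence $g$ is a.e.\ constant on $\text{int}[\supp(U)]$, contradicting the hypothesis that $\Prob(X>x\mid U=\cdot)$ is not constant there.

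The only point requiring care — more a bookkeeping matter than a genuine obstacle — is the precise reading of ``not constant'': since $\Prob(X>x\mid U=\cdot)$ is pinned down only up to a null set, this must mean ``not almost everywhere constant,'' and one must explicitly rule out the degenerate possibility that $g$ is an (a.e.) constant on $(0,\tau)$ and a \emph{different} (a.e.) constant on $(\tau,1)$; the strict chain above disposes of this, since two distinct constants would make the two interval averages unequal. A second small point worth flagging is that the hypothesis supplies only a single offending $x$, but that suffices: Theorem~\ref{thm:AvgValueCharacterization_ctsX} requires \eqref{eq:averageValueCondition_main_ctsX} for \emph{all} $x\in\supp(X)$, so one violation already precludes $\tau$-cdf independence. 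Everything else is the same averaging argument as in the binary-treatment corollary.
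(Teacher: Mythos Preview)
Your argument is correct, and the underlying idea---that monotonicity of $u\mapsto\Prob(X>x\mid U=u)$ forces the two interval averages $\Exp(g(U)\mid U\in(0,\tau))$ and $\Exp(g(U)\mid U\in(\tau,1))$ to differ unless $g$ is constant---is exactly the content of Corollary~\ref{corr:monotonicPropensityScores}. The paper, however, takes a shorter path: rather than invoking Theorem~\ref{thm:AvgValueCharacterization_ctsX} and reproducing the averaging argument, it simply sets $\widetilde{X}=\indicator(X>x)$ and applies Corollary~\ref{corr:monotonicPropensityScores} directly to the binary variable $\widetilde{X}$. This works because $\tau$-cdf independence of $U$ from $X$ implies $\tau$-cdf independence of $U$ from $\widetilde{X}$ (integrate $F_{U\mid X}(\tau\mid x')=F_U(\tau)$ over $\{x'>x\}$), and the hypothesis says precisely that the latent propensity score $\Prob(\widetilde{X}=1\mid U=u)=\Prob(X>x\mid U=u)$ is monotone and nonconstant. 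Your route goes through Theorem~\ref{thm:AvgValueCharacterization_ctsX}, whose own proof in the paper is itself this binary reduction, so you are effectively inlining both that theorem and Corollary~\ref{corr:monotonicPropensityScores} rather than citing the latter once. Nothing is wrong, but the reduction buys a one-line proof and avoids relying on the $U\sim\text{Unif}[0,1]$ normalization that you use to convert conditional expectations to Lebesgue averages (the binary-case corollary needs only A\ref{assn:continuity}.1).
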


For example, suppose $X$ is level of education, $x$ is completing college, and $U$ is ability. Then any nontrivial $\mathcal{T}$-independence condition implies that at some point increasing ability lowers the probability of getting more than a college education (or that it is constant in ability). 

The monotonicity condition of corollary \ref{corr:noRegDependence} dates back to \cite{Tukey1958} and \cite{Lehmann1966}, who give the following definition.

\begin{definition}\label{def:regdependence}
Say $X$ is \emph{positively [negatively] regression dependent} on $U$ if $\Prob(X > x \mid U=u)$ is weakly increasing [decreasing] in $u$, for all $x \in \R$. Say $X$ is \emph{regression dependent} on $U$ if it is either positively or negatively regression dependent on $U$.
\end{definition}
Regression dependence is also known as \emph{stochastic monotonicity}, since it is equivalent to the set of cdfs $\{ F_{X \mid U}(\cdot \mid u) : u \in \supp(U) \}$ being either increasing or decreasing in the first order stochastic dominance ordering. Thus corollary \ref{corr:noRegDependence} states that we cannot simultaneously have quantile independence of $U$ on $X$ and regression dependence of $X$ on $U$, except when $X \indep U$.

\cite{LehmannRomano2005} call positive regression dependence an ``intuitive meaning of positive dependence''. To support this claim, \cite{Lehmann1966} gave the following simple sufficient conditions for regression dependence: If one can write $X = \pi_0 + \pi_1 U + V$ where $\pi_0$ and $\pi_1$ are constants and $V$ is a random variable independent of $U$, then $X$ is regression dependent on $U$ if $\pi_1 \neq 0$. In particular, if $X$ and $U$ are jointly normally distributed then they are regression dependent so long as they have nonzero correlation. While these are special cases, theorem 5.2.10 on page 196 of \cite{Nelsen2006} provides a general characterization of regression dependence in terms of the copula between $X$ and $U$, when both variables are continuous. In particular, if $C_{X,U}(x,u)$ is the copula for $(X,U)$, $X$ is regression dependent on $U$ if and only if $C_{X,U}(x,\cdot)$ is concave for any $x \in [0,1]$.

Several papers in econometrics have previously used stochastic monotonicity assumptions for identification. \cite{BlundellEtAl2007} study the classic problem of identifying the distribution of potential wages, given that wages are only observed for workers. Following \cite{ManskiPepper2000}, they argue that stochastic monotonicity assumptions are often plausible. They specifically consider stochastic monotonicity of wages on labor force participation status, as well as stochastic monotonicity of wages on an instrument. They furthermore provide a detailed analysis of when stochastic monotonicity assumptions may not be plausible. 

Corollary \ref{corr:noRegDependence} shows that any assumption of $\mathcal{T}$-independence of $U$ on $X$ rules out stochastic monotonicity of $X$ on $U$. Thus, if one wants to allow for a class of deviations from independence which includes stochastically monotonic selection, assumptions of quantile independence of $U$ on $X$ should not be used. Conversely, if one makes a quantile independence assumption of $U$ on $X$, one should argue why stochastically non-monotonic selection models are the deviations of interest. We discuss these issues further in section \ref{sec:LatentSelectionModels}.

\subsection{Extension to Discrete Treatments}\label{sec:discreteX}

Next we consider discrete multi-valued $X$.

\begin{theorem}[Average value characterization]\label{thm:AvgValueCharacterization_discreteX}
Suppose $X$ is discrete. Suppose $U$ is continuously distributed; normalize $U \sim \text{Unif}[0,1]$. Then $U$ is $\mathcal{T}$-independent of $X$ if and only if
\begin{equation}\label{eq:averageValueCondition_main_discreteX}
	\Exp \big( \Prob(X=x \mid U)  \mid  U \in (t_1,t_2) \big) = \Prob(X=x)
	\qquad \text{for all $x \in \supp(X)$}
\end{equation}
for all $t_1, t_2 \in\mathcal{T} \cup \{ 0,1 \} $ with $t_1 < t_2$.
\end{theorem}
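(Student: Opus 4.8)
The strategy is to reduce $\mathcal{T}$-independence to a family of ``anchored'' average-value conditions---those with left endpoint $0$---and then recover the general intervals by additivity of the integral, mirroring exactly the proof of Theorem \ref{thm:AvgValueCharacterization}. First I would rewrite the $\tau$-cdf independence condition in terms of $U$: using the invertible increasing map between $Y(x')$ and $U$ together with the normalization $U\sim\text{Unif}[0,1]$, the condition at $\tau\in\mathcal{T}$ says $F_{U\mid X}(\tau\mid x)=\tau$ for every $x\in\supp(X)$ (this is the reformulation already noted at the start of the continuous-$X$ section). By Bayes' rule, for $\tau\in(0,1)$,
\[
F_{U\mid X}(\tau\mid x)=\Prob(U\le\tau\mid X=x)=\frac{\Prob(X=x\mid U\le\tau)\,\Prob(U\le\tau)}{\Prob(X=x)}=\frac{\Prob(X=x\mid U\le\tau)}{\Prob(X=x)}\,\tau,
\]
so $F_{U\mid X}(\tau\mid x)=\tau$ for all $x$ is equivalent to $\Prob(X=x\mid U\le\tau)=\Prob(X=x)$ for all $x$. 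Finally, since $U$ is continuous and uniform, $\Prob(X=x\mid U\le\tau)=\tfrac{1}{\tau}\Exp\big[\indicator(U\le\tau)\,\Prob(X=x\mid U)\big]=\Exp\big(\Prob(X=x\mid U)\mid U\in(0,\tau)\big)$. Hence $\mathcal{T}$-independence is equivalent to \eqref{eq:averageValueCondition_main_discreteX} holding for all intervals of the special form $(t_1,t_2)=(0,\tau)$ with $\tau\in\mathcal{T}$.

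Given this reduction, the ``if'' direction is immediate: assuming \eqref{eq:averageValueCondition_main_discreteX} for all $t_1<t_2$ in $\mathcal{T}\cup\{0,1\}$ and specializing to $t_1=0$, $t_2=\tau$ recovers $\tau$-cdf independence for every $\tau\in\mathcal{T}$. For the ``only if'' direction, set $g_x(t)=\int_0^t\Prob(X=x\mid U=u)\,du$. The anchored conditions say $g_x(\tau)=\tau\,\Prob(X=x)$ for every $\tau\in\mathcal{T}$; moreover this identity also holds trivially at $\tau=0$ (both sides $0$) and at $\tau=1$ (it is the law of total probability, $\Exp[\Prob(X=x\mid U)]=\Prob(X=x)$). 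Then for arbitrary $t_1<t_2$ in $\mathcal{T}\cup\{0,1\}$,
\[
\Exp\big(\Prob(X=x\mid U)\mid U\in(t_1,t_2)\big)=\frac{g_x(t_2)-g_x(t_1)}{t_2-t_1}=\frac{t_2\,\Prob(X=x)-t_1\,\Prob(X=x)}{t_2-t_1}=\Prob(X=x),
\]
for every $x\in\supp(X)$, which is \eqref{eq:averageValueCondition_main_discreteX}. The argument is pointwise in $x$, so it applies uniformly over the (countable) support of $X$ with no extra work, and boundedness of $\Prob(X=x\mid U)\in[0,1]$ makes all the integrals above finite, so no integrability hypothesis is needed.

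\textbf{Main obstacle.} There is no deep difficulty here---the content is essentially the same as Theorem \ref{thm:AvgValueCharacterization}, with $p(Y_x)=\Prob(X=1\mid Y_x)$ replaced by $\Prob(X=x\mid U)$. The only points requiring care are bookkeeping: fixing a version of $u\mapsto\Prob(X=x\mid U=u)$ and checking the displayed identities are insensitive to that choice (all conditioning events $\{U\in(t_1,t_2)\}$ have positive Lebesgue measure since $t_1<t_2$), handling the open-versus-closed interval issue (irrelevant because $U$ is continuous, so $\Prob(U=\tau)=0$), and being explicit that the reformulation $F_{U\mid X}(\tau\mid x)=\tau$ for all $x$ is genuinely equivalent to the cdf-independence condition defining $\mathcal{T}$-independence for non-binary $X$ (equality across values of $x$ forces equality with the marginal, by averaging, and conversely).
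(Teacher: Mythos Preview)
Your proposal is correct and uses essentially the same ingredients as the paper's proof: Bayes' rule, the law of iterated expectations to pass between $\Exp(\Prob(X=x\mid U)\mid U\in(t_1,t_2))$ and $\Prob(X=x\mid U\in(t_1,t_2))$, and the uniform normalization $\Prob(U\in(t_1,t_2))=t_2-t_1$. The only organizational difference is that in the $(\Rightarrow)$ direction the paper applies Bayes' rule directly to a general interval $(t_1,t_2)$ with $t_1,t_2\in\mathcal{T}\cup\{0,1\}$---computing $\Prob(X=x\mid U\in[t_1,t_2])=\tfrac{(F_{U\mid X}(t_2\mid x)-F_{U\mid X}(t_1\mid x))\Prob(X=x)}{t_2-t_1}=\Prob(X=x)$ in one step---whereas you first establish the anchored case $(0,\tau)$ and then extend to general intervals via additivity of $g_x$. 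Your route is slightly more roundabout but equally valid, and your explicit handling of the endpoints $0$ and $1$ (via $g_x(0)=0$ and $g_x(1)=\Prob(X=x)$ by total probability) is a clean way to see why the set $\mathcal{T}\cup\{0,1\}$ appears.
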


This result has a similar interpretation as our previous results for binary $X$ and continuous $X$. First, we have the following corollary.

\begin{corollary}\label{corr:discreteXnotMonotone}
Suppose $X$ is discrete. Suppose $U$ is continuously distributed. Suppose there is some $x \in \supp(X)$ such that $\Prob(X \geq x \mid U=u)$ is weakly monotonic and not constant over $u \in \text{int}[\supp(U)]$. Then, for all $\tau \in \supp(U)$, $U$ is not $\tau$-cdf independent of $X$.
\end{corollary}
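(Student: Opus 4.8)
The plan is to argue by contradiction using the average value characterization in Theorem~\ref{thm:AvgValueCharacterization_discreteX}. Fix $\tau\in(0,1)$ (recall that under the standing normalization $\supp(U)=[0,1]$, and note that at the endpoints $\tau$-cdf independence holds trivially, so only interior $\tau$ carries content). Suppose, contrary to the claim, that $U$ is $\tau$-cdf independent of $X$, i.e.\ that $U$ is $\mathcal{T}$-independent of $X$ with $\mathcal{T}=\{\tau\}$. Then $\mathcal{T}\cup\{0,1\}=\{0,\tau,1\}$, and applying Theorem~\ref{thm:AvgValueCharacterization_discreteX} to the intervals $(0,\tau)$ and $(\tau,1)$ gives, for every $x'\in\supp(X)$,
\[
	\Exp\big(\Prob(X=x'\mid U)\mid U\in(0,\tau)\big)=\Prob(X=x')=\Exp\big(\Prob(X=x'\mid U)\mid U\in(\tau,1)\big).
\]

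Next I would fix the value $x\in\supp(X)$ supplied by the hypothesis and sum these equalities over all $x'\in\supp(X)$ with $x'\geq x$. Conditioning on a positive-probability event reduces each side to an ordinary expectation, and Tonelli's theorem (the summands being nonnegative) permits interchanging the at most countable sum with the expectation, using $\sum_{x'\geq x}\Prob(X=x'\mid U=u)=\Prob(X\geq x\mid U=u)$. Writing $g(u)=\Prob(X\geq x\mid U=u)$ for the weakly monotone version posited in the hypothesis, this yields $\Exp(g(U)\mid U\in(0,\tau))=\Prob(X\geq x)=\Exp(g(U)\mid U\in(\tau,1))$. Since $U\sim\text{Unif}[0,1]$, the conditional law of $U$ given $U\in(0,\tau)$ is uniform on $(0,\tau)$ and given $U\in(\tau,1)$ is uniform on $(\tau,1)$, so these two conditions say precisely that
\[
	\frac{1}{\tau}\int_0^\tau g(u)\,du=\frac{1}{1-\tau}\int_\tau^1 g(u)\,du.
\]

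The final and key step is a lemma on weakly monotone functions: if $g:(0,1)\to[0,1]$ is weakly monotone and the two averages above coincide, then $g$ is constant almost everywhere on $(0,1)$. I would prove this for weakly increasing $g$ (the decreasing case being symmetric) by setting $L=\sup_{u<\tau}g(u)$ and $R=\inf_{u>\tau}g(u)$; monotonicity gives $L\leq R$, while $\frac{1}{\tau}\int_0^\tau g\leq L$ and $\frac{1}{1-\tau}\int_\tau^1 g\geq R$. Equality of the two averages then forces $L=R$, equal to their common value $c$; hence $g\leq c$ a.e.\ on $(0,\tau)$ with average $c$, which forces $g=c$ a.e.\ on $(0,\tau)$, and symmetrically $g=c$ a.e.\ on $(\tau,1)$. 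This contradicts the hypothesis that $\Prob(X\geq x\mid U=\cdot)$ admits a weakly monotone version that is not constant on $\operatorname{int}[\supp(U)]=(0,1)$, completing the proof.

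I expect the main obstacle to be this monotone-function lemma — specifically, accommodating a possible jump of $g$ at $\tau$ via the one-sided limits $L$ and $R$, and making sure that ``not constant'' for the conditional probability $\Prob(X\geq x\mid U=\cdot)$, which is only defined up to null sets, is genuinely contradicted by ``constant almost everywhere.'' The remaining ingredients (the contradiction setup, the reduction to the two intervals, and the summation over $x'\geq x$) are routine.
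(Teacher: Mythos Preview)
Your proof is correct and follows essentially the same idea as the paper's, which simply defines $\widetilde{X}=\indicator(X\geq x)$ and applies Corollary~\ref{corr:monotonicPropensityScores} directly to the binary pair $(\widetilde{X},U)$. Your summation over $x'\geq x$ is precisely this reduction to the binary case, and your monotone-function lemma reproduces the argument already given in the proof of Corollary~\ref{corr:monotonicPropensityScores}; the paper's route is more economical but conceptually identical.
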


The interpretation is analogous to corollary \ref{corr:noRegDependence}. Second, all of the interpretations given in section \ref{sec:characterizationSection} apply to the probabilities $\Prob(X = x \mid U=u)$ for $x \in \supp(X)$. In particular, these conditional probabilities must be non-monotonic. For example, suppose $\supp(X) = \{ x_1,\ldots,x_K \}$ is finite. This non-monotonicity result is primarily relevant for the lowest treatment level ($x=x_1$) and the highest treatment level ($x=x_K$), since non-monotonicity of the middle probabilities would be implied, for example, by a simple ordered threshold crossing model, like $X = x_k$ if $\alpha_k \leq U \leq \alpha_{k+1}$ for constants $\alpha_k \leq \alpha_{k+1}$, $k \in \{ 1,\ldots,K \}$.

\subsection{Generalizing $\mathcal{U}$-independence}

Finally, the following result extends corollary \ref{prop:TauImpliesU} to allow $X$ to be discrete multi-valued or continuous.

\begin{corollary}\label{prop:TauImpliesU_ctsX}
Suppose $U$ is continuously distributed; normalize $U \sim \text{Unif}[0,1]$. Let $\mathcal{T} = [a,b]\subseteq [0,1]$. Then $\mathcal{T}$-independence of $U$ from $X$ implies $F_{X \mid U}(x \mid u) = F_X(x)$ for all $x \in \R$ and almost all $u \in \mathcal{T}$.
\end{corollary}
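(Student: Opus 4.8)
The plan is to fix each treatment value separately, reduce the claim to the one-dimensional fact that a locally integrable function whose average over \emph{every} subinterval of $[a,b]$ equals a fixed constant must equal that constant almost everywhere on $[a,b]$, and then glue the resulting null sets together across treatment values. I would split into the continuous-$X$ case, handled via Theorem~\ref{thm:AvgValueCharacterization_ctsX}, and the discrete multi-valued-$X$ case, handled via Theorem~\ref{thm:AvgValueCharacterization_discreteX}; when $X$ is binary this just recovers Corollary~\ref{prop:TauImpliesU}.

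Take $X$ continuous first. Fix $x \in \supp(X)$ and set $g_x(u) = \Prob(X > x \mid U=u)$, a bounded (hence integrable) function on $[0,1]$. Since $U \sim \text{Unif}[0,1]$, for any $t_1 < t_2$ in $\mathcal{T} = [a,b]$ we have $\Exp\big(\Prob(X>x\mid U) \mid U \in (t_1,t_2)\big) = \frac{1}{t_2-t_1}\int_{t_1}^{t_2} g_x(u)\,\diff u$. Applying the average-value characterization of Theorem~\ref{thm:AvgValueCharacterization_ctsX} to the pairs of endpoints lying inside $[a,b]$ (a subcollection of $\mathcal{T} \cup \{0,1\}$) gives $\int_{t_1}^{t_2}\big(g_x(u) - \Prob(X>x)\big)\,\diff u = 0$ for all $a \le t_1 < t_2 \le b$. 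Hence the indefinite integral $t \mapsto \int_a^t \big(g_x(u)-\Prob(X>x)\big)\,\diff u$ is identically zero on $[a,b]$, so by the Lebesgue differentiation theorem its a.e.\ derivative vanishes, i.e.\ $g_x(u) = \Prob(X>x)$, equivalently $F_{X\mid U}(x\mid u) = F_X(x)$, for almost every $u \in [a,b]$. Call the exceptional Lebesgue-null set $N_x$.

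To upgrade ``for each $x$, a.e.\ $u$'' into ``a.e.\ $u$, for every $x$'', I would fix a countable dense set $D \subseteq \R$ and put $N = \bigcup_{q \in D} N_q$, still Lebesgue-null. For every $u \in [a,b]\setminus N$ we have $F_{X\mid U}(q\mid u) = F_X(q)$ for all $q \in D$; since $F_{X\mid U}(\cdot\mid u)$ and $F_X$ are nondecreasing and right-continuous, taking $q \downarrow x$ through $D$ extends the equality to all $x \in \R$, which is the assertion. The discrete case is identical with $\Prob(X=x\mid U=u)$ in place of $\Prob(X>x\mid U=u)$ and Theorem~\ref{thm:AvgValueCharacterization_discreteX} in place of Theorem~\ref{thm:AvgValueCharacterization_ctsX}: one obtains $\Prob(X=x\mid U=u) = \Prob(X=x)$ for a.e.\ $u\in[a,b]$ and each atom $x$, intersects the (countably many) null sets over $\supp(X)$, and concludes that off a null set the conditional law of $X$ given $U=u$ equals its marginal, hence $F_{X\mid U}(\cdot\mid u) = F_X(\cdot)$. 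The only genuine step is the passage from the integral identity to the pointwise-a.e.\ identity (Lebesgue differentiation) and the routine null-set bookkeeping needed to handle all $x$ simultaneously; there is no substantive obstacle, since the average-value characterizations carry the entire argument.
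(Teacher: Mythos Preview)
Your proof is correct, but it takes a different route from the paper's. The paper dispatches the corollary in a single line: for each $x \in \R$, set $\widetilde{X} = \indicator(X \leq x)$, observe that $\mathcal{T}$-independence of $U$ from $X$ implies $\mathcal{T}$-independence of $U$ from the binary variable $\widetilde{X}$, and then apply the already-proved binary case (Corollary~\ref{prop:TauImpliesU}). This reduction handles continuous and discrete $X$ uniformly and avoids repeating any analysis. You instead work directly from the average-value characterizations (Theorems~\ref{thm:AvgValueCharacterization_ctsX} and~\ref{thm:AvgValueCharacterization_discreteX}), invoke Lebesgue differentiation to pass from the interval-average identity to the pointwise-a.e.\ identity, and then use a countable dense set plus right-continuity to upgrade ``for each $x$, a.e.\ $u$'' to ``a.e.\ $u$, for all $x$''. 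The paper's approach is more economical because the substantive work already lives in the proof of Corollary~\ref{prop:TauImpliesU}; your approach is more self-contained and, notably, is explicit about the quantifier exchange, which the paper's one-line reduction leaves implicit.
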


Although we focused on binary $X$ in section \ref{sec:treatmentEffectBounds}, this corollary suggests that we can generalize our definition of $\mathcal{U}$-independence to allow multi-valued or continuous treatments by specifying $F_{X \mid U}(x \mid u) = F_X(x)$ for all $x \in \R$ and almost all $u \in \mathcal{U}$.

\section{Definitions of the Bound Functions}\label{appendix:cdfbounds}

In this appendix we provide the precise functional forms for the cdf bounds of proposition \ref{prop:TauCDFbounds} and the conditional mean bounds of corollary \ref{corr:ATT_Y0bounds}.

\subsection*{The cdf bounds}

The $\mathcal{T}$-independence bounds are as follows:
\[
	\overline{F}_{U \mid X}^\mathcal{T}(u \mid x)
		=
		\begin{cases}
			\dfrac{F_U(u)}{p_x}
				&\text{if $u \leq  Q_U(p_x F_U(a))$} \\
			F_U(a)
				&\text{if $ Q_U(p_x F_U(a)) \leq u \leq a$} \\
			F_U(u)
				& \text{if $a \leq u \leq b$} \\
			\dfrac{F_U(u) - F_U(b)}{p_x} + F_U(b)
				&\text{if $b \leq u \leq Q_{U}(p_x + F_U(b)(1-p_x))$} \\
			1
				&\text{if $Q_{U}(p_x + F_U(b)(1-p_x))  \leq u$}
		\end{cases}
\]
and
\[
	\underline{F}_{U \mid X}^\mathcal{T}(u \mid x)
		=
		\begin{cases}
			0
				&\text{if $ u \leq  Q_U((1-p_x) F_U(a))$} \\
			\dfrac{F_U(u)-F_U(a)}{p_x} + F_U(a)
				&\text{if $ Q_U((1-p_x) F_U(a)) \leq u \leq a$}\\
			F_U(u)
				& \text{if $ a \leq u \leq b$}\\
			F_U(b)
				&\text{if $b \leq u \leq Q_U(p_x F_U(b) + (1-p_x)) $} \\
			\dfrac{F_U(u)-1}{p_x} + 1
				&\text{if $Q_U(p_x F_U(b) + (1-p_x))  \leq u$}.
		\end{cases}
\]
For $\mathcal{U}$-independence, first consider the lower bound. There are two separate cases. First, if $(1-(F_U(b)-F_U(a)))(1-p_x) \leq F_U(a)$,
\begin{multline*}
	\underline{F}_{U \mid X}^\mathcal{U}(u \mid x)
	= \\
	\begin{cases}
		0 
			&\text{ for $u \leq Q_U((1-(F_U(b)-F_U(a)))(1-p_x))$} \\
		\dfrac{F_U(u) - (1-(F_U(b)-F_U(a)))(1-p_x)}{p_x}
			&\text{ for $u\in[Q_U((1-(F_U(b)-F_U(a)))(1-p_x)),a]$} \\
		\dfrac{(F_U(b)-1)(1-p_x) }{p_x} + F_U(u)
			&\text{ for $u \in [a,b]$} \\
		\dfrac{F_U(u)-1}{p_x} +1
			&\text{ for $u \geq b$}.
	\end{cases}
\end{multline*}
Second, if $(1-(F_U(b)-F_U(a)))(1-p_x) \geq F_U(a)$,
\[
	\underline{F}_{U \mid X}^\mathcal{U}(u \mid x) =
	\begin{cases}
		0
			&\text{ for $u\leq a$} \\
		F_U(u)-F_U(a)
			&\text{ for $u\in[a,b]$} \\
		F_U(b)-F_U(a)
			&\text{ for $u\in[b,Q_U(p_x(F_U(b)-F_U(a)) + 1-p_x)]$} \\
		\dfrac{F_U(u)-1}{p_x} +1
			&\text{ for $u \geq Q_U(p_x(F_U(b)-F_U(a)) + 1-p_x)$.}
	\end{cases}
\]
Next consider the upper bound. Again, there are two separate cases. First, if $(1-(F_U(b)-F_U(a)))p_x \leq F_U(a)$,
\[
	\overline{F}_{U \mid X}^\mathcal{U}(u \mid x) =
	\begin{cases}
		\dfrac{F_U(u)}{p_x}
			&\text{ for $u \leq Q_U((1-(F_U(b) - F_U(a)))p_x)$} \\
		1-(F_U(b)-F_U(a))
			&\text{ for $u \in [Q_U((1-(F_U(b) - F_U(a)))p_x),a]$} \\
		1-(F_U(b)-F_U(u))
			&\text{ for $u \in [a,b]$} \\
		1
			&\text{ for $u \geq b$.}
	\end{cases}
\]
Second, if $(1-(F_U(b)-F_U(a)))p_x \geq F_U(a)$,
\begin{multline*}
	\overline{F}_{U \mid X}^\mathcal{U}(u \mid x) = \\
	\begin{cases}
		\dfrac{F_U(u)}{p_x}
			&\text{ for $u \leq a$} \\
		\dfrac{F_U(a)}{p_x} + F_U(u)-F_U(a)
			&\text{ for $u \in [a,b]$} \\
		\dfrac{(F_U(b)-F_U(a))(p_x-1) +F_U(u)}{p_x}
			&\text{ for $u \in [b,Q_U((F_U(b)-F_U(a))(1-p_x)+ p_x)]$} \\
		1
			&\text{ for $u \geq Q_U((F_U(b)-F_U(a))(1-p_x)+ p_x)$.}
	\end{cases}
\end{multline*}

\subsection*{The conditional mean bounds}

By integrating the quantile bounds as in the statement of corollary \ref{corr:ATT_Y0bounds}, we obtain the bounds on $\Exp(Y_0 \mid X=1)$. We provide the explicit form of these bounds but omit the derivations for brevity. For $\mathcal{T}$-independence,
\[
	\overline{\Exp}^\mathcal{T}(Y_0 \mid X=1) =
	a Q_{Y \mid X}(a \mid 0) + \int_a^b Q_{Y \mid X}(\tau \mid 0) \; d\tau + (1-b)Q_{Y \mid X}(1 \mid 0)
\]
and
\[
	\underline{\Exp}^\mathcal{T}(Y_0 \mid X=1) =
	a Q_{Y \mid X}(0 \mid 0) + \int_a^b Q_{Y \mid X}(\tau \mid 0) \; d\tau + (1-b)Q_{Y \mid X}(b \mid 0).
\]
For $\mathcal{U}$-independence, first consider the lower bound. There are two cases. If $(1-(b-a))p_1 \leq a$,
\[
	\underline{\Exp}^\mathcal{U}(Y_0 \mid X=1)
	= (1-(b-a))Q_{Y \mid X}(0 \mid 0)
	+ \int_{1-(b-a)	+ \frac{b-1}{p_0}}^{1 + \frac{b-1}{p_0}} Q_{Y \mid X}(\tau \mid 0) \; d\tau.
\]
If $(1-(b-a))p_1 \geq a$,
\[
	\underline{\Exp}^\mathcal{U}(Y_0 \mid X=1) = \frac{a}{p_1} Q_{Y \mid X}(0 \mid 0)
	+ \int_0^{b-a}Q_{Y \mid X}(\tau \mid 0) \; d\tau
	+ \left(1-(b-a) - \frac{a}{p_1} \right) Q_{Y \mid X}(b-a \mid 0).
\]
Next consider the upper bound. If $(1-(b-a))p_0 \leq a$,
\begin{multline*}
	\overline{\Exp}^\mathcal{U}(Y_0 \mid X=1)
	= \\
	Q_{Y \mid X}(1-(b-a) \mid 0) \left( 1-(b-a) - \frac{1-b}{p_1} \right) + \int_{1-(b-a)}^1Q_{Y \mid X}(\tau\mid 0) \; d\tau + Q_{Y \mid X}(1 \mid 0)\frac{1-b}{p_1}.
\end{multline*}
If $(1-(b-a))p_0 \geq a$,
\[
	\overline{\Exp}^\mathcal{U}(Y_0 \mid X=1)
	= \int_{\frac{a}{p_0}}^{b-a + \frac{a}{p_0}}Q_{Y \mid X}(\tau\mid 0) \; d\tau + (1-(b-a))Q_{Y \mid X}(1 \mid 0).
\]

\section{Proofs}\label{sec:proofs}

We use this simple lemma in some of our proofs.

\begin{lemma}\label{lemma:continuity}
Suppose $U$ is continuously distributed. Suppose $X$ is discrete. Then $F_{U \mid X}(\cdot \mid x)$ is a continuous function for all $x \in \supp(X)$.
\end{lemma}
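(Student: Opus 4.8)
The plan is to reduce everything to the continuity of the marginal cdf $F_U$. Since $X$ is discrete, $\Prob(X=x)>0$ for every $x\in\supp(X)$, so by Bayes' rule the conditional cdf can be written as
\[
	F_{U \mid X}(u \mid x) = \frac{\Prob(U \leq u,\, X=x)}{\Prob(X=x)},
\]
and it therefore suffices to show that the numerator $g_x(u) \equiv \Prob(U \leq u,\, X=x)$ is a continuous function of $u$; dividing by the positive constant $\Prob(X=x)$ then finishes the argument.

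First I would observe that $g_x$ is nondecreasing and, being the "distribution function" of the finite measure $B \mapsto \Prob(U\in B,\, X=x)$, is right-continuous by continuity from above. So the only thing left is to rule out downward jumps, i.e.\ atoms. For any $u$ and any $u'<u$,
\[
	g_x(u) - g_x(u') = \Prob(u' < U \leq u,\, X=x) \leq \Prob(u' < U \leq u) = F_U(u) - F_U(u'),
\]
and letting $u'\uparrow u$ the right-hand side tends to $0$, because $U$ being continuously distributed means $F_U$ is continuous. Hence $g_x$ has no jump at $u$; equivalently, $\Prob(U=u,\, X=x)\leq\Prob(U=u)=0$. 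Combining right-continuity with the absence of jumps shows that $g_x$, and therefore $F_{U\mid X}(\cdot\mid x)$, is continuous.

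I do not expect a genuine obstacle here: the lemma is essentially the statement that a component of an atomless distribution is itself atomless, and the one-line domination $\Prob(u' < U \leq u,\, X=x)\leq F_U(u)-F_U(u')$ is the whole content. The only point worth stating carefully is that one must verify both the (automatic) right-continuity of this cdf-type function and the vanishing of its left limits' defect, rather than simply asserting "no jumps."
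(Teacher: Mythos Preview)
Your proof is correct and rests on the same core observation as the paper's: any atom of $U\mid X=x$ would force an atom of $U$, which is ruled out by continuity of $F_U$. The paper argues by contradiction via the law of total probability, $\Prob(U=u^*)=\sum_x \Prob(U=u^*\mid X=x)\Prob(X=x)\geq \Prob(U=u^*\mid X=x^*)\Prob(X=x^*)>0$, whereas you work directly with the joint cdf and the domination $g_x(u)-g_x(u')\leq F_U(u)-F_U(u')$; these are two phrasings of the same inequality $\Prob(U=u,\,X=x)\leq \Prob(U=u)=0$, so the approaches are essentially identical.
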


\begin{proof}[Proof of lemma \ref{lemma:continuity}]
Suppose by way of contradiction that there is some $x^* \in \supp(X)$ such that $F_{U \mid X}(\cdot \mid x^*)$ is not continuous at some point $u^*$. Since cdfs are right-continuous, we must have $\lim_{ u \nearrow u^*} F_{U \mid X}(u \mid x^*) < F_{U \mid X}(u^* \mid x^*)$. This implies $\Prob(U = u^* \mid X = x^*) > 0$. Therefore
\begin{align*}
	0	&= \Prob(U = u^*)
			&\text{by $U$ continuously distributed} \\
		&= \sum_{x \in \supp(X)} \Prob( U = u^* \mid X = x) \Prob(X=x)
			&\text{by the law of total probability} \\
		&\geq \Prob( U = u^* \mid X = x^*) \Prob(X=x^*) \\
		&> 0.
\end{align*}
This is a contradiction.
\end{proof}

\subsection{Proofs for section \ref{sec:characterizationSection}}

\begin{proof}[Proof of theorem \ref{thm:AvgValueCharacterization}]
By the law of iterated expectations, this is equivalent to showing that $\Prob(X=1 \mid Y_x \in (t_1,t_2)) = \Prob(X=1)$ for all  $t_1, t_2 \in\mathcal{T} \cup \{ \underline{y}_x, \overline{y}_x \}$ with $t_1 < t_2$.

\begin{itemize}
\item[($\Rightarrow$)] Suppose $Y_x$ is $\mathcal{T}$-independent of $X$. Let $t_1, t_2 \in\mathcal{T} \cup \{ \underline{y}_x, \overline{y}_x \}$ with $t_1 < t_2$. Then, 
\begin{align*}
	\Prob(X=1 \mid Y_x \in (t_1,t_2))
		&= \frac{\Prob(Y_x \in (t_1,t_2) \mid X=1) \Prob(X=1)}{\Prob(Y_x \in (t_1,t_2))} \\
		&= \frac{(\Prob(Y_x < t_2 \mid X=1) - \Prob(Y_x \leq t_1 \mid X=1)) \Prob(X=x)}{\Prob(Y_x < t_2) - \Prob(Y_x \leq t_1)} \\
    		&= \frac{(\Prob(Y_x \leq t_2 \mid X=1) - \Prob(Y_x \leq t_1 \mid X=1)) \Prob(X=x)}{\Prob(Y_x\leq t_2) - \Prob(Y_x \leq t_1)} \\
    		&= \Prob(X=1).
\end{align*}
The third equality follows since $Y_x \mid X$ is continuously distributed, which itself follows by $X$ being discretely distributed and lemma \ref{lemma:continuity}. The fifth line follows from $\mathcal{T}$-independence, which is equivalent to $\Prob(Y_x \leq t \mid X=1) = \Prob(Y_x \leq t)$ for $t \in \mathcal{T}$.

\item[($\Leftarrow$)] Suppose that,
\[
	\Prob(X=1 \mid Y_x \in (t_1,t_2)) = \Prob(X=1)
\]
for all $t_1, t_2 \in\mathcal{T} \cup \{ \underline{y}_x, \overline{y}_x \}$ with $t_1 < t_2$. Then,
\begin{align*}
	\Prob(Y_x \in (t_1,t_2) \mid X=1)
		&= \frac{\Prob(X=1 \mid Y_x \in (t_1,t_2)) \Prob(Y_x \in (t_1,t_2))}{\Prob(X=1)} \\
		&= \frac{\Prob(X=1) \Prob(Y_x \in (t_1,t_2))}{\Prob(X=1)} \\
    		&= \Prob(Y_x \in (t_1,t_2)).
\end{align*}
The second line follows by assumption. Let $t_1 = \underline{y}_x$. Then, using lemma \ref{lemma:continuity}, $F_{Y_x \mid X}(t_2 \mid 1) = \Prob(Y_x \in (\underline{y}_x,t_2) \mid X=1) = \Prob(Y_x \in (\underline{y}_x,t_2)) = F_{Y_x}(t_2)$. Thus $\mathcal{T}$-cdf independence holds.
\end{itemize}
\end{proof}

\begin{proof}[Proof of corollary \ref{corr:monotonicPropensityScores}]
Without loss of generality, suppose $p$ is weakly increasing. Fix $y \in (\underline{y}_x,\overline{y}_x)$. Then
\begin{align*}
	\Exp(p(Y_x) \mid Y_x \in (\underline{y}_x, y)) &\leq p(y)
\end{align*}
by monotonicity of $p$. Similarly, 
\begin{align*}
	\Exp(p(Y_x) \mid Y_x \in (y,\overline{y}_x)) &\geq p(y).
\end{align*}
Therefore 
\[
	\Exp(p(Y_x) \mid Y_x \in (\underline{y}_x, y)) \leq p(y) \leq \Exp(p(Y_x) \mid Y_x \in (y,\overline{y}_x)).
\]
Suppose these hold with equality. Then
\[
	\Exp(p(Y_x) - p(y) \mid Y_x \in (\underline{y}_x, y)) = 0,
\]
which implies that $p(y_x) = p(y)$ for all $y_x < y$ by $p(y_x) - p(y) \leq 0$ for $y_x < y$. Likewise, we would also have that
\[
	\Exp(p(Y_x) - p(y) \mid Y_x \in (y,\overline{y}_x)) = 0,
\]
which implies that $p(y_x) = p(y)$ for all $y_x > y$ by $p(y_x) - p(y) \geq 0$ for $y_x > y$. 

Therefore $p(y_x) = p(y)$ for all $y_x$, which contradicts our assumption that $p(y_x)$ is not constant. Hence
\[
	\Exp(p(Y_x) \mid Y_x \in (\underline{y}_x, y)) < \Exp(p(Y_x) \mid Y_x \in (y,\overline{y}_x)).
\]
Because $y$ was arbitrary, we see that $y$-cdf independence cannot hold for any $y \in (\underline{y}_x,\overline{y}_x)$, by theorem \ref{thm:AvgValueCharacterization}.
\end{proof}

\begin{proof}[Proof of corollary \ref{corr:SignChanges}]
For each interval $\mathcal{Y}_k$, we repeat the argument of corollary \ref{corr:monotonicPropensityScores}, conditional on $Y_x \in \mathcal{Y}_k$, noting that a nontrivial $\tau$-cdf independence condition will still hold conditional on $Y_x \in \mathcal{Y}_k$.
\end{proof}

\begin{proof}[Proof of corollary \ref{corr:TauIndep_ExtremeValues}]
Let $[a,b] \subseteq [\underline{y}_x,\overline{y}_x] \setminus \mathcal{T}$ with $a < b$. Consider the propensity score
\[
	p(y_x)
	=
	\begin{cases}
		1 &\text{if $y_x \in [a,Q_{Y_x}(F_{Y_x}(a) + \Prob(X=1)\Prob(Y_x \in (a,b))))$} \\
		0 &\text{if $y_x \in [Q_{Y_x}(F_{Y_x}(a) + \Prob(X=1)\Prob(Y_x \in (a,b))),b]$} \\
        \Prob(X=1) &\text{if $y_x \notin [a,b]$}.
	\end{cases}
\]
By assumption A\ref{assn:continuity}.\ref{A1_4}, $\Prob(X=1) \in (0,1)$. This implies that $Q_{Y_x}(F_{Y_x}(a) + \Prob(X=1)\Prob(Y_x \in (a,b))) \in (a,b)$. Therefore, $p$ attains the values 0 and 1 over intervals which have positive Lebesgue measure. Also, by construction we have 
\begin{align*}
	\Exp(p(Y_x) \mid Y_x \in (a,b)) &= \frac{\int_a^b p(y_x) \; dF_{Y_x}(y_x)}{\Prob(Y_x \in (a,b))}\\
	&= \frac{\Prob(Y_x \in [a,Q_{Y_x}(F_{Y_x}(a) + \Prob(X=1)\Prob(Y_x \in (a,b)))))}{\Prob(Y_x \in (a,b))}\\
	&= \frac{F_{Y_x}(a) + \Prob(X=1)\Prob(Y_x \in (a,b)) - F_{Y_x}(a)}{\Prob(Y_x \in (a,b))}\\
	&= \Prob(X = 1).
\end{align*}

Next we show that $\mathcal{T}$-independence holds. Let $t_1$ and $t_2$ be any two values in $\mathcal{T} \cup \{\underline{y}_x, \overline{y}_x \}$ such that $t_1 < t_2$. Then
\[
	\Exp(p(Y_x) \mid Y_x \in (t_1,t_2)) = \Exp(\Prob(X=1) \mid Y_x \in (t_1,t_2)) = \Prob(X=1)
\]
if $t_1 < t_2 < a$ or $b < t_1 < t_2$, by the definition of $p(y_x)$: It equals $\Prob(X=1)$ for values $y_x \notin [a,b]$. Now suppose $t_1<a<b<t_2$. Then
\begin{align*}
	&\Exp(p(Y_x) \mid Y_x \in (t_1,t_2))\\[0.3em]
		&= \Exp(p(Y_x) \mid Y_x \in (t_1,a)) \Prob(Y_x \in (t_1,a) \mid Y_x \in (t_1,t_2)) \\[0.3em]
		&\qquad + \Exp(p(Y_x) \mid Y_x \in (a,b)) \Prob(Y_x \in (a,b) \mid Y_x \in (t_1,t_2)) \\[0.3em]
		&\qquad + \Exp(p(Y_x) \mid Y_x \in (b,t_2)) \Prob(Y_x \in (b,t_2) \mid Y_x \in (t_1,t_2)) \\[0.3em]
		&= \Prob(X=1)\Big(\Prob(Y_x \in (t_1,a) \mid Y_x \in (t_1,t_2)) \\
		&\qquad\qquad\qquad + \Prob(Y_x \in (a,b) \mid Y_x \in (t_1,t_2)) + \Prob(Y_x \in (b,t_2) \mid Y_x \in (t_1,t_2))\Big) \\
		&= \Prob(X=1).
\end{align*}
The first equality follows by iterated expectations. The second equality follows again by the definition of $p(y_x)$ for values $y_x \notin [a,b]$, and also by our derivation above that $\Exp( p(Y_x) \mid Y_x \in (a,b)) = \Prob(X=1)$.

This covers all cases for $t_1$ and $t_2$, thus $\mathcal{T}$-independence holds by theorem \ref{thm:AvgValueCharacterization}.
\end{proof}

\begin{proof}[Proof of proposition \ref{prop:mean-indep prop scores}]
We have
\begin{align*}
	\cov(Y_x,p(Y_x))
		&= \Exp[(Y_x-\Exp(Y_x)) p(Y_x)] \\
		&= \Exp[(Y_x-\Exp(Y_x))(p(Y_x) - p(\Exp(Y_x)))].
\end{align*}
Without loss of generality, suppose $p(y_x)$ is non-decreasing on $\supp(Y_x)$. Therefore $y_x \gtreqless \Exp(Y_x)$ implies $p(y_x) \gtreqless p(\Exp(Y_x))$ and so
\[
	(Y_x-\Exp(Y_x))(p(Y_x) - p(\Exp(Y_x))) \geq 0
\]
with probability one. Moreover, equality holds with probability equal to $\Prob[p(Y_x) = p(\Exp(Y_x))]$. Since $p$ is non-constant and non-decreasing, the probability that $p(Y_x)$ is equal to a constant is strictly less than one. Hence $\cov(Y_x,p(Y_x)) > 0$. 

Next,
\begin{align*}
	\Exp(Y_x \mid X=1) &= \frac{\Exp(Y_xX)}{\Prob(X=1)}\\
	&= \frac{\Exp[Y_x p(Y_x)]}{\Prob(X=1)}\\
	&> \frac{\Exp(Y_x) \Prob(X=1)}{\Prob(X=1)}\\
	&= \Exp(Y_x).
\end{align*}
The first line follows by iterated expectations on $X$. The second line follows by iterated expectations on $Y_x$. The third by $\cov(Y_x,p(Y_x)) > 0$. Thus we have shown that mean independence does not hold.
\end{proof}

\subsection{Proofs for section \ref{sec:treatmentEffects}}

\begin{proof}[Proof of corollary \ref{prop:TauImpliesU}]
If $a = b$ the result holds trivially since $\mathcal{T}$ has measure zero. So suppose $a<b$ and fix $x \in \{0,1\}$. By the properties of conditional probabilities,
\[
	\Prob(Y_0\leq y_0 \mid X=x,Y_0 \in [a,b])
	= \frac{\Prob(Y_0 \in  (\underline{y}_0,y_0] \cap [a,b] \mid X=x)}{\Prob(Y_0\in [a,b] \mid X=x)}.
\]
If $y_0 > b$ this fraction equals 1. If $y_0 < a$ the numerator is zero. In either case, $\Prob(Y_0\leq y_0 \mid X=x,Y_0\in[a,b])$ does not depend on $x$ and hence $Y_0\indep X \mid \{ Y_0 \in \mathcal{T} \}$. 

If $y_0\in[a,b]$, $(\underline{y}_0,y_0] \cap [a,b] = [a,y_0]$. Hence
\begin{align*}
	\Prob(Y_0\leq y_0 \mid X=x,Y_0\in[a,b])
		&= \frac{F_{Y_0 \mid X}(y_0 \mid x) - F_{Y_0 \mid X}(a \mid x)}{F_{Y_0 \mid X}(b \mid x) - F_{Y_0 \mid X}(a \mid x)}\\
		&= \frac{F_{Y_0}(y_0) - F_{Y_0 }(a )}{F_{Y_0 }(b ) - F_{Y_0 }(a )}\\
		&= \Prob(Y_0\leq y_0 \mid Y_0\in[a,b]).
\end{align*}
The first line follows since $Y_0 \mid X=x$ is continuously distributed, by lemma \ref{lemma:continuity}. The second line follows from $a,y_0,b\in\mathcal{T}$. Hence $Y_0\indep X \mid \{ Y_0 \in \mathcal{T} \}$. This implies that, for almost all $y_0 \in \mathcal{T}$,
\begin{align*}
	\Prob(X=1 \mid Y_0=y_0)
		&= \Prob(X=1 \mid Y_0=y_0,Y_0\in\mathcal{T}) \\
		&= \Prob(X=1 \mid Y_0\in\mathcal{T}).
\end{align*}
The second equality follows from conditional independence. Thus we have shown that $p(y_0)$ equals a constant on $\mathcal{T}$. To finish the proof, we show that this constant is $\Prob(X=1)$. Let $t_1, t_2 \in [a,b]$ with $t_1 < t_2$. Then
\begin{align*}
	\Prob(X=1)
		&= \Exp(p(Y_0) \mid Y_0 \in (t_1,t_2)) \\
		&= \Exp(\Prob(X=1 \mid Y_0\in\mathcal{T}) \mid Y_0 \in (t_1,t_2))\\
		&= \Prob(X=1 \mid Y_0 \in \mathcal{T}).
\end{align*}
The first line follows by theorem \ref{thm:AvgValueCharacterization} while the second line follows by our derivations above showing that $p(y_0) = \Prob(X=1 \mid Y_0\in\mathcal{T})$ on $\mathcal{T}$. Hence
\[
	\Prob(X=1 \mid Y_0=y_0) = \Prob(X=1)
\]
for almost all $y_0 \in \mathcal{T}$.
\end{proof}

\subsubsection*{Preliminaries for the Proof of Proposition \ref{prop:quantileTrtBounds}}

To obtain the identified sets in section \ref{sec:treatmentEffectBounds}, we first derive sharp bounds on cdfs under generic $\mathcal{T}$- and $\mathcal{U}$-independence. We then apply these results to obtain sharp bounds on the treatment effect parameters. To this end, in this subsection we consider the relationship between a generic continuous random variable $U$ and a binary variable $X \in \{ 0,1 \}$. We derive sharp bounds on the conditional cdf of $U$ given $X$ when (1) the marginal distributions of $U$ and $X$ are known and either (2) $U$ is $\mathcal{T}$-independent of $X$ or (2$^\prime$) $U$ is $\mathcal{U}$-independent of $X$. We obtain the identified sets in section \ref{sec:treatmentEffectBounds} by applying this general result to $U = F_{Y_x}(Y_x)$.%

Let $F_{U \mid X}(u \mid x) = \Prob(U \leq u \mid X=x)$ denote the unknown conditional cdf of $U$ given $X=x$. Let $F_U(u) = \Prob(U \leq u)$ denote the known marginal cdf of $U$. Let $p_x = \Prob(X=x)$ denote the known marginal probability mass function of $X$. Let $a,b \in \R$, $a \leq b$. We define the functions $\overline{F}_{U \mid X}^\mathcal{T}(u \mid x)$, $\underline{F}_{U \mid X}^\mathcal{T}(u \mid x)$, $\overline{F}_{U \mid X}^\mathcal{U}(u \mid x)$, and $\underline{F}_{U \mid X}^\mathcal{U}(u \mid x)$ in appendix \ref{appendix:cdfbounds}. These are piecewise linear functions of $F_U(u)$ which depend on $a$, $b$, and $p_x$. Figure \ref{TauDepBoundsOnCDFs} plots several examples.

\begin{proposition}\label{prop:TauCDFbounds}
Suppose the following hold:
\begin{enumerate}
\item The marginal distributions of $U$ and $X$ are known.
\item $U$ is continuously distributed.
\item $p_1 \in (0,1)$.
\item $U$ is $\mathcal{T}$-independent of $X$ with $\mathcal{T} = [a,b]$.
\end{enumerate}
Let $\mathcal{F}_{\supp(U)}$ denote the set of all cdfs on $\supp(U)$. Then, for each $x \in \{0,1\}$, $F_{U \mid X}(\cdot \mid x) \in \mathcal{F}_{U \mid x}^{\mathcal{T}}$, where
\[
	\mathcal{F}_{U \mid x}^\mathcal{T} =
	\left\{ F \in \mathcal{F}_{\supp(U)} :
	\underline{F}_{U \mid X}^\mathcal{T}(u \mid x) \leq F_{U \mid X}(u \mid x) \leq \overline{F}_{U \mid X}^\mathcal{T}(u \mid x) \text{ for all $u \in \supp(U)$} \right\}.
\]
Furthermore, for each $\varepsilon \in [0,1]$, there exists a joint distribution of $(U,X)$ consistent with assumptions 1--4 above and such that
\begin{multline}\label{eq:joint epsilon cdf}
	\big( \Prob(U\leq u \mid X=1), \, \Prob(U\leq u \mid X=0) \big) \\
	= \left( \varepsilon  \underline{F}_{U \mid X}^\mathcal{T}(u \mid 1) + (1-\varepsilon) \overline{F}_{U \mid X}^{\mathcal{T}}(u \mid 1), \, 
		(1-\varepsilon)  \underline{F}_{U \mid X}^\mathcal{T}(u \mid 0) + \varepsilon \overline{F}_{U \mid X}^{\mathcal{T}}(u \mid 0) \right)
\end{multline}
for all $u \in \supp(U)$. Finally, the entire proposition continues to hold if we replace $\mathcal{T}$ with $\mathcal{U}$.
\hfill $\blacksquare$
\end{proposition}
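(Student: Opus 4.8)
The plan is to work throughout with the pair of unknown conditional cdfs $F_x \equiv F_{U\mid X}(\cdot\mid x)$, $x\in\{0,1\}$, and to give an explicit description of the set of pairs that are feasible under the four assumptions. Three facts pin this set down: (i) each $F_x$ is a cdf on $\supp(U)$; (ii) the law of total probability $p_1 F_1(u) + p_0 F_0(u) = F_U(u)$ for all $u$, where $p_0 = 1-p_1$; and (iii) $\mathcal{T}$-independence, $F_1(u) = F_0(u)$ for $u\in[a,b]$, which combined with (ii) forces $F_1(u) = F_0(u) = F_U(u)$ on $[a,b]$. Using (ii) to eliminate $F_0 = (F_U - p_1 F_1)/p_0$, the feasible set becomes the collection of cdfs $F_1$ on $\supp(U)$ with $F_1 = F_U$ on $[a,b]$ and such that $(F_U - p_1 F_1)/p_0$ is again a cdf; the last requirement is equivalent to $\tfrac{F_U(u)-1}{p_1}+1 \le F_1(u) \le \tfrac{F_U(u)}{p_1}$ together with $F_U - p_1 F_1$ being non-decreasing. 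Note that this feasible set is convex in $(F_1,F_0)$, which is what makes the $\varepsilon$-family in the statement natural.

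For the containment part I would collect every pointwise ceiling on $F_1(u)$: the trivial $F_1(u)\le 1$ and $F_1(u)\le F_U(u)/p_1$ (the latter from $F_0\ge 0$); for $u\le a$, $F_1(u)\le F_1(a) = F_U(a)$ by monotonicity of $F_1$; for $u\ge b$, $F_1(u)\le F_U(b) + (F_U(u)-F_U(b))/p_1$, which follows from non-decreasingness of $F_U - p_1 F_1$ on $[b,u]$ together with $F_1(b)=F_U(b)$; and $F_1(u)=F_U(u)$ on $[a,b]$. Taking the pointwise minimum of these and using the quantile function $Q_U$ (with the conventions $Q_U(0)=\underline u$, $Q_U(1)=\overline u$) to locate where successive constraints cross reproduces exactly the five-piece formula for $\overline{F}_{U\mid X}^{\mathcal{T}}(\cdot\mid 1)$: the crossings land precisely at $Q_U(p_1 F_U(a))$ and $Q_U(p_1 + F_U(b)(1-p_1))$. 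The symmetric list of floors (from $F_0\le 1$, $F_1\ge 0$, and non-decreasingness of $F_U - p_1 F_1$ on the relevant subintervals of $[\underline u,a]$ and $[b,\overline u]$) yields $\underline{F}_{U\mid X}^{\mathcal{T}}(\cdot\mid 1)$, and swapping $p_1 \leftrightarrow p_0$ handles $x=0$. This gives $F_{U\mid X}(\cdot\mid x) \in \mathcal{F}_{U\mid x}^{\mathcal{T}}$.

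For sharpness I would first verify by direct piecewise inspection that each of $\underline{F}_{U\mid X}^{\mathcal{T}}(\cdot\mid x)$ and $\overline{F}_{U\mid X}^{\mathcal{T}}(\cdot\mid x)$ is a genuine cdf: continuous across its breakpoints, valued in $[0,1]$, non-decreasing, with limits $0$ and $1$. The algebraic heart is then the pair of complementarity identities $p_1\,\overline{F}_{U\mid X}^{\mathcal{T}}(u\mid 1) + p_0\,\underline{F}_{U\mid X}^{\mathcal{T}}(u\mid 0) = F_U(u)$ and $p_1\,\underline{F}_{U\mid X}^{\mathcal{T}}(u\mid 1) + p_0\,\overline{F}_{U\mid X}^{\mathcal{T}}(u\mid 0) = F_U(u)$, checked by matching the breakpoints of the $x=1$ bounds to those of the $x=0$ bounds and summing the linear pieces. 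Granting these, for $\varepsilon\in[0,1]$ set $F_1^\varepsilon = \varepsilon\,\underline{F}_{U\mid X}^{\mathcal{T}}(\cdot\mid 1) + (1-\varepsilon)\,\overline{F}_{U\mid X}^{\mathcal{T}}(\cdot\mid 1)$ and $F_0^\varepsilon = (1-\varepsilon)\,\underline{F}_{U\mid X}^{\mathcal{T}}(\cdot\mid 0) + \varepsilon\,\overline{F}_{U\mid X}^{\mathcal{T}}(\cdot\mid 0)$: each is a convex combination of cdfs, hence a cdf; $p_1 F_1^\varepsilon + p_0 F_0^\varepsilon = F_U$ by the identities; and both equal $F_U$ on $[a,b]$, so $\mathcal{T}$-independence holds. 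Defining the joint law by $\Prob(U\le u, X=x) = p_x F_x^\varepsilon(u)$ then produces a distribution with marginals $F_U$ and $(p_0,p_1)$, continuously distributed $U$, and $p_1\in(0,1)$, i.e.\ one satisfying assumptions 1--4, whose conditional cdfs are exactly those in \eqref{eq:joint epsilon cdf}.

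For the $\mathcal{U}$-independence claim the only structural change is that (iii) is replaced by the weaker requirement that the conditional density of $U$ given $X=x$ equals the density of $U$ on $[a,b]$, i.e.\ $F_x(u) - F_x(a) = F_U(u) - F_U(a)$ for $u\in[a,b]$ with the level $F_x(a)$ now free. This loosens the ceilings and floors before $a$ and after $b$, which is precisely why the $\mathcal{U}$-bounds split into two cases according to whether $(1-(F_U(b)-F_U(a)))p_x$ lies below or above $F_U(a)$ (and symmetrically for the floor); with this substitution, enumerating the active constraints, taking pointwise extremes, checking the extremes are cdfs, verifying the two complementarity identities, and forming the $\varepsilon$-family all go through verbatim. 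I expect the main obstacle to be the bookkeeping in the two middle steps: correctly identifying, at each $u$, which of the competing cdf and complementary-cdf constraints is binding --- this is the source of every breakpoint $Q_U(p_x F_U(a))$, $Q_U(p_x + F_U(b)(1-p_x))$, and their $\mathcal{U}$-analogues --- and then confirming that the resulting pointwise envelope is itself globally non-decreasing with a non-decreasing complement, so that the candidate extreme is genuinely attained rather than merely an envelope. Permitting $\supp(U)$ to be unbounded or disconnected (so $F_U$ need not be strictly increasing) is handled cleanly by stating every crossover point through $Q_U$, exactly as in the proposition.
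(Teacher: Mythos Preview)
Your approach is correct and complete in outline, but it differs methodologically from the paper's proof. The paper works throughout via the propensity-score integral representation
\[
F_{U\mid X}(u\mid x)=\int_{-\infty}^{u}\frac{\Prob(X=x\mid U=v)}{p_x}\,dF_U(v),
\]
which it records as a separate lemma. Each piecewise bound is then obtained by bounding the integrand $\Prob(X=x\mid U=v)$ by $0$, $1$, or (on $[a,b]$) by $p_x$, and integrating. For sharpness in the $\mathcal{U}$ case the paper writes down explicit step-function propensity scores $\overline p_x(u)$ and $\underline p_x(u)$ (constant at $p_x$ on $[a,b]$, and $0$ or $1$ elsewhere) and integrates them to recover the bound cdfs; the two cases in the $\mathcal{U}$ bounds arise naturally as the location where the step in $\overline p_x$ or $\underline p_x$ must sit to match the marginal $p_x$.

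You instead eliminate the propensity score entirely and work with the dual constraints that $F_1$ and $F_0=(F_U-p_1F_1)/p_0$ are both cdfs agreeing with $F_U$ on $[a,b]$ (or, under $\mathcal{U}$, having the correct increments on $[a,b]$). This is a clean, more elementary route: it avoids the auxiliary lemma and makes the convexity of the feasible set, and hence the $\varepsilon$-family construction, immediate. The paper's route, by contrast, ties the argument directly to the latent propensity score that is the paper's central object, and in the $\mathcal{U}$ case yields the sharpness witnesses as explicit selection rules rather than as cdfs to be checked. Both approaches require the same piecewise bookkeeping and the same complementarity identities $p_1\overline F(\cdot\mid 1)+p_0\underline F(\cdot\mid 0)=F_U$ that you correctly flag as the heart of the sharpness step.
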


\bigskip

Consider the $\mathcal{T}$-independence case. Then proposition \ref{prop:TauCDFbounds} has two conclusions. First, we show that the functions $\overline{F}_{U \mid X}^\mathcal{T}(\cdot \mid x)$ and $\underline{F}_{U \mid X}^\mathcal{T}(\cdot \mid x)$ bound the unknown conditional cdf $F_{U \mid X}(\cdot \mid x)$ uniformly in its arguments. Second, we show that these bounds are functionally sharp in the sense that the joint identified set for the two conditional cdfs $(F_{U \mid X}(\cdot \mid 1), F_{U \mid X}(\cdot \mid 0))$ contains linear combinations of the bound functions $\overline{F}_{U \mid X}^\mathcal{T}(\cdot \mid x)$ and $\underline{F}_{U \mid X}^\mathcal{T}(\cdot \mid x)$. We use this second conclusion to prove sharpness of our treatment effect parameters of section \ref{sec:treatmentEffectBounds}. Identical conclusions hold in the $\mathcal{U}$-independence case.

For simplicity we have only stated this result when $\mathcal{T}$ and $\mathcal{U}$ are closed intervals $[a,b]$. It can be generalized, however. For example, for $\mathcal{T}$-independence, theorem 2 of \cite{MastenPoirier2016} provides cdf bounds when $\mathcal{T}$ is a finite union of closed intervals. 

\begin{figure}[t]
\centering
\includegraphics[width=50mm]{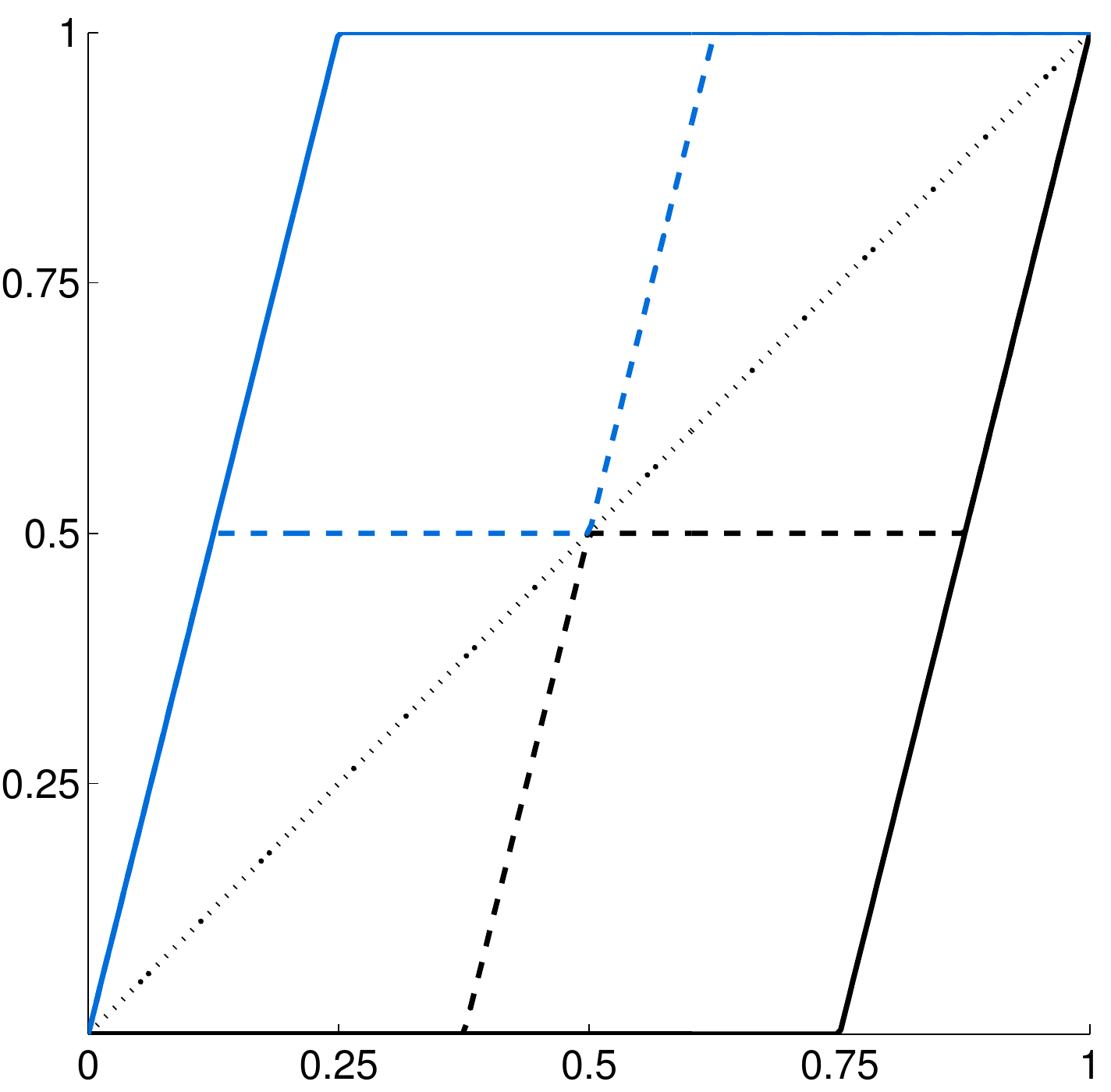}
\hspace{3mm}
\includegraphics[width=50mm]{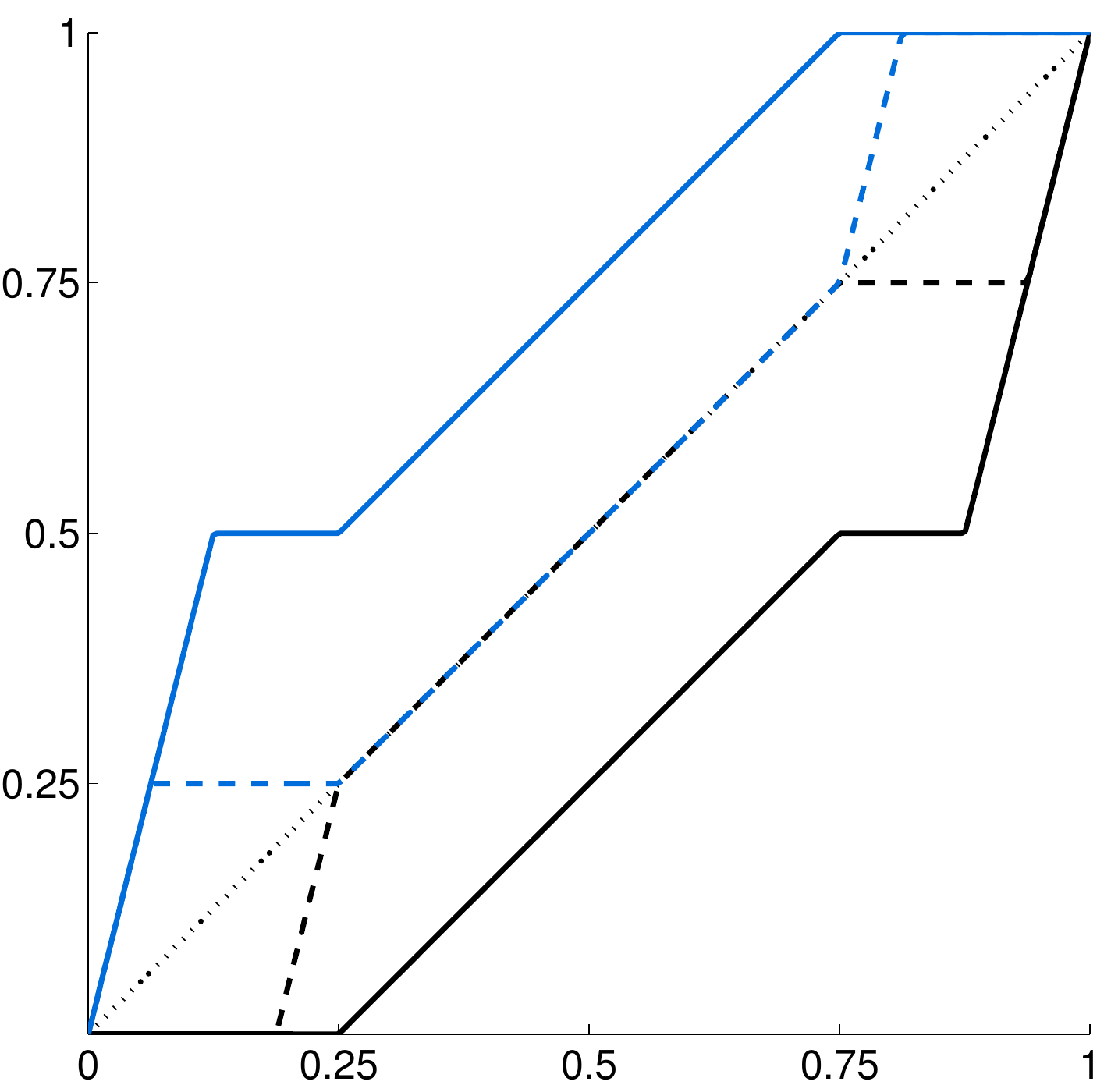}
\hspace{3mm}
\includegraphics[width=50mm]{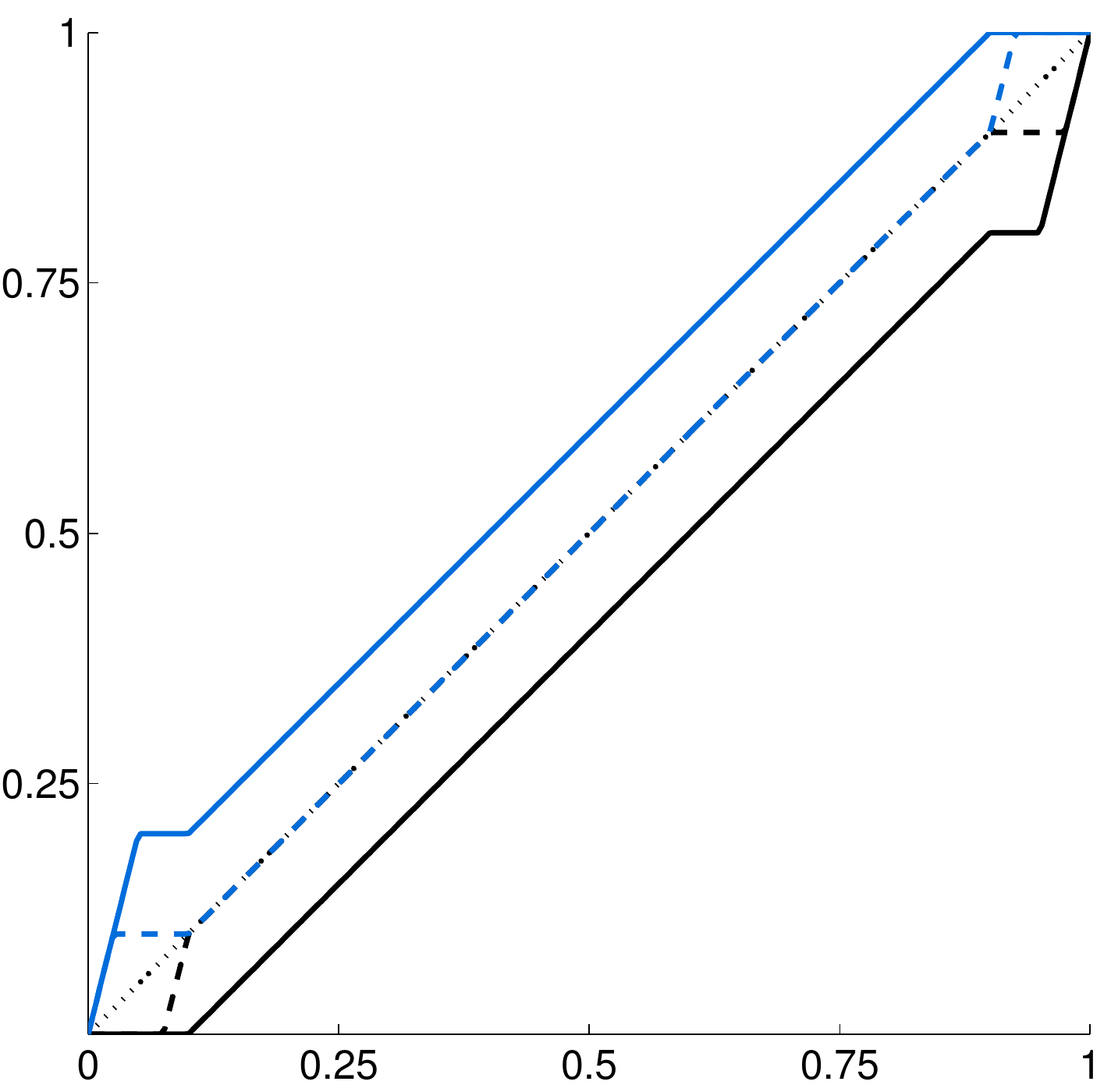}
\caption{Example upper and lower bounds on $F_{U \mid X}(u \mid 0)$, when $p_1 = 0.75$ and $U \sim \text{Unif}[0,1]$. Solid: $\mathcal{U}$-independence. Dashed: $\mathcal{T}$-independence. All three plots have $\mathcal{U} = \mathcal{T}$, for three different choices: $\{ 0.5 \}$ on the left, $[0.25, 0.75]$ in the middle, and $[0.1,0.9]$ on the right. The diagonal, representing the choice $[0,1]$---the case of full independence---is plotted as a dotted line.}
\label{TauDepBoundsOnCDFs}
\end{figure}

We now begin by proving proposition \ref{prop:TauCDFbounds}. To do so, we frequently use the following result.

\begin{lemma}\label{lemma:cdfAsIntegralOfPropensityScore}
Let $U$ be a continuous random variable. Let $X$ be a random variable with $p_x = \Prob(X=x) > 0$. Then
\[
	F_{U \mid X}(u \mid x) = \int_{-\infty}^u \frac{\Prob(X=x \mid U=v)}{p_x} \; dF_U(v).
\]
\end{lemma}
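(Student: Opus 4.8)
The plan is to reduce the identity to a one-line application of Bayes' rule combined with the law of iterated expectations, conditioning on $U$. Since $X$ is discrete and $p_x > 0$, every conditional object below is a bona fide bounded measurable function, so no delicate regularity is needed.

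First I would write, using the definition of conditional probability and $p_x > 0$,
\[
	F_{U \mid X}(u \mid x) = \Prob(U \leq u \mid X = x) = \frac{\Prob(U \leq u,\, X = x)}{p_x} = \frac{\Exp\big[\indicator(U \leq u)\,\indicator(X=x)\big]}{p_x}.
\]
Next I would apply the law of iterated expectations, conditioning on $U$: since $\indicator(U \leq u)$ is $\sigma(U)$-measurable, it pulls out of the inner conditional expectation, giving
\[
	\Exp\big[\indicator(U \leq u)\,\indicator(X=x)\big] = \Exp\big[\indicator(U \leq u)\,\Exp(\indicator(X=x) \mid U)\big] = \Exp\big[\indicator(U \leq u)\,\Prob(X=x \mid U)\big].
\]
Finally I would rewrite the outer expectation as a Lebesgue--Stieltjes integral against the law $F_U$ of $U$, writing $v$ for the value of $U$, to obtain
\[
	\Exp\big[\indicator(U \leq u)\,\Prob(X=x \mid U)\big] = \int_{-\infty}^{u} \Prob(X=x \mid U=v)\, dF_U(v),
\]
and then dividing through by $p_x$ yields the claimed formula.

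There is no real obstacle here; the only point requiring a modicum of care is the passage from the conditional expectation $\Exp(\indicator(X=x)\mid U)$ to the representative function $v \mapsto \Prob(X=x\mid U=v)$ and the corresponding change of variables in the last display. This is justified by the definition of the regular conditional distribution of $X$ given $U$ together with the standard change-of-variables formula for the pushforward of $U$; integrability is immediate since the integrand is bounded by $1$. I would state these justifications briefly rather than belabor them.
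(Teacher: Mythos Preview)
Your proof is correct. The paper does not give its own argument for this lemma but simply cites lemma~1 of \cite{MastenPoirier2017}; your direct computation via Bayes' rule and iterated expectations is the standard way to establish the identity and is exactly what one would expect that cited lemma to contain. One small wording point: you write ``since $X$ is discrete,'' but the lemma only assumes $\Prob(X=x)>0$, i.e.\ that $X$ has an atom at $x$; your argument uses nothing more than this, so just drop the word ``discrete.'' Continuity of $U$ is likewise not used in your argument, which is fine --- the identity holds more generally.
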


\begin{proof}[Proof of lemma \ref{lemma:cdfAsIntegralOfPropensityScore}]
See lemma 1 in \cite{MastenPoirier2017}.
\end{proof}

\begin{proof}[Proof of proposition \ref{prop:TauCDFbounds} ($\mathcal{T}$-independence)]
We prove this statement for $\mathcal{T}$-independence first, then for $\mathcal{U}$-independence. Both proofs proceed by first deriving the upper cdf bound, then deriving the lower cdf bound, and finishing by showing the joint attainability of the cdfs of equation \eqref{eq:joint epsilon cdf}.

\bigskip

For both the $\mathcal{T}$- and $\mathcal{U}$-independence proofs, we use the following two inequalities: First, for all $u \in \supp(U)$,
\begin{align}\label{eq:uniformUpperBound}
	F_{U \mid X}(u \mid x)
		&= \int_{-\infty}^u \frac{\Prob(X=x \mid U=v)}{p_x} \; dF_U(v) \notag\\
		&\leq \int_{-\infty}^u \frac{1}{p_x} \; dF_U(v) \notag\\
		&= \frac{F_U(u)}{p_x}.
\end{align}
The first line follows by lemma \ref{lemma:cdfAsIntegralOfPropensityScore}. The second line follows by $\Prob(X=x \mid U=v) \leq 1$. Second, for all $u \in \supp(U)$,
\begin{align}\label{eq:uniformLowerBound}
	F_{U \mid X}(u \mid x)
		&= \int_{-\infty}^u \frac{\Prob(X=x \mid U=v)}{p_x} \; dF_U(v) \notag \\
		&= 1 - \int_u^\infty \frac{\Prob(X=x \mid U=v)}{p_x} \; dF_U(v) \notag \\
		&\geq 1 - \int_u^\infty \frac{1}{p_x} \; dF_U(v) \notag \\
		&= 1 + \frac{F_U(u) - 1}{p_x}.
\end{align}
While equations \eqref{eq:uniformUpperBound} and \eqref{eq:uniformLowerBound} both hold for all $u \in \supp(U)$, they are not sharp for all $u$.

\bigskip

\textbf{Part 1.} We show that $F_{U \mid X}(u \mid x) \leq \overline{F}_{U \mid X}^{\mathcal{T}}(u \mid x)$ for all $u \in \supp(U)$. If $u \leq Q_U(p_xF_U(a))$, the upper bound holds by equation \eqref{eq:uniformUpperBound}.
Second, if $u \in [Q_U(p_xF_U(a)),a]$, then $F_{U \mid X}(u \mid x) \leq F_{U \mid X}(a \mid x) = F_U(a)$ since $F_{U \mid X}(\cdot \mid x)$ is nondecreasing and by $\mathcal{T}$-independence. Third, if $u\in [a,b]$, then $F_{U \mid X}(u \mid x) = F_U(u)$ by $\mathcal{T}$-independence. Fourth, if $u \in [b, Q_U(p_x + F_U(b)(1-p_x))]$, then 
\begin{align*}
	F_{U \mid X}(u \mid x)
		&= \int_{-\infty}^u \frac{\Prob(X=x \mid U=v)}{p_x} \; dF_U(v)\\
		&= \int_{-\infty}^b \frac{\Prob(X=x \mid U=v)}{p_x} \; dF_U(v)
			+ \int_b^u \frac{\Prob(X=x \mid U=v)}{p_x} \; dF_U(v)\\
		&\leq F_{U \mid X}(b \mid x) + \int_b^u \frac{1}{p_x} \; dF_U(v)\\
		&= F_U(b) + \frac{F_U(u) - F_U(b)}{p_x}.
\end{align*}

Finally, for all $u \in \supp(U)$, $F_{U \mid X}(u \mid x) \leq 1$. In particular, this holds for $u \geq Q_U(p_x + F_U(b)(1-p_x))$.

\bigskip

\textbf{Part 2.} We show that $F_{U \mid X}(u \mid x) \geq \underline{F}_{U \mid X}^{\mathcal{T}}(u \mid x)$ for all $u \in \supp(U)$. First, $F_{U \mid X}(u \mid x) \geq 0$ for all $u \in \supp(U)$. In particular, this holds for $u \leq Q_{U}((1-p_x)F_U(a))$. Second, if $u \in [Q_{U}((1-p_x)F_U(a)),a]$, then 
\begin{align*}
	F_{U \mid X}(u \mid x)
		&= \int_{-\infty}^u \frac{\Prob(X=x \mid U=v)}{p_x} \; dF_U(v)\\
		&= \int_{-\infty}^a \frac{\Prob(X=x \mid U=v)}{p_x} \; dF_U(v)
			- \int_u^a \frac{\Prob(X=x \mid U=v)}{p_x} \; dF_U(v)\\
		&\geq F_{U \mid X}(a \mid x) - \int_u^a \frac{1}{p_x} \; dF_U(v)\\
		&= F_U(a) + \frac{F_U(u) - F_U(a)}{p_x}.
\end{align*}
Third, if $u\in [a,b]$ then $F_{U \mid X}(u \mid x) = F_U(u)$ by $\mathcal{T}$-independence. Fourth, if $u\in[b,Q_U(p_xF_U(b) + (1-p_x))]$, then $F_{U \mid X}(u \mid x) \geq F_{U \mid X}(b \mid x) = F_{U}(b)$. Finally, if $u \geq Q_U(p_xF_U(b) + (1-p_x))$, the lower bound holds by equation \eqref{eq:uniformLowerBound}.

\bigskip

\textbf{Part 3}. To prove sharpness, we must construct a joint distribution of $(U,X)$ consistent with assumptions 1--4 and which yields the upper bound $\overline{F}_{U \mid X}^\mathcal{T}(\cdot \mid x)$. And likewise for the lower bound $\underline{F}_{U \mid X}^\mathcal{T}(\cdot \mid x)$. This yields equation \eqref{eq:joint epsilon cdf} for $\varepsilon = 0$ and $\varepsilon = 1$. By taking convex combinations of these two joint distributions we obtain the case for $\varepsilon \in (0,1)$.

The marginal distributions of $U$ and $X$ are prespecified. Hence to construct the joint distribution of $(U,X)$ it suffices to define conditional distributions of $U \mid X$. We define these conditional distributions by the bound functions themselves, $\underline{F}_{U \mid X}^{\mathcal{T}}(u \mid x)$ and $\overline{F}_{U \mid X}^{\mathcal{T}}(u \mid x)$. These functions are non-decreasing, right-continuous, and have range $[0,1]$. Hence they are valid cdfs. They also satisfy $\mathcal{T}$-independence. These properties are preserved by taking convex combinations, and hence $\varepsilon \underline{F}_{U \mid X}^{\mathcal{T}}(u \mid x) + (1-\varepsilon)\overline{F}_{U \mid X}^{\mathcal{T}}(u \mid x)$ is also a valid cdf that satisfies $\mathcal{T}$-independence for any $\varepsilon \in [0,1]$ and $x\in\{0,1\}$. Finally, we show that these cdfs are consistent with the marginal distribution of $U$, and can satisfy both components of equation \eqref{eq:joint epsilon cdf} simultaneously. To see this, we compute
\begin{align*}
	&p_x \overline{F}_{U \mid X}^\mathcal{T}(u \mid x) + (1-p_x) \underline{F}_{U \mid X}^\mathcal{T}(u \mid 1-x) \\
	&=
	\begin{cases}
	p_x\dfrac{F_U(u)}{p_x}
		&\text{if $u\leq Q_{U}(p_xF_U(a))$}\\
	p_xF_U(a) + F_U(u) - F_U(a) + F_U(a)(1-p_x)
		&\text{if $Q_{U}(p_xF_U(a)) \leq u \leq a$}\\
	p_xF_U(u) + (1-p_x)F_U(u)
		&\text{if $a\leq u \leq b$} \\
	F_U(u) - F_U(b) + p_xF_U(b) + (1-p_x) F_U(b)
		&\text{if $b\leq u \leq Q_{U}(p_x + F_U(b)(1-p_x))$}\\	p_x + F_U(u) - 1 + (1-p_x)
		&\text{if $Q_{U}(p_x + F_U(b)(1-p_x))\leq u$}
	\end{cases} \\
	&= F_U(u).
\end{align*}
Thus
\begin{align*}%
	&p_1 \left[ \varepsilon \underline{F}_{U \mid X}^{\mathcal{T}}(u \mid 1) 
		+ (1-\varepsilon)\overline{F}_{U \mid X}^{\mathcal{T}}(u \mid 1) \right] 
	+ p_0 \left[ (1-\varepsilon)\underline{F}_{U \mid X}^{\mathcal{T}}(u \mid 0)
		+ \varepsilon \overline{F}_{U \mid X}^{\mathcal{T}}(u \mid 0) \right] \\%
	&= \varepsilon \left[ p_1 \underline{F}_{U \mid X}^\mathcal{T}(u \mid 1)
		+ p_0 \overline{F}_{U \mid X}^\mathcal{T}(u \mid 0) \right] 
		+ (1-\varepsilon) \left[ p_1 \overline{F}_{U \mid X}^\mathcal{T}(u \mid 1)
		+ p_0 \underline{F}_{U \mid X}^\mathcal{T}(u \mid 0) \right] \\%
	&= \varepsilon F_U(u) + (1-\varepsilon) F_U(u) \\ %
	&= F_U(u).
\end{align*}
\end{proof}

\begin{proof}[Proof of proposition \ref{prop:TauCDFbounds} ($\mathcal{U}$-independence)]
Now we consider the cdf bounds under $\mathcal{U}$-independence, under various cases: 

\bigskip

\textbf{Part 1.} We show $F_{U \mid X}(u \mid x) \leq \overline{F}_{U \mid X}^{\mathcal{U}}(u \mid x)$ for all $u \in \supp(U)$. We do this in two cases.

\medskip

\textbf{Part 1a.} Suppose $(1 - (F_U(b) - F_U(a)))p_x \leq F_U(a)$. First, $F_{U \mid X}(u \mid x) \leq 1$ for all $u \in \supp(U)$. In particular, this holds if $u \geq b$.

Second, if $u \in [a,b]$, then
\begin{align*}
	F_{U \mid X} (u \mid x)
		&= \int_{-\infty}^u \frac{\Prob(X=x \mid U=v)}{p_x} \; dF_U(v) \\
		&= 1- \int_{b}^\infty \frac{\Prob(X=x \mid U=v)}{p_x} \; dF_U(v)
			- \int_{u}^b \frac{\Prob(X=x \mid U=v)}{p_x} \; dF_U(v) \\
		&\leq 1 - \int_{b}^\infty \frac{0}{p_x} \; dF_U(v)
			- \int_{u}^b \frac{p_x}{p_x} \; dF_U(v) \\
		&= 1 - (F_U(b) - F_U(u)).
\end{align*}
Third, if $u \in [Q_U((1-(F_U(b) - F_U(a)))p_x), a]$, then $F_{U \mid X}(u \mid x) \leq F_{U \mid X}(a \mid x) \leq 1 - (F_U(b) - F_U(a))$ where the last inequality follows by our derivation immediately above. Finally, if $u \leq Q_{U}((1-(F_U(b) - F_U(a)))p_x)$, the upper bound holds by equation \eqref{eq:uniformUpperBound}.

\bigskip

\textbf{Part 1b.} Now suppose $(1 - (F_U(b) - F_U(a)))p_x \geq F_U(a)$. First, if $u \leq a$, the upper bound holds by equation \eqref{eq:uniformUpperBound}. Second, if $u \in [a,b]$ then 
\begin{align*}
	F_{U \mid X} (u \mid x)
		&= \int_{-\infty}^u \frac{\Prob(X=x \mid U=v)}{p_x} \; dF_U(v) \\
		&= \int_{-\infty}^a \frac{\Prob(X=x \mid U=v)}{p_x} \; dF_U(v)
			+ \int_{a}^u \frac{\Prob(X=x \mid U=v)}{p_x} \; dF_U(v)\\
		&\leq \int_{-\infty}^a \frac{1}{p_x} \; dF_U(v)
			+ \int_{a}^u \frac{p_x}{p_x} \; dF_U(v)\\
		&= \frac{F_{U}(a)}{p_x} + F_U(u) - F_U(a).
\end{align*}
Third, if $u \in [b,Q_U((F_U(b)-F_U(a))(1-p_x)+ p_x)]$, then 
\begin{align*}
	&F_{U \mid X} (u \mid x) \\
		&= \int_{-\infty}^u \frac{\Prob(X=x \mid U=v)}{p_x} \; dF_U(v) \\
		&= \int_{-\infty}^a \frac{\Prob(X=x \mid U=v)}{p_x} \; dF_U(v)
			+ \int_{a}^b \frac{\Prob(X=x \mid U=v)}{p_x} \; dF_U(v)
			+ \int_{b}^u \frac{\Prob(X=x \mid U=v)}{p_x} \; dF_U(v) \\
		&\leq \int_{-\infty}^a \frac{1}{p_x} \; dF_U(v)
			+ \int_{a}^b \frac{p_x}{p_x} \; dF_U(v)
			+ \int_{b}^u \frac{1}{p_x} \; dF_U(v) \\
		&= \frac{F_U(a) + p_x(F_U(b) - F_U(a)) + F_U(u) - F_U(b)}{p_x}.
\end{align*}
Finally, if $u \geq Q_U((F_U(b)-F_U(a))(1-p_x)+ p_x)$, then $F_{U \mid X}(u \mid x) \leq 1$. 

\bigskip

\textbf{Part 2.} We show that $F_{U \mid X}(u \mid x) \geq \underline{F}_{U \mid X}^{\mathcal{U}}(u \mid x)$ for all $u \in \supp(U)$. We do this in two cases.

\medskip

\textbf{Part 2a.} Suppose $(1 - (F_U(b) - F_U(a)))(1-p_x) \leq F_U(a)$. First, if $u \leq Q_U((1-(F_U(b) - F_U(a)))(1-p_x))$, then $F_{U \mid X}(u \mid x) \geq 0$.
Second, if $u \in [Q_U((1-(F_U(b) - F_U(a)))(1-p_x)),a]$, then
\begin{align*}
	&F_{U \mid X} (u \mid x) \\
		&= \int_{-\infty}^u \frac{\Prob(X=x \mid U=v)}{p_x} \; dF_U(v) \\
		&= 1- \int_{b}^\infty \frac{\Prob(X=x \mid U=v)}{p_x} \; dF_U(v)
			- \int_{a}^b \frac{\Prob(X=x \mid U=v)}{p_x} \; dF_U(v)
			- \int_{u}^a \frac{\Prob(X=x \mid U=v)}{p_x} \; dF_U(v) \\
		&\geq 1- \int_{b}^\infty \frac{1}{p_x} \; dF_U(v)
			- \int_{a}^b \frac{p_x}{p_x} \; dF_U(v)
			- \int_{u}^a \frac{1}{p_x} \; dF_U(v) \\
		&=  \frac{F_U(u) - (1 - (F_U(b) -F_U(a)))(1-p_x) }{p_x}.
\end{align*}

Third, if $u\in [a,b]$, then 
\begin{align*}
	F_{U \mid X} (u \mid x)
		&= \int_{-\infty}^u \frac{\Prob(X=x \mid U=v)}{p_x} \; dF_U(v) \\
		&= 1- \int_{b}^\infty \frac{\Prob(X=x \mid U=v)}{p_x} \; dF_U(v)
			- \int_{u}^b \frac{\Prob(X=x \mid U=v)}{p_x} \; dF_U(v) \\
		&\geq 1 - \int_{b}^\infty \frac{1}{p_x} \; dF_U(v)
			- \int_{u}^b \frac{p_x}{p_x} \; dF_U(v) \\
		&= F_U(u) + \frac{(F_U(b) - 1)(1-p_x)}{p_x}.
\end{align*}
Finally, if $u \geq b$, the lower bound holds by equation \eqref{eq:uniformLowerBound}.

\bigskip

\textbf{Part 2b.} Now suppose $(1 - (F_U(b) - F_U(a)))(1-p_x) \geq F_U(a)$. First, if $u\leq a$ then $F_{U \mid X}(u \mid x) \geq 0$. Second, if $u\in [a,b]$ then 
\begin{align*}
	F_{U \mid X} (u \mid x)
		&= \int_{-\infty}^u \frac{\Prob(X=x \mid U=v)}{p_x} \; dF_U(v) \\
		&= \int_{-\infty}^a \frac{\Prob(X=x \mid U=v)}{p_x} \; dF_U(v)
			+ \int_{a}^u \frac{\Prob(X=x \mid U=v)}{p_x} \; dF_U(v) \\
		&\geq \int_{-\infty}^a \frac{0}{p_x} \; dF_U(v)
			+ \int_{a}^u \frac{p_x}{p_x} \; dF_U(v) \\
		&= F_U(u) - F_U(a).
\end{align*}
Third, if $u\in [b,Q_U(p_x(F_U(b)-F_U(a)) + 1-p_x)]$, then $F_{U \mid X}(u \mid x) \geq F_{U \mid X}(b \mid x) \geq F_{U}(b) - F_U(a)$, where the last inequality follows by our derivation immediately above. Finally, if $u \geq Q_U(p_x(F_U(b)-F_U(a)) + 1-p_x)$ the lower bound holds by equation \eqref{eq:uniformLowerBound}.

\bigskip

\textbf{Part 3.} In this part, we prove sharpness in two steps. First we construct a joint distribution of $(U,X)$ consistent with assumptions 1--4 and which yields the upper bound $\overline{F}_{U \mid X}^\mathcal{U}(\cdot \mid x)$. And likewise for the lower bound $\underline{F}_{U \mid X}^\mathcal{U}(\cdot \mid x)$. This yields equation \eqref{eq:joint epsilon cdf} for $\varepsilon = 0$ and $\varepsilon = 1$. Second we use convex combinations of these two joint distributions to obtain the case for $\varepsilon \in (0,1)$.

The marginal distributions of $U$ and $X$ are prespecified. Hence to construct the joint distribution of $(U,X)$ it suffices to define conditional distributions of $X \mid U$. Specifically, when $(1 - (F_U(b) - F_U(a)))p_x \leq F_U(a)$, define the conditional probability
\[
	\overline{p}_x(u) =
	\begin{cases}
		1 &\text{ for $u < Q_U((1-(F_U(b)-F_U(a)))p_x)$} \\
		0 &\text{ for $u \in [(Q_U((1-(F_U(b)-F_U(a)))p_x),a)$} \\
		p_x &\text{ for $u \in [a,b)$} \\
		0 &\text{ for $u \geq b$.}
	\end{cases}
\]
for $u\in\supp(U)$. This conditional probability is consistent with $\mathcal{U}$-independence. Moreover, by applying lemma \ref{lemma:cdfAsIntegralOfPropensityScore} one can verify that it yields the upper bound $\overline{F}_{U \mid X}^{\mathcal{U}}(\cdot \mid x)$.

When $(1 - (F_U(b) - F_U(a)))p_x \geq F_U(a)$, define
\[
	\overline{p}_x(u) =
	\begin{cases}
		1 &\text{ for $u < a$} \\
		p_x &\text{ for $u \in [a,b)$} \\
		1 &\text{ for $u \in [b,Q_U((F_U(b)-F_U(a))(1-p_x)+ p_x))$} \\
		0 &\text{ for $u \geq Q_U((F_U(b)-F_U(a))(1-p_x)+ p_x)$.}
	\end{cases}
\]
Again, by applying lemma \ref{lemma:cdfAsIntegralOfPropensityScore} one can verify that this conditional probability yields the upper bound $\overline{F}_{U \mid X}^{\mathcal{U}}(\cdot \mid x)$.

Next consider the lower bounds. When $(1 - (F_U(b) - F_U(a)))(1-p_x) \leq F_U(a)$, define
\[
	\underline{p}_x(u) =
	\begin{cases}
		0 &\text{ for } u < Q_U((1-(F_U(b)-F_U(a)))(1-p_x))\\
		1 &\text{ for } u\in[Q_U((1-(F_U(b)-F_U(a)))(1-p_x)),a)\\
		p_x &\text{ for } u\in[a,b)\\
		1 &\text{ for } u\geq b.
	\end{cases}
\]
When $(1 - (F_U(b) - F_U(a)))(1-p_x) \geq F_U(a)$, define
\[
	\underline{p}_x(u) =
	\begin{cases}
		0 &\text{ for } u < a\\
		p_x &\text{ for } u\in[a,b)\\
		0 &\text{ for } u\in[b, Q_U(p_x(F_U(b) - F_U(a)) + 1-p_x))\\
		1 &\text{ for } u\geq Q_U(p_x(F_U(b) - F_U(a)) + 1-p_x).
	\end{cases}
\]
As with the upper bounds, one can verify that these yield the lower bound $\underline{F}_{U \mid X}^{\mathcal{U}}(\cdot \mid x)$. For all of these conditional distributions of $X \mid U$, one can verify that they are consistent with the marginal distribution of $X$:
\[
	\int_{\supp(U)} \overline{p}_x(u) \; dF_U(u) = p_x
	\qquad \text{and} \qquad
	\int_{\supp(U)} \underline{p}_x(u) \; dF_U(u) = p_x.
\]
Thus we have shown that the bound functions are attainable. That is, equation \eqref{eq:joint epsilon cdf} holds with $\varepsilon = 0$ or $1$. Next consider $\varepsilon \in (0,1)$. For this $\varepsilon$, we specify the distribution of $X \mid U$ by the conditional probability $\varepsilon \underline{p}_x(u) + (1-\varepsilon) \overline{p}_x(u)$. This is a valid conditional probability since it is a convex combination of two terms which are between 0 and 1. This conditional probability satisfies $\mathcal{U}$-independence. By linearity of integrals and our results above,
\[
	\int_{\supp(U)} \left[ \varepsilon \underline{p}_x(u) + (1-\varepsilon) \overline{p}_x(u) \right] \; dF_U(u) = p_x
\]
and hence this distribution of $X \mid U$ is consistent with the marginal distribution of $X$. Finally, by lemma \ref{lemma:cdfAsIntegralOfPropensityScore} and linearity of the integral, this conditional probability yields the cdf
\[
	\Prob(U \leq u \mid X=x) = \varepsilon \underline{F}_{U \mid X}^{\mathcal{U}}(u \mid x) + (1-\varepsilon)\overline{F}_{U \mid X}^{\mathcal{U}}(u \mid x),
\]
as needed for each component of equation \eqref{eq:joint epsilon cdf}. To see that each component of equation \eqref{eq:joint epsilon cdf} holds simultaneously, we show that a law of total probability constraint holds. There are two cases to check. First suppose $(1 - (F_U(b) - F_U(a)))(1-p_1)  = (1 - (F_U(b) - F_U(a)))p_0 \leq F_U(a)$. Then
\begin{align*}
	&p_1 \underline{F}_{U \mid X}^\mathcal{U}(u \mid 1) + 	p_0 \overline{F}_{U \mid X}^\mathcal{U}(u \mid 0)\\
 	&=
	\begin{cases}
		0 + F_U(u) \\
		\qquad \text{ for $u < Q_U((1-(F_U(b)-F_U(a)))p_0)$} \\
		\\
		(F_U(u) - (1-(F_U(b) - F_U(a))))p_0 + (1-(F_U(b) - F_U(a)))p_0 \\
		\qquad \text{ for $u \in [Q_U((1-(F_U(b)-F_U(a)))p_0),a)$} \\
		\\
		(F_U(b)-1)p_0 + F_U(u)p_1 + (1 - (F_U(b) - F_U(u))p_0
		&\text{ for $u \in [a,b)$} \\
		F_U(u) - 1 + p_1 + p_0
			&\text{ for $u \geq b$}
	\end{cases}\\
	&= F_U(u).
\end{align*}
Likewise, $p_1 \overline{F}_{U \mid X}^\mathcal{U}(u \mid 1) + 	p_0 \underline{F}_{U \mid X}^\mathcal{U}(u \mid 0) = F_U(u)$. Similar derivations hold for the other case. Thus
\begin{align*}
	&p_1 \left[ \varepsilon \underline{F}_{U \mid X}^{\mathcal{U}}(u \mid 1) 
		+ (1-\varepsilon)\overline{F}_{U \mid X}^{\mathcal{U}}(u \mid 1) \right] 
	+ p_0 \left[ (1-\varepsilon)\underline{F}_{U \mid X}^{\mathcal{U}}(u \mid 0)
		+ \varepsilon \overline{F}_{U \mid X}^{\mathcal{U}}(u \mid 0) \right] \\
	&= \varepsilon \left[ p_1 \underline{F}_{U \mid X}^\mathcal{U}(u \mid 1)
		+ p_0 \overline{F}_{U \mid X}^\mathcal{U}(u \mid 0) \right] 
		+ (1-\varepsilon) \left[ p_1 \overline{F}_{U \mid X}^\mathcal{U}(u \mid 1)
		+ p_0 \underline{F}_{U \mid X}^\mathcal{U}(u \mid 0) \right] \\
	&= \varepsilon F_U(u) + (1-\varepsilon) F_U(u) \\
	&= F_U(u).
\end{align*}
\end{proof}

Using proposition \ref{prop:TauCDFbounds}, we can now prove proposition \ref{prop:quantileTrtBounds}.

\begin{proof}[Proof of proposition \ref{prop:quantileTrtBounds}]
By the law of total probability and equation \eqref{eq:potential outcomes},
\begin{align*}
	F_{Y_0}(y)
		&= F_{Y_0 \mid X}(y \mid 1) p_1 + F_{Y_0 \mid X}(y \mid 0) p_0 \\
		&= F_{Y_0 \mid X}(y \mid 1) p_1 + F_{Y \mid X}(y \mid 0) p_0.
\end{align*}
Rearranging yields
\begin{equation}\label{eq:cdfY0givenX}
	F_{Y_0 \mid X}(y \mid 1) = \frac{F_{Y_0}(y) - p_0 F_{Y \mid X}(y \mid 0)}{p_1}.
\end{equation}
Finally, the desired quantile is simply the left-inverse of this conditional cdf:
\[
	Q_{Y_0 \mid X}(\tau \mid 1) = F_{Y_0 \mid X}^{-1}(\tau \mid 1).
\]
Thus it suffices to obtain bounds on the unconditional cdf $F_{Y_0}(y)$. By equation (12) on page 337 of \cite{MastenPoirier2017},
\begin{equation}\label{eq:eq12MP2018}
	F_{Y \mid X}(y \mid 0) = F_{R_0 \mid X}(F_{Y_0}(y) \mid 0)
\end{equation}
where $R_0 \equiv F_{Y_0}(Y_0)$ is the \emph{rank} of $Y_0$. By A\ref{assn:continuity}, $Y_0$ is continuously distributed and hence $R_0 \sim \text{Unif}[0,1]$. The main idea of the proof is that proposition \ref{prop:TauCDFbounds} yields bounds on $F_{R_0 \mid X}$, which we then invert to obtain bounds on $F_{Y_0}$. We then substitute these bounds into equation \eqref{eq:cdfY0givenX} to obtain bounds on $F_{Y_0 \mid X}(\cdot \mid 1)$. Inverting those bounds yields the quantile bounds given in section \ref{sec:treatmentEffectBounds}. Since the bounds of proposition \ref{prop:TauCDFbounds} are not always uniquely invertible, we approximate them by invertible bound functions. Here we explain the main argument, but we omit the full details since these are similar to the proof of proposition 2 in \cite{MastenPoirier2017}.

\bigskip

\textbf{Under $\mathcal{T}$-independence} of $Y_0$ from $X$ with $\mathcal{T} = [Q_{Y_0}(a),Q_{Y_0}(b)]$, we have $\mathcal{T}$-independence of $R_0$ from $X$ with $\mathcal{T} = [a,b]$. To see this, let $\tau \in [a,b]$. Then
\begin{align*}
	F_{R_0 \mid X}(\tau \mid x)
		&= \Prob(R_0 \leq \tau \mid X=x) \\
		&= \Prob(F_{Y_0}(Y_0) \leq \tau  \mid X=x) \\
		&= \Prob(Y_0 \leq Q_{Y_0}(\tau) \mid X=x) \\
		&= F_{Y_0 \mid X}(Q_{Y_0}(\tau) \mid x) \\
		&= F_{Y_0}(Q_{Y_0}(\tau)) \\
		&= \tau.
\end{align*}
The second line follows by definition of the rank $R_0$. The fifth line follows by $\mathcal{T}$-independence of $Y_0$ from $X$ and since $\tau \in [a,b]$. The third and sixth lines follow by A\ref{assn:continuity}.\ref{A1_1}. Thus proposition \ref{prop:TauCDFbounds} yields sharp bounds on $F_{R_0 \mid X}$. Substituting these bounds into our argument above yields the bounds on $Q_{Y_0 \mid X}(\tau \mid 1)$ in section \ref{sec:treatmentEffectBounds}.

Sharpness of these bounds holds in the same sense as sharpness of the CQTE bounds in proposition 3 of \cite{MastenPoirier2017}. That is, the conditional quantile of $Y_0 \mid X=1$ should be a continuous and strictly increasing function, while $\underline{Q}_{Y_0 \mid X}^\mathcal{T}(\tau \mid 1)$ and $\overline{Q}_{Y_0 \mid X}^\mathcal{T}(\tau \mid 1)$ may have discontinuities  and flat regions. Nevertheless we show there exists a function that is arbitrarily close (pointwise in $\tau$) to these bounds that is continuous and strictly increasing. To see this, for $\eta \in[0,1]$ define
\begin{align*}
	\underline{F}^{\mathcal{T}}_{R_0 \mid X}(u \mid 0;\eta)
		&= (1-\eta) \cdot \underline{F}^{\mathcal{T}}_{R_0 \mid X}(u \mid 0) + \eta \cdot u \\
	\overline{F}^{\mathcal{T}}_{R_0 \mid X}(u \mid 0;\eta)
		&= (1-\eta) \cdot \overline{F}^{\mathcal{T}}_{R_0 \mid X}(u \mid 0) + \eta \cdot u.
\end{align*}
These cdfs satisfy $\mathcal{T}$-independence. For each $\eta  > 0$, they are continuous and strictly increasing. Finally, they converge uniformly to $\underline{F}^{\mathcal{T}}_{R_0 \mid X}(u \mid 0)$ and $\overline{F}^{\mathcal{T}}_{R_0 \mid X}(u \mid 0)$, respectively, as $\eta \searrow 0$. Therefore, we can substitute the cdf bounds $\underline{F}^{\mathcal{T}}_{R_0 \mid X}(u \mid 0;\eta)$ and $\overline{F}^{\mathcal{T}}_{R_0 \mid X}(u \mid 0;\eta)$ into equation \eqref{eq:eq12MP2018}, invert and then substitute that into equation \eqref{eq:cdfY0givenX} to obtain
\[
	\overline{F}^{\mathcal{T}}_{Y_0 \mid X}(y \mid 1;\eta)
	\equiv
	\frac{\underline{F}_{R_0 \mid X}^{^{\mathcal{T}}\, -1}(F_{Y \mid X}(y \mid 0) \mid 0;\eta) - p_0 F_{Y \mid X}(y \mid 0)}{p_1}
\]
and
\[
	\underline{F}^{\mathcal{T}}_{Y_0 \mid X}(y \mid 1;\eta)
	\equiv
	\frac{\overline{F}_{R_0 \mid X}^{^{\mathcal{T}}\, -1}(F_{Y \mid X}(y \mid 0) \mid 0;\eta) - p_0 F_{Y \mid X}(y \mid 0)}{p_1}.
\]
Taking the inverses of these two cdfs and letting $\eta \searrow 0$ allows us to attain points arbitrarily close to the endpoints of the set $[\underline{Q}_{Y_0 \mid X}^\mathcal{T}(\tau \mid 1), \overline{Q}_{Y_0 \mid X}^\mathcal{T}(\tau \mid 1)]$. The rest of the interior is attained by selecting sufficiently small $\eta > 0$ and taking convex combinations of the bound functions, as in equation \eqref{eq:joint epsilon cdf}, and letting $\varepsilon$ vary from 0 to 1.

\bigskip

\textbf{For $\mathcal{U}$-independence} we also obtain sharpness of the interior because the functions $\underline{Q}_{Y_0 \mid X}^\mathcal{U}(\tau \mid 1)$ and $\overline{Q}_{Y_0 \mid X}^\mathcal{U}(\tau \mid 1)$ are not necessarily continuous or strictly increasing.  Nevertheless, as for $\mathcal{T}$-independence, we can obtain continuous and strictly increasing functions that are arbitrarily close (pointwise in $\tau$) to these bounds. To see this, for $\eta = (\eta_1,\eta_2)\in(0,\min\{p_1,p_0\})^2$ define
\begin{align*}
	\underline{p}_x(u;\eta)
		&= \min\{\max\{\underline{p}_x(u),\eta_1\},1-\eta_2\} \\
	\overline{p}_x(u;\eta)
		&= \min\{\max\{\overline{p}_x(u),\eta_1\},1-\eta_2\}
\end{align*}
where $\underline{p}_x$ and $\overline{p}_x$ are defined as in the proof of proposition \ref{prop:TauCDFbounds} and we let $U \equiv R_0$. 
These conditional probabilities lie in $(0,1)$ and satisfy $\mathcal{U}$-independence. Moreover, there exists an $\widetilde{\eta}_1 \in (0,\min \{p_1,p_0 \})$ such that for any $\eta_1 \in (0,\widetilde{\eta}_1)$, there is an $\eta _2(\eta_1) \in (0,\min\{p_1,p_0\})$ such that
\[
	\int_{0}^1 \underline{p}_x(u;\eta) \; du = p_x
\]
for $\eta = (\eta_1,\eta_2(\eta_1))$. This follows by the intermediate value theorem. An analogous result holds for the conditional probability $\overline{p}_x(u;\eta)$. For such values of $\eta$, define 
\[
	\underline{F}^{\mathcal{U}}_{R_0 \mid X}(u \mid 0;\eta)
	= \int_{0}^u \frac{\underline{p}_x(v;\eta)}{p_x} \; dv
	\qquad \text{and} \qquad
	\overline{F}^{\mathcal{U}}_{R_0 \mid X}(u \mid 0;\eta)
	= \int_{0}^u \frac{\underline{p}_x(v;\eta)}{p_x} \; dv.
\]
These cdf bounds are strictly increasing and continuous since the integrand is strictly positive. Therefore, we can substitute these cdf bounds into equation \eqref{eq:eq12MP2018}, invert and then substitute that into equation \eqref{eq:cdfY0givenX} to obtain
\[
	\overline{F}^{\mathcal{U}}_{Y_0 \mid X}(y \mid 1;\eta)
		\equiv \frac{
			\underline{F}_{R_0 \mid X}^{^{\mathcal{U}}\,-1}(F_{Y \mid X}(y \mid 0) \mid 0;\eta) 
				- p_0 F_{Y \mid X}(y \mid 0)
			}{
			p_1
			}
\]
and
\[
	\underline{F}^{\mathcal{U}}_{Y_0 \mid X}(y \mid 1;\eta)
		\equiv \frac{
			\overline{F}_{R_0 \mid X}^{^{\mathcal{U}}\, -1}(F_{Y \mid X}(y \mid 0) \mid 0;\eta) 
				- p_0 F_{Y \mid X}(y \mid 0)
			}{
			p_1
			}.
\]
Taking the inverses of these two cdfs and letting $\eta_1 \searrow 0$ allows us to attain points arbitrarily close to the endpoints of the set $[\underline{Q}_{Y_0 \mid X}^\mathcal{U}(\tau \mid 1), \overline{Q}_{Y_0 \mid X}^\mathcal{U}(\tau \mid 1)]$. The rest of the interior is attained by selecting sufficiently small $\eta_1 > 0$ and taking convex combinations of the bound functions, as in equation \eqref{eq:joint epsilon cdf}, and letting $\varepsilon$ vary from 0 to 1.
\end{proof}

\begin{proof}[Proof of corollary \ref{corr:ATT_Y0bounds}]
This result follows by derivations similar to the proof of corollary 1 in \cite{MastenPoirier2017}.
\end{proof}

\subsection{Proofs for Appendix \ref{sec:multivalTreat}}

\begin{proof}[Proof of theorem \ref{thm:AvgValueCharacterization_ctsX}]
Define $\widetilde{X} = \indicator(X > x)$. Now apply theorem \ref{thm:AvgValueCharacterization_discreteX} to $\widetilde{X}$.
\end{proof}

\begin{proof}[Proof of corollary \ref{corr:noRegDependence}]
Follows by defining $\widetilde{X} = \indicator(X > x)$ and applying corollary \ref{corr:monotonicPropensityScores}.
\end{proof}

\begin{proof}[Proof of theorem \ref{thm:AvgValueCharacterization_discreteX}]
By the law of iterated expectations, this is equivalent to showing that $\Prob(X=x \mid U \in [t_1,t_2]) = \Prob(X=x)$ for all $x \in \supp(X)$ and $t_1, t_2 \in\mathcal{T} \cup \{ 0, 1 \}$ with $t_1 < t_2$.

\begin{itemize}
\item[($\Rightarrow$)] Suppose $U$ is $\mathcal{T}$-independent of $X$. Let $t_1, t_2 \in\mathcal{T} \cup \{ 0, 1 \}$ with $t_1 < t_2$. Then, for any $x \in \supp(X)$,
\begin{align*}
	\Prob(X=x \mid U \in [t_1,t_2])
		&= \frac{\Prob(U \in [t_1,t_2] \mid X=x) \Prob(X=x)}{t_2-t_1} \\
		&= \frac{(\Prob(U \leq t_2 \mid X=x) - \Prob(U < t_1 \mid X=x)) \Prob(X=x)}{t_2-t_1} \\
    		&= \frac{(\Prob(U \leq t_2 \mid X=x) - \Prob(U \leq t_1 \mid X=x)) \Prob(X=x)}{t_2-t_1} \\
    		&= \Prob(X=x).
\end{align*}
The first equality follows from $U \sim \text{Unif}[0,1]$. The third equality follows since $U \mid X$ is continuously distributed, which itself follows by $X$ being discretely distributed and lemma \ref{lemma:continuity}. The fourth line follows from $\mathcal{T}$-independence.

\item[($\Leftarrow$)] Suppose that for any $x \in \supp(X)$,
\[
	\Prob(X=x \mid U \in [t_1,t_2]) = \Prob(X=x)
\]
for all $t_1, t_2 \in \mathcal{T} \cup \{ 0, 1 \}$ with $t_1 < t_2$. Then,
\begin{align*}
	\Prob(U \in [t_1,t_2] \mid X=x)
		&= \frac{\Prob(X=x \mid U \in[t_1,t_2]) \Prob(U \in [t_1,t_2])}{\Prob(X=x)} \\
		&= \frac{\Prob(X=x) \Prob(U \in [t_1,t_2])}{\Prob(X=x)} \\
    		&= \Prob(U \in [t_1,t_2]).
\end{align*}
The second line follows by assumption. Setting $t_1 = 0$ and using $U \sim \text{Unif}[0,1]$ gives the result.
\end{itemize}
\end{proof}

\begin{proof}[Proof of corollary \ref{corr:discreteXnotMonotone}]
Follows by defining $\widetilde{X} = \indicator(X\geq x)$ and applying corollary \ref{corr:monotonicPropensityScores}.
\end{proof}

\begin{proof}[Proof of corollary \ref{prop:TauImpliesU_ctsX}]
Follows by defining $\widetilde{X} = \indicator(X\leq x)$ and applying corollary \ref{prop:TauImpliesU}.
\end{proof}

\end{document}